%% file: main_2026_new.tex
\documentclass[english, 12pt]{article}
\usepackage[utf8]{inputenc}
\usepackage{enumerate}
\usepackage[toc,page,header]{appendix}
\usepackage{minitoc}
\usepackage{amsopn,amsfonts,amsmath,amsthm,amssymb,mathrsfs,amstext}
\usepackage[pdftex]{graphicx}
\usepackage{verbatim}
\usepackage[width=\textwidth]{caption}
\usepackage[english]{babel}
\usepackage{subfigure}
\usepackage[]{float}
\usepackage[]{placeins}
\usepackage{flafter}
\usepackage[]{longtable}
\usepackage{dsfont}
\usepackage{array}	
\usepackage{natbib}
\usepackage{xr-hyper}
\usepackage{color}
\usepackage[flushleft]{threeparttable}
\usepackage{pdflscape}
\usepackage{makecell}
\usepackage{figsize}
\usepackage{multicol}
\usepackage[stable]{footmisc}

\usepackage{adjustbox} 
\usepackage{textcomp}
\usepackage{parskip}
\usepackage{tabularx}

\usepackage{graphicx}
\usepackage{booktabs,makecell}
\usepackage{array}
\usepackage{delarray}
\usepackage{dcolumn}
\usepackage{float}
\usepackage{psfrag}
\usepackage{multirow,hhline}
\usepackage{fancyhdr}
\usepackage{setspace}
\usepackage{booktabs,caption,fixltx2e}
\usepackage{xr}
\usepackage{enumitem}
\usepackage{arydshln}
\usepackage{xfrac}
\usepackage{adjustbox}
\usepackage{siunitx}

\usepackage{titling}
\setlist[enumerate]{labelsep=0.7pc, leftmargin=1.5pc}

\newtheorem{definition}{Definition}
\newtheorem{ass}{Assumption}
\usepackage[usenames,dvipsnames,svgnames,table]{xcolor}
\usepackage[colorlinks=true,linkcolor=blue,urlcolor=blue,filecolor=blue,citecolor=gray]{hyperref}
\usepackage{layout}

\usepackage{trimclip,xcolor}

\newcommand{\ind}[1]{\mathds{1}\left\{#1\right\}}

\usepackage[top=1.0in, bottom=1in, left=1in, right=1in]{geometry}

\def\indic{{\rm {\large 1}\hspace{-2.3pt}{\large
l}}}
\usepackage{pgfplots}
\pgfplotsset{compat=1.15}
\usepackage{mathrsfs}
\usetikzlibrary{arrows}
\definecolor{zzttqq}{rgb}{0.6,0.2,0}
\definecolor{ududff}{rgb}{0.30196078431372547,0.30196078431372547,1}

\usepackage{colortbl}
\usepackage{pgfplots}
\usepackage{pgfplotstable}

\newtheorem{thm}{Theorem}[section]

\newtheorem{prop}[thm]{Proposition}

\usepackage{setspace}

\def\PE{France Travail}

\def\Pbs{$\mathcal{U}$-\textsc{rec}}
\def\Mix{\textsc{Mix}}
\def\MixQuarter{\Mix-$\sfrac{1}{4}$}
\def\MixHalf{\Mix-$\sfrac{1}{2}$}
\def\MixThreeQuarter{\Mix-$\sfrac{3}{4}$}

\def\x{\mbox{\bf x}}
\def\y{\mbox{\bf y}}

\def\V{{\sc Vadore}}

\def\Vo{{\sc Vadore}}
\def\Vu{{\sc Vadore.1}}
\def\Vdt{{\sc Vadore.2}}
\def\Hu{{H.1.Hiring}}
\def\Hd{{H.2.Application}}

\pgfplotstableset{
    /color cells/min/.initial=0,
    /color cells/max/.initial=1000,
    /color cells/textcolor/.initial=,
    color cells/.code={%
        \pgfqkeys{/color cells}{#1}%
        \pgfkeysalso{%
            postproc cell content/.code={%
                \begingroup
                \pgfkeysgetvalue{/pgfplots/table/@preprocessed cell content}\value
                \ifx\value\empty
                \endgroup
                \else
                \pgfmathfloatparsenumber{\value}%
                \pgfmathfloattofixed{\pgfmathresult}%
                \let\value=\pgfmathresult
                \pgfplotscolormapaccess
                [\pgfkeysvalueof{/color cells/min}:\pgfkeysvalueof{/color cells/max}]%
                {\value}%
                {\pgfkeysvalueof{/pgfplots/colormap name}}%
                \pgfkeysgetvalue{/pgfplots/table/@cell content}\typesetvalue
                \pgfkeysgetvalue{/color cells/textcolor}\textcolorvalue
                \toks0=\expandafter{\typesetvalue}%
                \xdef\temp{%
                    \noexpand\pgfkeysalso{%
                        @cell content={%
                            \noexpand\cellcolor[rgb]{\pgfmathresult}%
                            \noexpand\definecolor{mapped color}{rgb}{\pgfmathresult}%
                            \ifx\textcolorvalue\empty
                            \else
                            \noexpand\color{\textcolorvalue}%
                            \fi
                            \the\toks0 %
                        }%
                    }%
                }%
                \endgroup
                \temp
                \fi
            }%
        }%
    }
}

\DeclareMathOperator*{\argmax}{\arg\!\max}

\title{\vspace{-0.6cm} A Job I Like or a Job I Can Get: Designing Job Recommender Systems Using Field Experiments\thanks{\footnotesize{Bied, Ghent University, IDLab: guillaume.bied@ugent.be; Caillou, UPSaclay/LISN/ INRIA: caillou@lri.fr; Cr\'epon: CREST, crepon@ensae.fr; Gaillac: University of Geneva, GSEM: christophe.gaillac@unige.ch; P\'erennes, CREST / France Travail: elia.perennes@ensae.fr; Sebag, UPSaclay/LISN/INRIA/CNRS: sebag@lri.fr. This paper is the result of a partnership with France Travail, the Public Employment Service in France. We thank Camille Qu\'er\'e, Chantal Vessereau, Cyril Nouveau, Paul Beurnier, Yann De Coster, Sebastien Robidou and Thierry Foltier for their operational support. 
The first experiment received IRB approval 2021-026 from the PSE-IRB and the second one IRB approval 2021-026-amendment. They were registered at the AEA's Registry for RCTs (respectively \url{https://doi.org/10.1257/rct.8998-1.3} and \url{https://doi.org/10.1257/rct.16650-1.0}).
This research was supported by the DATAIA convergence institute as part of the « Programme d’Investissement d’Avenir », (ANR-17-CONV-0003) operated by CREST and LISN. C. Gaillac acknowledges support from ERC POEMH 337665 and ANR-17-EURE-0010. Authors retained full intellectual freedom throughout this process, all errors are our own.  We thank Michele Belot, Morgane Hoffmann, Philipp Kircher, Rafael Lalive, Barbara Petrongolo, as well as seminar participants at Oxford, IZA/CREST Conference, ENSAI Economics Days, the 7th Lindau meeting, University of Bologna, 2023 EALE Conference, LISN, and the HiJoS Workshop in Innsbruck for useful comments and suggestions.}
 }}

\date{\normalsize \today}
\author{Guillaume Bied, Philippe Caillou, Bruno Cr\'epon, \\ Christophe Gaillac, Elia P\'erennes, Mich\`ele Sebag}

\begin{document}

\maketitle
 
\vspace{-0.6cm}
\begin{abstract} 
\footnotesize

Recommendation systems (RSs) are increasingly used to guide job seekers on online platforms, yet the algorithms currently deployed are typically optimized for predictive objectives such as clicks, applications, or hires, rather than job seekers' welfare. We develop a job-search model with an application stage in which the value of a vacancy depends on two dimensions: the utility it delivers to the worker and the probability that an application succeeds. The model implies that welfare-optimal RSs rank vacancies by an expected-surplus index combining both, and shows why rankings based solely on utility, hiring probabilities, or observed application behavior are generically suboptimal, an instance of the inversion problem between behavior and welfare. We test these predictions and quantify their practical importance through two randomized field experiments conducted with the French public employment service. The first experiment, comparing existing algorithms and their combinations, provides behavioral evidence that both dimensions shape application decisions. Guided by the model and these results, the second experiment extends the comparison to an RS designed to approximate the welfare-optimal ranking. The experiments generate exogenous variation in the vacancies shown to job seekers, allowing us to estimate the model, validate its behavioral predictions, and construct a welfare metric. Algorithms informed by the model-implied optimal ranking substantially outperform existing approaches and perform close to the welfare-optimal benchmark. Our results show that embedding predictive tools within a simple job-search framework and combining it with experimental evidence yields recommendation rules with substantial welfare gains in practice.\\\

\noindent \textbf{JEL Classification:} J64, J68, L86, C78, C55, C61

\noindent \textbf{Keywords:} Job Recommender Systems, Matching, Experiments, Machine Learning.

\end{abstract}

\newgeometry{top=1in, bottom=1in, left=1in, right=1in}

\onehalfspacing
\doparttoc 
\faketableofcontents 

\section{Introduction}

Recommendation systems (RSs) are transforming how job seekers interact with online labor-market platforms. Their adoption is accelerating, and many public employment services (PES) are considering integrating such tools \citep[see][]{organisation2023artificial}. Yet designs vary widely, and evidence on their relative effectiveness or their ability to generate meaningful labor-market improvements at scale remains limited, even if they may benefit specific subgroups. This paper develops a framework providing a welfare criterion for comparing designs, and uses two field experiments to show that algorithms informed by this criterion substantially outperform existing approaches.

\medskip

A large data-science literature develops job RSs by training algorithms on historical data to predict user behavior such as clicks, applications, or hires, and evaluating performance using predictive metrics \citep{freire2021recruitment, de2021job, mashayekhi2022challenge}.  While highly effective at forecasting observed behavior, their normative interpretation is less clear: predicting observed interactions does not imply that recommended opportunities maximize job seekers' expected welfare.

\medskip 

At the same time, a growing empirical literature in economics evaluates job recommendations in the field, often through randomized experiments conducted in partnership with PESs in several countries. Many of these interventions provide occupational or employer-level recommendations designed to broaden search or redirect job seekers toward labor markets with better prospects \citep[see, e.g.,][]{belot2019providing, altmann2022direct, behaghel2024potential, belot2025advising, bachli2025helping}. These studies provide valuable causal evidence, but they typically evaluate a small number of pre-specified recommendation rules against a limited set of outcomes, and offer little guidance on how to compare such rules against a common welfare criterion.

\medskip 

These two literatures have largely developed in parallel, though recent contributions connect them. \cite{roland2022} evaluate a collaborative filtering RS based on clicking behavior; \cite{SuBayoumiJoachims2022} consider social welfare maximization. Other work questions the practice of treating observed behavior as a proxy for welfare \citep[see, e.g.,][]{agan2023automating, kleinberg2022challenge, kleinberg2024inversion, mullainathan2025economics}. \citet{kleinberg2024inversion} formalize the inversion problem: when behavior is generated under frictions, predicting behavior is not equivalent to recovering the latent objective guiding optimal decisions.

\medskip 

This paper develops a job-search framework to discipline the design of RSs. The framework incorporates the application stage and highlights two objects: the utility a job seeker associates with a vacancy and the probability that an application results in a hire. The welfare-optimal rule ranks vacancies by an expected-surplus index combining both. A central implication is that neither utility or hiring outcomes alone provide a sufficient basis for optimal recommendations: the optimal algorithm requires a specific combination of utility and hiring probability that reflects the expected gains from applying. This is an instance of the inversion problem: recommendation rules that optimize predictive objectives such as predicted utility, observed application behavior, or observed hiring outcomes alone will generally miss it, even in a simple and frictionless job-search environment. An important empirical question is whether this inversion problem is economically significant in practice, and whether welfare-oriented rankings deliver substantial improvements over robust predictive rules in realistic environments.

\medskip 

We bring this framework to the data through close collaboration with the French PES. Starting from two existing RSs, one reflecting stated preferences, another a state-of-the-art ML system \citep[][]{vadoreijcai} predicting hiring outcomes, we conduct a sequence of two randomized field experiments. These experiments are not designed to evaluate the large-scale causal impact of deploying recommendations per se. Rather, they are conceived as beta-tests in a learning cycle \citep[in the spirit of][]{athey2018impact}, aimed at comparing alternative algorithmic designs and iteratively improving them.

\medskip 
A first experiment compares recommendations based on existing algorithms separately and in combination. We find that job seekers respond more favorably to recommendations combining information about preferences and hiring prospects, suggesting that neither dimension alone is sufficient to shape search behavior.

Guided by these findings and the model's structure, we design a new class of algorithms that better approximate the model-implied optimal ranking by incorporating explicit proxies for application behavior and preference-related signals into hiring predictions. This yields a family of recommendation rules including the original algorithms, an application-based algorithm, and an approximation of the welfare-optimal algorithm that combines information on preferences, applications, and hiring.

We evaluate these alternative designs in a second randomized experiment. The newly introduced algorithms, particularly the application-based algorithm and the approximation of the welfare-optimal rule, substantially outperform the initial approaches, especially in terms of clicks and applications. While they remain distinct from the theoretically optimal rule, their strong empirical performance highlights the practical gains from incorporating application behavior and preference-related signals into recommendation design.

A central goal of the paper is to use the conceptual structure of the job-search model to formally compare alternative RSs. Because optimality is defined within a behavioral model and a welfare criterion, this comparison requires that the model’s implications for application behavior be empirically plausible. We therefore exploit the exogenous variation generated by the random assignment of recommendation algorithms to estimate a structural model of application behavior. The results validate the model's core behavioral predictions: both the utility score and the inverse hiring probability are highly significant predictors of application decisions, with coefficients stable across specifications. This supports using the model as a conceptual basis for welfare comparisons.

The experimental data allow us to identify the predictive components needed to reconstruct the model-implied optimal recommendation rule. Using observed outcomes under different algorithms, we estimate predictions for both applications and hiring, and combine them to construct an estimate of the optimal recommendation score. This score is then used to identify the vacancies that would have been recommended under the optimal algorithm and to rank the tested algorithms against a common welfare benchmark. The application-based algorithm and our approximation of the welfare-optimal rule strongly outperform the initial algorithms. The latter further improves upon the application-based algorithm, albeit by a smaller margin, and performs remarkably close to the model-implied optimum.

Taken together, our results point to an important implication: RSs that target predictive objectives such as clicks, applications, or hiring outcomes are generally not aligned with job seekers' welfare. At the same time, the predictive tasks underlying these systems provide essential building blocks for welfare-relevant recommendation design. By combining structural modeling with experimental evidence that disciplines its behavioral assumptions, our approach illustrates how predictive tools can be used to construct, evaluate, and iteratively improve recommendation rules.

\medskip 

Our paper speaks to a growing empirical literature in economics that studies job RSs and related forms of automated advice, often through randomized field experiments conducted in partnership with public employment services \citep[see, e.g.,][]{belot2019providing, altmann2022direct, behaghel2024potential, belot2025advising, bachli2025helping, roland2022}. A central feature of this literature is the diversity of objectives implicitly targeted by recommendation rules. Some interventions rely on observed transitions or predicted hiring probabilities to steer job seekers toward occupations or employers with higher employment prospects \citep[e.g.,][]{belot2019providing, altmann2022direct, behaghel2024potential, belot2025advising}; others emphasize measures of fit or expressed interest inferred from search behavior or skills profiles \citep[e.g.,][]{bachli2025helping, roland2022}. While these approaches provide valuable causal evidence on the effects of specific recommendation designs, they do not offer a general framework to compare alternative algorithmic objectives or relate them to a common welfare criterion.

We contribute to this literature by providing such a framework and comparing different RSs using experiments. Within a simple job-search model that explicitly incorporates the application stage, we show that two dimensions are central to recommendation design: the utility that a job seeker associates with a vacancy and the probability that an application results in a hire. Existing approaches can be interpreted as emphasizing one of these dimensions in isolation, but neither is sufficient on its own. An economically meaningful ranking must combine both into a single expected-surplus index, which allows us to place diverse recommendation designs on a common footing and to evaluate their relative performance.

Our analysis also relates to a growing literature at the intersection of machine learning and economics that questions the normative interpretation of observed behavior on digital platforms \citep[see, e.g.,][]{agan2023automating, kleinberg2022challenge, kleinberg2024inversion, mullainathan2025economics}. In our setting, the inversion problem arises because application decisions only reveal whether applying is privately profitable, hiring outcomes capture only part of the expected gains, whereas welfare-relevant ranking of vacancies depends on the magnitude of expected gains.  

Our contribution is to make this insight operational in a labor-market environment. By combining experimental variation with a structural model of application behavior, we characterize job seekers' welfare-relevant objective, distinguish it from commonly used behavior-based rankings, and quantify the importance of the inversion problem. While the model implies that welfare-relevant rankings differ from application-based recommendations, our results show that the gap is positive but quantitatively modest, an assessment that would be difficult to obtain without jointly leveraging experimental evidence and structural modeling.

Our methodology relates closely to work emphasizing experimentation and economic modeling in the design of algorithmic decision rules \citep[e.g.,][]{athey2018impact}. Rather than evaluating the large-scale causal impact of a fixed RS, we use sequential beta-tests to compare designs, feed the results back into the model, and construct improved algorithms. The structural model disciplines which signals to incorporate and how to combine them, while the experiments in turn discipline the model’s behavioral assumptions and provide the variation needed to estimate the welfare metric. Empirically, RSs informed by both utility and hiring probabilities substantially outperform approaches based on either alone, yielding sizeable gains relative to algorithms currently used in practice, while the inversion problem has modest quantitative implications in this setting.


The paper proceeds as follows. Section \ref{sec:search} presents the job-search model and derives the optimal recommendation rule. Section \ref{sec:2RS} describes two representative RSs and their underlying scores. Section \ref{sec:betatestso} presents the design and results of the randomized experiments. Section \ref{sec:sec6} estimates the structural model and compares alternative algorithms using the proposed metric. Section \ref{sec:conclusion} concludes with implications for the design of job RSs in practice.

\section{Job search model with recommender systems}\label{sec:search}

\paragraph{Overview.}
We develop a job-search model in which job seekers encounter vacancies sequentially and decide whether to apply. Vacancies are lotteries characterized by (i) the utility they would deliver to the worker and (ii) the probability that an application succeeds. We then introduce a RS as a technology that (a) restricts the set of vacancies a job seeker is exposed to based on a score and (b) potentially increases the rate at which vacancies can be processed. The model delivers (i) a model-implied value index for vacancies, (ii) conditions under which an RS improves welfare, and (iii) an optimal ranking rule in the benchmark case of myopic job seekers. Finally, we discuss how these results inform RS design and clarify why ranking by utility alone, by hiring probability alone, or by predicted application behavior is generally suboptimal.

\subsection{Environment and primitives}\label{subsec:primitives}
Our job search model builds on the following environment and primitives.
\paragraph{\textbf{Job seekers and vacancies.}} Job seekers are indexed by characteristics $x$ and vacancies by characteristics $y$. An unemployed job seeker receives flow utility $u(b)$.
\paragraph{\textbf{Payoffs.}} A vacancy $y$ yields utility $U(x,y)+\varepsilon_{i,y}$ to job seeker $i$ of type $x$, where $\varepsilon$ is an idiosyncratic taste shock observed by the job seeker. We assume $\varepsilon$ follows a logistic distribution with scale parameter $\sigma$ and cumulative distribution function $F_{\varepsilon}(\cdot)=F(\cdot/\sigma)$.

\paragraph{\textbf{Hiring probability.}} Conditional on applying, the job seeker is hired with probability $p(x,y)$, which is known to the job seeker at the application decision stage.
\paragraph{\textbf{Vacancy distribution and reparametrization.}} Vacancies are drawn from $F_0(y)$. For a given type $x$, the induced distribution of $(p(x,y),U(x,y))$ is denoted $F_0(p,U)$.\footnote{with $x$ suppressed in the notation.}
\paragraph{\textbf{Other primitives.}} Applications entail a cost $k$ and rejection entails a psychological cost $R$. Matches separate at rate $q$ and future utility is discounted at rate $r$.

\subsection{Baseline sequential search without an RS}\label{subsec:baseline}

This section characterizes job seekers’ behavior and the value of unemployment in the baseline sequential search environment without an RS. The key friction is that job seekers explore vacancies sequentially, drawing from the distribution $F_0(p,U)$ at rate $\alpha_0$, and cannot simultaneously compare all available opportunities. Upon encountering a vacancy, the job seeker observes its characteristics $(p,U,\varepsilon)$ and decides whether to apply. All formal derivations and proofs are relegated to Appendix~\ref{app:model}.

\subsubsection{Reservation utility and surplus}\label{subsubsec:baseline_surplus}

Let $V_0(x)$ denote the discounted value of unemployment for a job seeker of type $x$ in the absence of a RS. Each time the job seeker encounters a vacancy with characteristics $(p,U,\varepsilon)$, she faces a binary decision: apply to it or continue searching.

If the job seeker does not apply, she remains unemployed and retains continuation
value $V_0(x)$. If she applies, she pays the application cost $k$; with probability
$p$ the application succeeds and she transitions into employment with discounted present value of utility $V_e(x,U+\varepsilon)$, while with probability
$1-p$ she is rejected and continues searching, incurring the rejection cost $R$. Conditional on applying, the discounted expected payoff is therefore
\[
p\,V_e(x,U+\varepsilon)
+
(1-p)\big(V_0(x)-R\big)
-
k.
\]
To characterize this apply-or-not decision, it is useful to introduce a reservation
utility:
\begin{equation}\label{eq:res_u}
U_0^*(x,p)
=
rV_0(x)
-
\overline{R}
+
\frac{\overline{k}+\overline{R}}{p},
\end{equation}
where $\overline{R}=(r+q)R$ and $\overline{k}=(r+q)k$. The quantity $U_0^*(x,p)$ represents the minimum utility level that makes a vacancy
with hiring probability $p$ worth applying to.

We then define the surplus associated with a vacancy as
\begin{equation}\label{eq:def_surplus}
\Delta(x,p,U)
:=
U
-
U_0^*(x,p).
\end{equation}
This surplus compares the utility provided by the vacancy to the reservation utility
associated with its hiring probability.
Vacancies with lower hiring probabilities must offer higher utility in order to be
attractive, reflecting the costs associated with unsuccessful applications. To streamline notation, we henceforth suppress the dependence on $x$ whenever this does
not create ambiguity.

\subsubsection{Application behavior and vacancy value}\label{subsubsec:baseline_value}

The job seeker applies to a vacancy whenever the realized surplus is positive:
\begin{equation}\label{eq:appli}
\Delta(p,U)+\varepsilon>0.
\end{equation}
This rule admits a natural interpretation: a job seeker applies if the realized utility
$U+\varepsilon$ exceeds the reservation utility $U_0^*(p)$.

Under the assumed logistic distribution for $\varepsilon$, this decision rule implies
that the probability of applying to a vacancy $(p,U)$ is $ p_a(p,U) =
F\!\left(\Delta(p,U)/\sigma\right)$.

The discounted value of unemployment satisfies
\begin{equation}\label{eq:Bellman1_pi}
rV_{0}
=
u(b)
+
\frac{\alpha_0}{r+q}\,
\mathbb{E}\!\left(\Gamma(p,U)\right),
\end{equation}
where the expectation is taken over the distribution of vacancies and
\begin{equation}\label{eq:gamma_mplus}
\Gamma(p,U)
:=
p
\int
\left(\Delta(p,U)+\varepsilon\right)
\ind{\Delta(p,U)+\varepsilon>0}
\,dF_{\varepsilon}(\varepsilon).
\end{equation}
The index $\Gamma(p,U)$ represents the expected contribution of encountering a vacancy
with characteristics $(p,U)$ to the job seeker’s continuation value.
It aggregates the probability of applying, the probability of being hired conditional
on application, and the surplus generated by a successful match conditional on application.

Under the logistic assumption on $\varepsilon$, $\Gamma(p,U)$ admits the closed-form expression
\begin{equation}\label{eq:gamma_plus}
\Gamma(p,U)
=
p\,\sigma
\log\!\left(1+e^{\Delta(p,U)/\sigma}\right).
\end{equation}
\begin{prop}[Application rule and vacancy value]\label{prop_V_u0}
In the absence of a RS, application behavior is governed by the surplus $\Delta(p,U)$ through the rule~\eqref{eq:appli}, while the welfare-relevant value of a vacancy is summarized by the index $\Gamma(p,U)$. 
\end{prop}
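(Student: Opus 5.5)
We will derive the proposition from the Bellman equations for the unemployed and employed states; the conclusion amounts to the reservation rule \eqref{eq:appli} plus the flow-value representation \eqref{eq:Bellman1_pi}--\eqref{eq:gamma_plus}.

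\textbf{Value of employment.} We first pin down $V_e$. An employed worker with match utility $w=U+\varepsilon$ receives flow $w$ and separates at rate $q$ back into unemployment, so $rV_e(w)=w+q\,(V_0-V_e(w))$. Hence
\[
V_e(w)=\frac{w+qV_0}{r+q}, \qquad V_e(w)-V_0=\frac{w-rV_0}{r+q}.
\]

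\textbf{Application rule.} Next we compare applying with not applying. The gain from applying relative to the continuation value $V_0$ is $p\,(V_e(U+\varepsilon)-V_0)-(1-p)R-k$. Substituting the expression for $V_e-V_0$ and multiplying by $(r+q)/p>0$ gives
\[
U+\varepsilon-rV_0-\frac{(1-p)\overline R+\overline k}{p}=U+\varepsilon-U_0^*(p)=\Delta(p,U)+\varepsilon,
\]
using $\overline R=(r+q)R$ and $\overline k=(r+q)k$. So the job seeker applies if and only if $\Delta+\varepsilon>0$, which is \eqref{eq:appli}. With logistic $\varepsilon$, the application probability is $p_a=F(\Delta/\sigma)$.

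\textbf{Bellman equation for $V_0$.} Vacancies arrive at rate $\alpha_0$, so
\[
rV_0=u(b)+\alpha_0\,\mathbb E\bigl[\max(0,\text{gain})\bigr].
\]
By the previous step, the gain equals $\frac{p}{r+q}(\Delta+\varepsilon)$. Therefore
\[
\max(0,\text{gain})=\frac{p}{r+q}(\Delta+\varepsilon)\ind{\Delta+\varepsilon>0},
\]
and taking expectations over $\varepsilon$ and over $F_0(p,U)$ yields \eqref{eq:Bellman1_pi} with $\Gamma$ as in \eqref{eq:gamma_mplus}. This shows that $\Gamma$ is exactly each vacancy's contribution to the value, which is the welfare claim.

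\textbf{Closed form.} Finally we compute $\mathbb E[(a+\varepsilon)^+]$ with $a=\Delta$ and $\varepsilon=\sigma\eta$, where $\eta$ is standard logistic. Writing $\mathbb E[(a+\varepsilon)^+]=\int_{-a}^{\infty}(1-F_\varepsilon(t))\,dt$ and using $1-F(s)=1/(1+e^{s})$, the substitution $s=t/\sigma$ gives
\[
\sigma\int_{-a/\sigma}^\infty \frac{ds}{1+e^{s}}=\sigma\log\bigl(1+e^{a/\sigma}\bigr).
\]
This is \eqref{eq:gamma_plus}.

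\textbf{Main obstacle.} The difficulty is the circularity: $\Delta$ depends on $V_0$ through $U_0^*$, while $V_0$ is itself defined through $\Gamma$. We would close this by showing that the map $v\mapsto u(b)+\frac{\alpha_0}{r+q}\mathbb E\,\Gamma(p,U;v)$ has a unique fixed point. The derivative of this map in $v$ is $-\frac{\alpha_0}{r+q}\mathbb E[p\,F(\Delta/\sigma)]$, which is nonpositive. The map's right-hand side is therefore decreasing in $v$ while the left-hand side $v$ is increasing, so the two cross exactly once. We also need $\mathbb E[p\,U]<\infty$ to ensure the expectation is finite.
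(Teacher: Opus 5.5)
Your proposal is correct and follows essentially the same route as the paper. You derive $V_e(w)-V_0=(w-rV_0)/(r+q)$ from the employment Bellman equation, rewrite the net gain from applying as $\frac{p}{r+q}(\Delta+\varepsilon)$ using $U_0^*(p)$, and substitute the resulting threshold rule into the unemployment Bellman equation to obtain \eqref{eq:Bellman1_pi} with $\Gamma$ as in \eqref{eq:gamma_mplus}. Your explicit derivation of the logistic closed form \eqref{eq:gamma_plus} and your fixed-point discussion for $V_0$ go beyond what the paper's proof spells out; for existence there you would also want continuity of the map and the fact that $v$ minus the map tends to $\pm\infty$ as $v\to\pm\infty$.
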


\begin{proof}
See Appendix~\ref{app:proof}.
\end{proof}

Proposition \ref{prop_V_u0} makes explicit the distinction between application behavior and vacancy value. This non-equivalence arises because application decisions are governed by a binary profitability condition, whereas vacancy values depend on the expected magnitude of the gains. A vacancy with a high hiring probability $p$ contributes more to the job seeker's continuation value than one with low $p$, even if both exceed the application threshold. Rankings based on observed applications therefore do not generally coincide with rankings that maximize job seekers' welfare. This distinction is central to the design of RSs, and Section~\ref{subsec:implications} characterizes it precisely.

\subsection{Recommender systems: selection and exposure}\label{subsec:rs}

A recommender system (RS) scores the pool of available vacancies and pre-selects a subset to be shown to the job seeker, potentially also increasing the rate at which vacancies can be processed. 
While decisions remain sequential on the worker side, the RS transforms sequential search over vacancies into sequential applications over a curated set.

In our framework, RSs affect job search through two channels:
\begin{enumerate}[label=(\roman*)]
    \item \textbf{Selection.} The RS restricts the job seeker’s consideration set to vacancies
    whose score exceeds a threshold, \emph{i.e.}, the top $s$ fraction according to a score.
    \item \textbf{Exposure.} The RS may change the effective rate at which vacancies can be
    processed, from $\alpha_0$ to $\alpha_1$.
\end{enumerate}

\subsubsection{Scores and the induced distribution of considered vacancies}\label{subsubsec:truncation}

For a job seeker with characteristics $x$ and a vacancy described by characteristics $y$, the RS computes a score $S(x,y)$. To keep the notation light, we suppress the dependence on $x$ and $y$  whenever it is unambiguous.

For a given $x$, the joint distribution of vacancy characteristics and scores is the distribution of $\big(p(x,y),\,U(x,y),\,S(x,y)\big)$, 
when $y \sim F_0$, which we denote by $F_0(p,U,S)$.\footnote{As in Section~\ref{subsec:baseline}, this notation suppresses the dependence on $x$. Formally, $F_0(p,U,S)$ is the distribution of $(p(x,y),U(x,y),S(x,y))$ induced by $F_0(y)$ conditional on $x$.}

We model RS selection as recommending the vacancies whose score lies above the
$(1-s)$-quantile of the score distribution.
Let $\overline{q}_S(s)$ denote the quantile of order $1-s$ of $S$ under $F_0(p,U,S)$.
Then the RS induces the truncated distribution
\begin{equation}\label{eq:trunc}
dF_1(p,U,S)
=
\frac{\ind{S>\overline{q}_S(s)}}{s}\,dF_0(p,U,S),
\end{equation}
which is the distribution of $(p,U,S)$ among recommended vacancies.

\subsubsection{Exposure: arrival/processing rate}\label{subsubsec:rs_channels}

In addition to changing the composition of vacancies considered, RSs may also change the intensity of exposure to vacancies. We capture this by allowing the effective vacancy-processing rate under recommendations to be $\alpha_1$ rather than $\alpha_0$. This reduced-form formulation accommodates multiple mechanisms: recommendations may lower cognitive and search costs, facilitate navigation, or simply deliver a fixed number of suggestions over a given period.

\subsubsection{Myopic benchmark: value under recommendations}\label{subsubsec:myopic_value}

We first consider the case of \emph{myopic} job seekers, who do not adjust their reservation utility in response to the introduction of the RS. They continue to evaluate vacancies using the baseline reservation utility $U_0^*(p)$ derived in Section~\ref{subsec:baseline}.
Accordingly, $\Gamma^m(p,U)$ is defined by the same expression as $\Gamma(p,U)$, but evaluated at the baseline continuation value $V_0$. In particular, $\Gamma^m(p,U)=\Gamma(p,U)$ in the absence of an RS.

Under an RS characterized by $(S,s,\alpha_1)$, the discounted value for a myopic unemployed job seeker is
\begin{equation}\label{eq:myopic}
rV^m_{1}(s,\alpha_1)
=
u(b)
+
\frac{\alpha_1}{r+q}\,
\frac{\mathbb{E}\!\left(\Gamma^m(p,U)\ind{S>\overline{q}_S(s)}\right)}{s},
\end{equation}
where the expectation is taken with respect to $F_0(p,U,S)$.

\subsection{Optimal RS and improvement conditions}
\label{subsec:optimal}

\subsubsection{Objective and definition of an optimal RS}
\label{subsubsec:def_opt}

We now define what it means for a RS to be optimal.
Since the RS is designed and implemented by the PES,
we must specify the objective it pursues.
Throughout the paper, we assume that the PES aims to maximize the welfare of job seekers.
We abstract from firms’ outcomes and from broader social objectives.

RSs rank vacancies using a score $S$ and recommend the top fraction $s$ of vacancies
according to that score. An optimal RS is therefore one that, for any recommendation intensity $s$,
maximizes the expected value of the recommended vacancies.

\begin{definition}\label{def:rec}
For myopic job seekers, an \emph{optimal RS} is a measurable scoring rule $S^*$ of $(p,U)$ that solves,
for each $s\in[0,1]$,
\[
S^*
\in
\argmax_{S }
\;
\mathbb{E}\!\left[
\Gamma^{m}(p,U)\;
\indic\big\{ S(p,U)>\overline{q}_{S}(s) \big\}
\right].
\]
\end{definition}
The requirement that optimality holds for all $s$ emphasizes that the RS defines a global ranking, rather than being tailored to a specific cutoff or number of recommendations.

\subsubsection{Characterization and sufficient conditions}
\label{subsubsec:prop_myopic}

Characterizing the optimal RS in the myopic case is straightforward and yields
sharp implications for practice.

\begin{prop}\label{prop:myopic}
Consider a RS based on a score $S(p,U)$ that selects the top fraction $s$
of vacancies.
\begin{enumerate}
    \item\label{2.2.1} The optimal ranking is obtained by $S(p,U)=\Gamma^m(p,U)$.
    \item\label{2.2.2} If $\alpha_1\geq\alpha_0$, a sufficient condition for an $S$-based RS to improve
    job seekers’ welfare relative to baseline search is that
    $z\mapsto \mathbb{E}\!\left(\Gamma^m(p,U)\mid S=z\right)$ is increasing.
    In particular, the RS based on $S(p,U)=\Gamma^m(p,U)$ strictly improves welfare whenever $\Gamma^m$ is non-degenerate.
\end{enumerate}
\end{prop}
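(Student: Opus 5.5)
Part~\ref{2.2.1} reduces to a pointwise rearrangement (Neyman--Pearson/bathtub) argument. Fix $s\in[0,1]$ and any measurable score $S$. The selected set $A_S=\{S>\overline{q}_S(s)\}$ has $F_0$-mass (at most) $s$. Let $A^*=\{\Gamma^m>\overline{q}_{\Gamma^m}(s)\}$ with threshold $t^*=\overline{q}_{\Gamma^m}(s)$. The key inequality is
\[
\mathbb{E}\big[\Gamma^m(\indic_{A^*}-\indic_{A_S})\big]
=\mathbb{E}\big[(\Gamma^m-t^*)(\indic_{A^*}-\indic_{A_S})\big]+t^*\big(P(A^*)-P(A_S)\big)\ge 0 .
\]
The first term is nonnegative pointwise. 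On $A^*\setminus A_S$ we have $\Gamma^m>t^*$, and on $A_S\setminus A^*$ we have $\Gamma^m\le t^*$. The second term vanishes when both sets have mass exactly $s$. It is nonnegative when atoms make $P(A_S)<s$, since $t^*\ge 0$ because $\Gamma^m\ge0$ by \eqref{eq:gamma_plus}; here I would also check that $P(A^*)\ge P(A_S)$ or allow randomized tie-breaking. Since this holds for every $s$, $S^*=\Gamma^m$ attains the argmax in Definition~\ref{def:rec} uniformly in $s$. The same holds for any strictly increasing transform of $\Gamma^m$.

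For part~\ref{2.2.2}, I would compare \eqref{eq:myopic} with \eqref{eq:Bellman1_pi}. Since $\Gamma^m=\Gamma$ is evaluated at $V_0$, $rV_0=u(b)+\frac{\alpha_0}{r+q}\mathbb{E}[\Gamma^m]$. Given $\alpha_1\ge\alpha_0$ and $\Gamma^m\ge0$, it suffices to show
\[
\frac{\mathbb{E}[\Gamma^m\indic\{S>\overline{q}_S(s)\}]}{s}\ \ge\ \mathbb{E}[\Gamma^m].
\]
Write $g(z)=\mathbb{E}(\Gamma^m\mid S=z)$. The left side equals $\mathbb{E}[g(S)\mid S>\overline{q}_S(s)]$, the mean of $g(S)$ over the upper tail. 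This is the step requiring care. Because $g$ is increasing, $g(S)$ and $\indic\{S>\overline{q}_S(s)\}$ are both nondecreasing functions of $S$. The one-dimensional Chebyshev/Harris association inequality then gives $\mathrm{Cov}(g(S),\indic\{S>\overline{q}\})\ge0$, which is exactly $\mathbb{E}[g(S)\mid S>\overline{q}]\ge\mathbb{E}[g(S)]=\mathbb{E}[\Gamma^m]$. Hence $V_1^m\ge V_0$.

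For the final claim, take $S=\Gamma^m$, so $g(z)=z$ is strictly increasing and the condition holds. Strictness follows because for a non-degenerate $\Gamma^m$ and $s<1$, the upper-tail mean of a non-constant variable strictly exceeds its overall mean. The same tie and atom issues arise as in part~\ref{2.2.1}, and I would handle them by interpreting $s$ as the actual selected mass, or by noting that strictness requires $s<1$ such that the tail is a proper subset.

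The main obstacle is purely technical: making the quantile selection well defined when $S$ has atoms. Before invoking the rearrangement and covariance inequalities, I would state the convention that $P(S>\overline{q}_S(s))=s$, via continuous scores or randomized tie-breaking.
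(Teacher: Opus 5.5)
Your proposal is correct and follows the paper's proof. In part 1 the paper cites the Hardy--Littlewood--P\'olya rearrangement inequality with a swap argument, and your threshold identity is a direct one-line proof of that fact; in part 2 the paper's law-of-iterated-expectations decomposition $\mathbb{E}(\Gamma^m)-\mathbb{E}(\Gamma^m\mid S>\overline{q}_S(s))=(1-s)\big(\mathbb{E}(g(S)\mid S<\overline{q}_S(s))-\mathbb{E}(g(S)\mid S>\overline{q}_S(s))\big)$ is exactly the computation behind your Chebyshev covariance inequality, and your explicit treatment of atoms (the paper only says ``arbitrary tie-breaking'') and of $\Gamma^m\ge 0$ when passing from $\alpha_0$ to $\alpha_1$ (used implicitly in the paper) makes your version slightly more careful.
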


\begin{proof}
See Appendix~\ref{app:proof}.
\end{proof}

\paragraph{Interpretation.}
Proposition~\ref{prop:myopic} establishes that the optimal RS solves a \emph{global ranking problem} over the entire distribution of vacancies, whereas application behavior reflects a binary profitability condition. Together with Proposition~\ref{prop_V_u0}, this creates an \emph{inversion problem}: algorithms that target observed applications or hiring outcomes are not generally aligned with job seekers' welfare-relevant ranking. Section~\ref{subsec:implications} characterizes the nature and magnitude of this gap for several natural recommendation rules.

\subsection{Implications for the design of recommender systems}
\label{subsec:implications}

This section studies the gap between the optimal RS characterized in Proposition~\ref{prop:myopic} and alternative recommendation rules that may appear reasonable \emph{a priori}. We consider rankings based on perceived utility $U$, on the probability of hire $p$, on observed application behavior $p_a$, and on observed hires. The latter are of special interest, since applications are the observable outcome of job seekers’ decisions and hiring is a key outcome of the search process, making it natural to ask whether recommendation rules that replicate observed applications or hires can be normatively justified.

Building on the model and welfare criterion introduced above, we compare these alternative rules to the welfare-relevant objective. The key result of this section is a unifying decomposition of the welfare-relevant score, from which the non-optimality of several natural recommendation rules follows directly.

\subsubsection{Decomposing the welfare-relevant score}
\label{subsubsec:decomp}

In the myopic case, the welfare-relevant index $\Gamma(p,U)$ admits the 
decomposition:
\begin{equation}\label{eq:decomposition_short}
\Gamma(p,U)
=
p \times p_a(p,U) \times
\mathbb{E}\!\left[\Delta(p,U)+\varepsilon \mid \Delta(p,U)+\varepsilon>0, p, U\right].
\end{equation}
This expression makes clear that the optimal ranking is the product of three distinct components:
(i) the probability of being hired conditional on applying ($p$), (ii) the probability of applying ($p_a(p,U)$), and (iii) the expected surplus generated by a successful application, conditional on applying.

Since $p_a(p,U)=\mathbb{P}(\Delta(p,U)+\varepsilon>0 \mid p, U)$ is a monotone function of $\Delta$ under mild conditions, the conditional expectation in
\eqref{eq:decomposition_short} can be expressed as a function of the application
probability.
Let $m(p_a(p,U)) :=
\mathbb{E}\!\left[\Delta(p,U)+\varepsilon \mid \Delta(p,U)+\varepsilon>0, p, U\right]$, 
which allows us to rewrite the optimal score as $ \Gamma(p,U) = p \times p_a(p,U) \times m(p_a(p,U))$.

This decomposition is generic and does not rely on a specific parametric assumption on
the distribution of the idiosyncratic shock $\varepsilon$.
Under the logistic assumption adopted in the baseline model, $\Gamma(p,U)$ admits the
closed-form expression given in Equation~\eqref{eq:gamma_plus}, which can be written as
\begin{equation}\label{eq:gamma-end}
\Gamma(p,U)
=
p \times p_a(p,U)
\times
\left[-\log\!\big(1-p_a(p,U)\big)\big/p_a(p,U)\right].    
\end{equation}
The last term, $m(p_a)=-\log(1-p_a)/p_a$, is a convex and increasing function of $p_a$, with $\lim_{p_a\to 0}m(p_a)=1$ and $\lim_{p_a\to 1}m(p_a)=+\infty$. Thus,  for small $p_a$, a second-order Taylor expansion gives $m(p_a)\approx 1+p_a/2$, so that $\Gamma(p,U)\approx p_h(p,U)\times(1+p_a(p,U)/2)$, where $p_h$ is the unconditional probability of being hired,
\begin{equation}\label{eq:ph_def}
p_h(p,U) := p \times p_a(p,U),
\end{equation}
i.e., the product of the hiring probability conditional on application and the application probability. When application probabilities are empirically small, $\Gamma$ is therefore well approximated by $p_h$, which rationalizes the strong empirical performance of hiring-based rankings documented in Section~\ref{sec:comparisonRS}.

More generally, one can show that if the distribution of $\varepsilon$ is log-concave,
then the conditional surplus $m(p_a)$ is an increasing function of the application
probability.
The precise shape of this function depends on the distributional assumption.
Appendix Figure~\ref{graph:app:lois_epsilon} illustrates this additional term for three
standard cases: logistic, Gumbel (EV1), and normal distributions, all normalized to have
unit variance.

The decomposition \eqref{eq:gamma-end} immediately implies that neither $U$, $p$, $p_a$, nor $p_h$ alone is sufficient to recover the optimal ranking: each captures only a subset of the dimensions jointly determining $\Gamma(p,U)$.

\subsubsection{Why ranking by application probability is not optimal}
\label{subsubsec:why_application}

Ranking vacancies according to observed application behavior is a natural benchmark,
since applications are directly observed and summarize job seekers’ choices.
Within the model, however, the Bellman equation makes clear that application behavior
and welfare-relevant vacancy values rely on fundamentally different objects.

When a job seeker applies to a vacancy $(p,U)$, the contribution of this vacancy to her
continuation value is proportional to $p\big(U - U_0^*(x,p) + \varepsilon\big)$, 
where $p$ is the probability of being hired conditional on application.
By contrast, the application decision itself is governed by a threshold rule: a job seeker applies whenever this expression is positive.

As a result, application behavior only reveals whether applying is privately profitable,
but abstracts from the probability $p$ that scales the contribution of the vacancy to expected welfare. Two vacancies may generate the same surplus $U - U_0^*(p) + \varepsilon$ and therefore
induce the same application decision, yet differ substantially in their welfare contribution because they are associated with different hiring probabilities $p$.

This wedge follows directly from the dynamic structure of the search problem and provides a structural explanation for the inversion problem emphasized in the algorithmic fairness literature: a score that reproduces application behavior generally fails to recover the welfare-relevant ranking of vacancies.

\subsubsection{Why ranking by hiring probability is not optimal}
\label{subsubsec:why_ph}

A closely related benchmark is to rank vacancies according to the probability of hire $p_h(p,U)$ (see Equation \ref{eq:ph_def}).
Using the decomposition in \eqref{eq:gamma-end}, the optimal score can be written as $$\Gamma(p,U) = p_h(p,U) \times m\!\big(p_a(p,U)\big).$$
This expression shows that $p_h$ is still an incomplete measure of vacancy value.
Under log-concavity of the distribution of $\varepsilon$, the function $m(p_a)$ is
increasing, implying that $p_h$ does not place sufficient weight on perceived utility
and application behavior. Ranking by $p_h$ underweights vacancies that generate large conditional surpluses.

That said, for standard distributions such as the logistic or Gumbel (EV1), the function
$m(p_a)$ varies relatively slowly when $p_a$ is small.
As a result, in environments where application probabilities are low, rankings based on
$p_h$ may perform reasonably well in practice, even though they are not theoretically
optimal.

\subsubsection{Two implementable routes to welfare-oriented recommendation}
\label{sec:two-routes}

Proposition~\ref{prop:myopic} characterizes the welfare-optimal ranking through the score
$\Gamma(p,U)$. In practice, two routes can be followed to implement such a
ranking.

\paragraph{Route A: structural implementation.}
A first approach is to recover the primitives entering $\Gamma(p,U)$ by separately estimating a utility component $U_{i,j}$ and the probability of recruitment $p_{i,j}$, and then computing the welfare score implied by the model. While conceptually direct, this route requires specifying how observed platform signals map into the latent utility object $U_{i,j}$.

\paragraph{Route B: reduced-form welfare index.}
An alternative is to construct the welfare score directly from observable transition probabilities. Let
\[
p_{a,i,j} = \mathbb{P}(C_{i,j}=1 \mid x_i,y_j), 
\qquad
p_{i,j} = \mathbb{P}(H_{i,j}=1 \mid C_{i,j}=1,  x_i,y_j)
\]
denote the probability of applying and the probability of recruitment conditional on application. Under the logistic specification discussed above, the welfare score can be written as \eqref{eq:gamma_plus}-\eqref{eq:gamma-end}, which yields
$\Gamma_{i,j}
=
p_{i,j}\,[-\log(1-p_{a,i,j})]$, up to a positive scale normalization.

This second route is directly implementable since it relies only on predicting applications and hires. The empirical strategy developed below follows this approach: we estimate $\widehat p_{a,i,j}$ and $\widehat p_{i,j}$ using the available signals, construct $\widehat{\Gamma}_{i,j}$, and use it to form welfare-oriented recommendation sets.


Implementing such a welfare-oriented ranking relies on the behavioral structure linking applications to both the surplus component $U_{i,j}$ and the probability of recruitment $p_{i,j}$. Establishing the empirical relevance of this structure is therefore a natural prerequisite for algorithm design.\footnote{While the experiments reported below provide evidence consistent with this behavioral structure, they do not identify the exact distribution of taste shocks.} Random assignment of recommendation algorithms generates exogenous variation in the signals observed by job seekers. This variation allows us to identify how application decisions respond to both utility-related signals and recruitment probabilities.

\subsection{Identification of the objective $\Gamma$ from application data}
\label{subsec:identification}

\subsubsection{Identification of $\Delta(p,U)$ and reconstruction of $\Gamma(p,U)$}
\label{subsubsec:identify_delta}

Under the logit taste-shock specification, application decisions identify the surplus
index $\Delta(p,U)$ that governs job seekers’ application behavior.
As Equation~\eqref{eq:gamma_plus} shows, the welfare-relevant vacancy value
$\Gamma(p,U)$ is a function of this surplus index.
Given identification of $\Delta(p,U)$, the structural assumptions of the model allow us
to reconstruct the welfare-relevant objective $\Gamma(p,U)$ and thus the optimal
ranking.
Identification of $\Gamma(p,U)$ in this framework therefore primarily hinges on the
ability to identify $\Delta(p,U)$ from application data.

We denote by $C_{i,j}=1$ if job seeker $i$ applies to vacancy $j$, and $0$ otherwise.
Let $\tilde U_{i,j} \equiv U_{i,j}-U_{0,i}^*(1)$ denote utility net of the continuation
value of unemployment.

\subsubsection{A sufficient parametric identification result}
\label{subsubsec:identify_prop}

We provide a sufficient parametric identification result under a logistic
specification.

\begin{prop}\label{prop:identification}
Let $\varepsilon$ be distributed as a logistic random variable with scale parameter
$\sigma$.
Assume that sequences of individual $i$'s application decisions on vacancies $j$ are
observed, together with $p_{i,j}$ and $\tilde U_{i,j}$.
Then, the parameters $\alpha$,\footnote{This $\alpha$ is not to be confused with the vacancy-arrival rates $\alpha_0, \alpha_1$.} $\beta$, and $\gamma$ in the binary choice model
\begin{equation}\label{eq:identification}
\mathbb{P}(C_{i,j}=1 \mid p_{i,j},\tilde U_{i,j})
=
F\!\left(
\alpha \tilde U_{i,j}
-\beta/p_{i,j}
+\gamma
\right)
\end{equation}
are identified. Moreover, for generic values $(p,U)$, the structural objects $\sigma$, $\overline{k}+\overline{R}$,
$\Delta(p,U)$, and $\Gamma(p,U)$ in equations~\eqref{eq:def_surplus} and~\eqref{eq:gamma_mplus} are identified as follows:
\begin{itemize}[label=--]
    \item[(i)] $1/\alpha$ identifies $\sigma$;
    \item[(ii)] $\beta/\alpha $ identifies $\overline{k}+\overline{R}$;
    \item[(iii)] $U-(\beta/\alpha)/p+\gamma/\alpha$ identifies $\Delta(p,U)$;
    \item[(iv)] $p\log\!\left(1+e^{\alpha U+\gamma-\beta/p}\right)/\alpha$ identifies
    $\Gamma(p,U)$.
\end{itemize}
\end{prop}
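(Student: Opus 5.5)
Our plan has two stages: first identify $(\alpha,\beta,\gamma)$ in the binary choice model \eqref{eq:identification}, then map these reduced-form parameters to the structural objects using \eqref{eq:res_u}, \eqref{eq:def_surplus} and the closed form \eqref{eq:gamma_plus}. We begin by making the structural form of the application probability explicit. By \eqref{eq:appli} and the logistic assumption, $\mathbb{P}(C_{i,j}=1\mid p,U)=F(\Delta(p,U)/\sigma)$. Substituting \eqref{eq:res_u} and writing $\tilde U=U-U_0^*(1)=U-rV_0+\overline R-(\overline k+\overline R)$ gives
\[
\Delta(p,U)=U-rV_0+\overline R-\frac{\overline k+\overline R}{p}=\tilde U+(\overline k+\overline R)-\frac{\overline k+\overline R}{p}.
\]
Hence $\Delta/\sigma=\alpha\tilde U-\beta/p+\gamma$ with $\alpha=1/\sigma$, $\beta=\gamma=(\overline k+\overline R)/\sigma$. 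This shows that \eqref{eq:identification} is the structural model, with $\gamma$ left unrestricted in estimation.

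For identification of $(\alpha,\beta,\gamma)$, we use the standard logit argument. Since $F$ is strictly increasing and invertible, the observed conditional probability determines the index $F^{-1}(\mathbb{P}(C=1\mid p,\tilde U))=\alpha\tilde U-\beta/p+\gamma$ at each support point. The coefficients are then pinned down by linear algebra provided the regressors $(\tilde U,1/p,1)$ are not collinear on the support, i.e.\ $\mathbb{E}[ZZ^\top]$ is nonsingular for $Z=(\tilde U,-1/p,1)$. We state this rank condition explicitly as the meaning of ``generic'' values. In the paper's setting it is delivered by the exogenous variation in recommended vacancies, which makes $(p,\tilde U)$ vary independently.

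Items (i)--(iv) then follow by inversion. From $\alpha=1/\sigma$ we get (i); from $\beta/\alpha=\overline k+\overline R$ we get (ii). For (iii), $\Delta=\sigma(\alpha\tilde U-\beta/p+\gamma)=\tilde U-(\beta/\alpha)/p+\gamma/\alpha$, with $U$ in the statement read as the net utility $\tilde U$. For (iv), plug $\Delta/\sigma=\alpha U+\gamma-\beta/p$ and $\sigma=1/\alpha$ into \eqref{eq:gamma_plus} to obtain $\Gamma=p\log(1+e^{\alpha U+\gamma-\beta/p})/\alpha$.

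The main obstacle is the normalization bookkeeping. Levels of $U$ are only identified up to the continuation value $rV_0$, which is why the result is stated in terms of $\tilde U$, and $\gamma$ absorbs $(\overline k+\overline R)/\sigma$. We will check that this location normalization leaves $\Delta$ and $\Gamma$, and hence the optimal ranking of Proposition~\ref{prop:myopic}, unaffected. Optionally, the overidentifying restriction $\beta=\gamma$ can serve as a specification check.
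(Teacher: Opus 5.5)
Your proposal is correct and follows the paper's own proof. Like the paper, you substitute \eqref{eq:res_u} into $\Delta/\sigma$, match coefficients to obtain $\alpha=1/\sigma$ and $\beta/\alpha=\overline{k}+\overline{R}$, and read off $\Delta$ and $\Gamma$ from \eqref{eq:gamma_plus}. Your explicit rank condition on $(\tilde U,1/p,1)$, the implied restriction $\beta=\gamma$, and reading $U$ as $\tilde U$ in (iii)--(iv) usefully spell out what the paper compresses into ``direct given the normalizations.'' One small misreading: ``generic values $(p,U)$'' in the statement refers to the points at which $\Delta$ and $\Gamma$ are evaluated, not to the support condition on the regressors.
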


\begin{proof}
See Appendix~\ref{app:proof}.
\end{proof}

In Proposition \ref{prop:identification}, the assumption that $\tilde U_{i,j}$ is directly observed can be relaxed by allowing
for an individual-specific error term $\nu_i$, so that utilities take the form
$\tilde U_{i,j}+\nu_i$.
In this case, identification of $\sigma$, $\overline{k}+\overline{R}$,
$\Delta(p,U)$, and $\Gamma(p,U)$ follows from a fixed-effects panel logit specification
adapted from Equation~\eqref{eq:identification}.

\subsection{Extensions and scope}
\label{subsec:extensions}

This section discusses extensions of the baseline framework and clarifies the scope of the analysis. Our objective is not to provide a full treatment of these extensions in the main text, but rather to explain how the core insights extend beyond the baseline model and to indicate where additional complexities arise. 

\subsubsection{Forward-looking job seekers}
\label{subsubsec:nonmyopic}

The analysis in the main text focuses on myopic job seekers.
In Online Appendix~\ref{sec:nonmyopicth}, we extend the framework to allow for
forward-looking behavior. We address two issues.

First, we examine how to evaluate the gains associated with the use of a RS when job seekers are not myopic. We show that, using observable data, it is possible to adjust the welfare evaluation derived under the myopia assumption to account for the endogenous adjustment of the reservation utility implied by forward-looking behavior.

Second, we consider the implications for RS design. While the optimal rule under myopia is no longer exactly optimal in this setting, recommendation sets constructed under this assumption can still be improved. Although deriving a fully optimal rule with forward-looking job seekers is challenging, the RS remains effective in steering job seekers toward vacancies with higher expected value.

\subsubsection{Imperfect knowledge about the distribution of opportunities}
\label{subsubsec:misperceptions}

The baseline model abstracts from potential misperceptions about the distribution of
available vacancies.
In practice, job seekers may hold biased beliefs about the types of opportunities they
are likely to encounter and form expectations accordingly.

To fix ideas, consider the case of myopic job seekers.
Absent recommendations, job seekers base their search decisions on a subjective
distribution of vacancies, which gives rise to a continuation value denoted
$V_0^{JS}$.
By contrast, the true distribution of opportunities would imply a value $V_0^{True}$. Introducing a RS provides access to vacancies drawn from the true distribution and yields a continuation value $V_1^m$, without modifying the model-implied optimal ranking rule $\Gamma^m$ for recommended vacancies.

The overall effect of the RS can therefore be decomposed as:
\[
\underbrace{V_1^m - V_0^{JS}}_{\text{Full RS effect}}
=
\underbrace{V_1^m - V_0^{True}}_{\text{Pure RS effect}}
+
\underbrace{V_0^{True} - V_0^{JS}}_{\text{Information effect}}.
\]

This decomposition highlights that, in addition to alleviating search frictions, RSs may
generate value by correcting job seekers’ misperceptions about the distribution of
opportunities.\footnote{Formally, subjective beliefs can be represented by assuming that
job seekers draw vacancies from a subjective distribution $dF_0^{JS}$, while
recommended vacancies are drawn from the true distribution $dF_0^{True}$.
The relation between the two can be expressed, for example, as
$dF_0^{True} = \phi(p,U)\, dF_0^{JS}$, where $\phi(\cdot)$ captures systematic belief
distortions.
Under this formulation, recommendations shift job seekers from draws based on
$dF_0^{JS}$ to draws based on $dF_0^{True}$, without altering the model-implied optimal
ranking.
For forward-looking job seekers, the same logic applies, although the interaction with
the endogenous reservation utility makes the analysis more involved without altering
the qualitative insight.}

\subsubsection{Competition and congestion}
\label{subsubsec:congestion}

The model abstracts from competition among job seekers and from congestion effects.
At the recommendation stage, vacancies are ranked independently for each job seeker, so
the RS effectively solves a collection of individual optimization problems.

In environments with many job seekers, this approach may lead multiple individuals to
be recommended the same vacancy, potentially exacerbating congestion and altering
effective hiring probabilities.
Accounting for such interactions would require formulating a global assignment problem
that imposes constraints on how often a vacancy can be recommended.

From an algorithmic perspective, this can be implemented as a post-processing step on
top of a proximity matrix between job seekers and vacancies, for instance using optimal
transport methods that explicitly trade off match quality and congestion
\citep[see, e.g.,][]{bied2021congestion,mashayekhi2023recon}.
In this paper, we deliberately abstract from congestion and focus on recommendation
rules that ignore these interactions.

 \section{Two representative job RSs}\label{sec:2RS}
Section~\ref{sec:search} shows that welfare-optimal recommendations must combine two primitives: the utility $U(x,y)$ a vacancy delivers to a job seeker and the probability $p(x,y)$ that an application succeeds. This section presents the two RSs at the core of the paper's experimental design, chosen precisely because each emphasizes one of these two components. The first, a state-of-the-art ML algorithm trained on realized hires, primarily targets $p$; the second, a knowledge-based matching algorithm derived from job seekers' stated search criteria, primarily targets $U$. Neither is welfare-optimal in isolation (Proposition~\ref{prop_V_u0}), but they provide the building blocks for the welfare-approximating algorithms constructed in Section~\ref{sec:betatestso}. Before describing these two systems, we briefly situate them within the broader RS landscape.
 \subsection{Many forms of RSs}\label{sec:sec1}

All RSs operate according to a common principle: they rely on the computation of a matching score that summarizes information about the expected value of a job seeker–vacancy match. Specifically, for an individual $i$ described by a set of characteristics $x_i$ and a vacancy $j$ described by characteristics $y_j$, the system computes a score $S_{i,j}$ depending on $x_i$ and $y_j$. Higher values of  $S_{i,j}$ indicate a stronger match and are therefore preferred. Once these scores are computed, generating recommendations is straightforward. For a given job seeker $i_0$, vacancies are ranked according to their scores  $S_{i_0,j}$ from the most to the least desirable.  To make $k$ recommendations to $i_0$, an intuitive solution is to pick the $k$ vacancies with the highest scores.\footnote{\label{foot:rank}Throughout the paper, we define the rank of a vacancy as its 
position in this ordering, with rank $1$ corresponding to the highest score.}

\medskip
 
Although this underlying principle is shared across systems, job recommendation approaches vary widely in both the computer science literature and real-world applications. As surveyed by \cite{freire2021recruitment, de2021job, mashayekhi2022challenge}, this diversity reflects a multitude of application contexts, data availability, and algorithmic strategies. 
Importantly, these approaches also differ in the type of information they exploit and in the outcomes they are designed to predict.
Table \ref{tab:recsys} summarizes the main families of approaches and highlights their key characteristics.

\medskip

\textit{Knowledge-based} RSs leverage expert ontologies of occupations, skills, and locations to match workers to vacancies based on assessed fit. A prominent example is WCC ELISE, used by several PESs and private entities.\footnote{See the dedicated websites \href{https://www.roberthalf.com/us/en/find-jobs}{\textcolor{blue}{Robert Half}} and \href{https://www.wcc-group.com/employment/products/elise-job-matching-search-and-match/}{\textcolor{blue}{ELISE}}.} An alternative class of approaches leverages machine learning techniques. \textit{Collaborative filtering} relies on interaction histories to infer similarity patterns, as in the click-based algorithm studied by \cite{roland2022} at the Swedish PES.  By contrast, \textit{content-based} RSs exploit observable characteristics  (\textit{e.g.} occupation, education and skills) to predict interaction probabilities; CareerBuilder provides an example \cite[][]{zhao2021embedding}.

\medskip

\textit{Hybrid} RSs combine these approaches. Examples include the RecSys 2017 winner \cite[][]{Volkovs2017} or LinkedIn's system, which predicts matches based on user and vacancy characteristics, incorporating individual and recruiter-level fixed effects when sufficient interaction data are available \cite[][]{shi2022generalized}. Finally, Indeed's RS \citep[see][]{ma2022jobs} combines collaborative filtering and content-based methods, with a final hybrid stage involving deep learning and a rule-based engine.  

\medskip

Beyond algorithmic design, an important dimension of heterogeneity across RSs concerns the choice of target variable. Table \ref{tab:recsys} summarizes selected contributions from data science and economics, as well as the algorithms studied in this paper. The table highlights substantial variation in outcome measures, ranging from clicks and applications to hires. Notably, data science contributions predominantly focus on intermediate outcomes such as clicks or applications, whereas economic studies typically emphasize hires. Our paper implements an experimental comparison of RSs with different designs within the same setting, evaluated against a common welfare metric derived from the model. This contrasts with the existing literature, where studies typically evaluate one or two pre-specified algorithms targeting different outcomes in different populations, making cross-study comparisons difficult.

\begin{table}
\centering
\caption{Examples of different RSs}
\scalebox{0.7}{
\begin{tabular}{lcccccc}
\toprule
References & Setting & Knowledge & Collaborative  & Content& Target & Type of\\
    &  & based &  Filtering & based & variable &  Recom.  \\
\midrule
WCC Elise & National PESs, & x &  & & &\\
& Robert Half &  &  & & & \\
\cite{zhao2021embedding} & CareerBuilder &  &  & x & Applications  & Vacancies\\ 
\cite{shi2022generalized} & LinkedIn &  & x & x & Applications, ``save" & Vacancies\\ 
\cite{Volkovs2017} & Xing challenge & & x & x & Impressions, clicks&Vacancies \\
\cite{ma2022jobs} & Indeed & x & x & x & Clicks, Applications&Vacancies \\ 
\midrule
\multicolumn{7}{c}{RSs tested in Economics}\\
\midrule
\cite{roland2022} & Swedish PES &  & x & & Clicks& Vacancies \\
\cite{altmann2022direct} & Danish PES &   &  &  x & Hires & Occupations \\
\cite{behaghel2024potential} & French PES & x &  & x & Hires& Firms\\
\cite{belot2025advising} & Dutch PES & x &  &  x & Hires& Occupations\\
\cite{bachli2025helping} & Swiss PES & x  &  &  & Skills profile fit&  Occupations\\
\midrule
\multicolumn{7}{c}{Our tested RSs }\\
\midrule
\Pbs  & French PES &  x &  &  & Utility &Vacancies \\
\V.0 & French PES &  &  & x &   Hires &Vacancies \\
\text{\textsc{XGBoost}} & French PES &  &  & x &   Hires &Vacancies \\
\text{\textsc{Application}} & French PES &  &  & x & Applications  &Vacancies\\
\text{\textsc{Vadore}.2}  & French PES &  &  & x &  Applications, Hires& Vacancies\\
\text{\textsc{Mix}}\sfrac{1}{4}, \text{\textsc{Mix}}\sfrac{1}{2}, \text{\textsc{Mix}}\sfrac{3}{4}   & French PES &  x &  & x & Utility, Hires &Vacancies\\
\MixHalf(\V .2)   & French PES &  x &  & x & Utility, Hires, Appl. & Vacancies\\
\bottomrule
\end{tabular}
}
\label{tab:recsys}
\end{table}

\subsection{A state-of-the-art RS based on hiring predictions}\label{sec:RSP}

We present the baseline ML-based RS we initially developed \cite[see][for architectural details and related literature]{bied2021congestion,vadoreijcai}. This is a state-of-the-art content-based RS building on the insights of the winning algorithm of the RecSys 2017 challenge \citep{Volkovs2017}. In the framework developed in Section~\ref{sec:search}, this algorithm can be interpreted as primarily targeting the probability of successful matching $p(x,y)$. It is trained on data from job seekers who found employment, using the vacancy that led to a hire as the positive example for each job seeker, and a set of vacancies that did not result in a hire as negative examples. The training objective seeks to rank the realized match above the alternatives for each job seeker, it does not directly maximize reemployment rates, but rather learns a score that separates successful from unsuccessful job seeker, vacancy pairs.  

\medskip

Model performance is naturally evaluated using recall@$k$: the share of job seekers $i$ hired a given week for whom the algorithm ranks the realized vacancy among the top  $k$ recommendations available that week, where $k$ is usually 10, 20, 50, or 100. While the recall@$k$ provides a meaningful evaluation metric, it is intractable for direct optimization.  We therefore follow the \textit{learning to rank} literature, and learn a similarity score $S_{i,j}$ between job seeker $i$ and vacancy $j$. The objective is to ensure that, for any job seeker $i$, the score associated with the realized match $j^*(i)$ exceeds that of any alternative vacancy $j'$. This leads to minimizing the \textit{triplet margin loss}  corresponding to the following objective:
\begin{equation}\label{eq:triplet}
    \min_{S} \sum_i \sum_{j' \neq j^*(i)} [S_{i, j^*(i)} - S_{i, j'} + \eta]_+,
\end{equation} 
where $\eta > 0$ is a scalar hyperparameter, $[x]_+ = \max (x, 0)$, the outer sum ranges over all job seekers with matches, the inner one over all vacancies \citep[see, \emph{e.g.},][]{weinberger2009distance}. The objective enforces a separation of at least $\eta$ between the scores of matched and unmatched vacancies for each job seeker.

\medskip

Given job seeker and vacancies characteristics, respectively denoted by $X_i$ and $Y_j$, the score $S_{ij}$ is defined as:
\[ S_{i,j}(X_{i,j}) = \phi(X_i)^{\top} A \psi(Y_j), \]
where $X_{i,j} = (X_i,Y_j)$, $\phi, \psi$ are feed-forward neural networks with several layers, and $A$ is an affinity matrix. Feed-forward neural networks provide flexible, differentiable representations well suited to high-dimensional inputs and large datasets.\footnote{The interested reader may consult \cite{goodfellow2016deeplearningbook} for a textbook treatment.}

\medskip

In this context, $\phi(X_i)$ and $\psi(Y_j)$ can be interpreted as latent representations of job seeker $i$ and vacancy $j$. The matrix $A$ captures cross-dimensional affinities: the parameter $A_{k,l}$ measures the complementarity between dimension $k$ of the job seekers' latent space and dimension $l$ of the vacancy’s latent space. Both latent spaces have dimension 872.   Table \ref{feature_list} lists the observable characteristics used on the job seeker and vacancy sides to predict hires.

Importantly, $\phi$, $\psi$ and $A$ are given a structure which incorporates three main blocks corresponding to geography, skills, and all remaining features. This design explicitly incorporates key dimensions emphasized in the job recommendation and labor market literature, most notably location and (soft) skills \citep[see][]{belot2019providing,belot2025advising,altmann2022direct,bachli2025helping}, while leveraging the power of ML methods to detect the most promising interactions and transitions.

Formally (see Figure \ref{fig:nn}), the similarity score decomposes as 
\[ S_{i,j}(X_{i,j}) = \sum_{b \in \{``geography", ``skills" , ``other features" \} }\phi_{b}(X_{i})^{\top} A_b \psi_b(Y_{j}),
\]
where $\phi_b$, $\psi_b$ and $A_b$ are block-specific transformations and affinity matrices.

\medskip

The parameters that are optimized during training include the neural network weights defining the mappings to the latent spaces, as well as the affinity matrices $A_b$. The resulting non-convex objective is minimized using mini-batch stochastic gradient descent. For computational tractability, non-matching job seeker–vacancy pairs are heavily and uniformly subsampled. Finally, for a given job seeker $i_0$, we define the  $\mathcal{P}$-ranking as the ordering of vacancies induced by the score $S_{i_0,j}$. 

\paragraph{Data used to train algorithms.}

Three features of the data are particularly relevant for this study. First, administrative records on job seekers and vacancies are matched to behavioral data (clicks, applications, and hiring outcomes) on the same platform, providing a joint view of both sides of the market and the interactions between them. Second, we observe click data alongside application data, allowing us to distinguish early-stage interest from actual application decisions; the use of click data to measure job seekers' expressed interest at this early stage of online search is novel in this context. Third, the scale of the data, over 1.1 million job-seeker search sessions, 516,776 unique vacancies, and 75,744 observed hires, enables reliable training of high-dimensional ML models and estimation of the structural application model.

We use rich historical administrative data from the Public Employment Service (PES) to train and evaluate several job recommendation systems (RSs). Table \ref{table:desc_data} summarizes the type of data used in this paper. This includes descriptions of vacancies posted on the PES, job seekers' characteristics and search parameters, as well as user interactions on the PES website, such as clicks, applications, and subsequent hiring outcomes. Our analysis focuses on the former French region of Rhône-Alpes, which offers substantial economic and geographic diversity while remaining sufficiently contained for detailed empirical analysis.

\medskip

The PES website is open to all employers and job seekers and constitutes the largest platform for vacancy postings in the French labor market.\footnote{Using the same application data, \cite{le2021gender} estimate that vacancies posted on this website represented 60\% of all vacancies in 2010 (see their Section VI.A.)} The data provide extensive information on job postings, including publication date, the occupation at several levels of granularity, the posted wage, required experience; contract type (permanent, temporary, or fixed-term); workplace location; weekly working hours; and required qualifications. Importantly, the data contain information on desired hard and soft skills, textual descriptions of both the vacancy and the firm, firm size, the number of applications received by the vacancy and by the firm over the previous six months, and the time elapsed since the vacancy was posted.

\medskip

Information on job seekers is drawn from administrative records on unemployment spells (for example, the \textit{fichier historique}, FH, of the PES). These records include demographic characteristics and detailed job search histories, such as date of registration, geographic location, experience, skills, unemployment duration, applications in the last six months. They also contain various individual and postal code level socio-demographic characteristics. Importantly, we observe job search parameters declared at registration, including reservation wage, maximum commuting time, desired occupation, desired type of contract (temporary vs. long-term), and working hours (full-time vs. part-time). 

\medskip

This comprehensive information on both sides of the market is complemented by detailed data on user behavior on the PES website, notably clicks on vacancies and subsequent applications. While application data from the PES have been used in previous work \citep[see][]{marinescu2021unemployment,glover2019job,algan2020active}, the use of click data to measure job seekers’ interest at the early stages of online search appears to be novel in this context. Applications are observed through three of the PES channels: applications submitted directly by job seekers,  potential matches initiated by firms, and applications suggested by PES caseworkers. Finally, we also exploit the final outcome of these interactions, whether a hire occurs, which is recorded by caseworkers.\footnote{As noted in \cite{algan2020active}, these hiring data are subject to measurement error, notably because hires occurring outside the PES may not be observed. To mitigate this limitation, we complement the PES records with comprehensive administrative data on all hires (\textit{Déclarations préalable à l'embauche}) whenever a hire can be linked to an identifiable PES posted vacancy.} 

\subsection{An RS based on search criteria}\label{sec:sdr}

The French PES relies on a matching algorithm based on WCC Elise to recommend relevant vacancies to job seekers. Like most knowledge-based RSs, this algorithm starts from a comparison between characteristics of the job desired by job seekers and those of the available vacancies and can be interpreted, in the framework developed in Section~\ref{sec:search}, as primarily capturing the utility component of a job seeker–vacancy match. For each characteristic considered, a sub-score is determined, ranging from zero to one, reflecting the degree of compatibility between the job seeker’s preferences or profile and the vacancy’s requirements. The sub-scores are then aggregated to form a global score. Aggregation is primarily based on a weighted average. 

\medskip

For the purpose of this study, we construct a RS inspired by the PES algorithm and based on the same set of criteria. Each criterion is matched exactly with its counterpart on the recruiter’s side (\emph{i.e.}, the requirements specified in the vacancy and the characteristics of the job offered). For each characteristic $k$, we define a consistency measure $c_{k,i,j}\in [0,1]$, which captures the degree to which characteristic $k$ of job seeker $i$ is compatible with that of vacancy $j$.

The characteristics entering the definition of the score and their associated weights are as follows: Occupation (0.332),  Skills in occupation (0.332), Geographic mobility (0.1),  Reservation wage (0.066), Diploma (0.033), Working hours (0.033), Driving license (0.033), Languages (0.033),  Years of experience in occupation (0.033),  Duration and type of contract (0.003). The resulting matching score is defined as:\footnote{\label{foot:U}Throughout the paper, we use the formula from Equation \eqref{eq_PES} together with the weights used by the PES. The score $\mathcal{U}^{PES}$ actually used by the PES and the score \Pbs \ used in our experiments are not identical. The exact criteria used at the PES share the same principles but allow for smoother definitions of several sub-criteria and incorporates additional nonlinearities, such as censoring based on geographic fit. We abstract from these features here in order to preserve transparency and interpretability.}

\begin{equation}
    \label{eq_PES}
\mathcal{U}_{i,j}=\sum_{k=1}^K w_k c_{k,i,j}.
\end{equation}

We refer to the ordering of vacancies for a given job seeker $i_0$ according to the criterion $\mathcal{U}_{i_0,j}$ as the $\mathcal{U}$ ranking. As for the $\mathcal{P}$-based RS, recommendations are the vacancies ranked highest (i.e., with the largest score) under this ordering.

\begin{table}[htbp]
\centering
\footnotesize
\caption{Information used by the search criteria and machine-learning RSs}
\label{feature_list}
\renewcommand{\arraystretch}{1.02}
\setlength{\tabcolsep}{4pt}

\scalebox{0.9}{
\begin{tabular}{>{\raggedright\arraybackslash}p{4cm}
                >{\raggedright\arraybackslash}p{6cm}
                >{\raggedright\arraybackslash}p{7.5cm}}
\toprule
& \textbf{Search criteria RS} & \textbf{Machine-learning RS} \\
\midrule

\textbf{Job seeker data} & & \\

\textit{Skills and qualifications}
& \makecell[l]{Skills; Diploma;\\ Languages}
& \makecell[l]{Skills (\textbf{SVD/embeddings}); Diploma;\\   Languages; \textbf{Soft skills (14)}} \\
\cmidrule{2-3}
\textit{Labor market history}
& \makecell[l]{Experience; Occupation\\ (3-digit)}
& \makecell[l]{Experience; Occupation\\ (1--3 digit)} \\
\cmidrule{2-3}
\textit{Constraints and preferences}
& \makecell[l]{Location; Working hours;\\ Reservation wage and mobility}
& \makecell[l]{Location; Working hours;\\ Reservation wage and mobility (\textbf{history})} \\
\cmidrule{2-3}
\textit{Demographics and search behavior}
& ---
& \makecell[l]{\textbf{Age; Sex; Children; Search type};\\ \textbf{Constraints; Applications}} \\
\cmidrule{2-3} \cmidrule{2-3}
\addlinespace[2pt]
\textbf{Vacancy data} & & \\

\textit{Job requirements}
& \makecell[l]{Skills; Diploma; Experience}
& \makecell[l]{Skills (\textbf{SVD/embeddings}); Diploma;\\  \textbf{Qualification}; \textbf{Soft skills (14)}} \\
\cmidrule{2-3}
\textit{Contract and pay}
& \makecell[l]{Contract type; Working hours; Wage}
& \makecell[l]{Contract type; Working hours\\  Wage; \textbf{Duration}} \\
\cmidrule{2-3}
\textit{Firm and posting characteristics}
& Location
& \makecell[l]{Location; \textbf{Firm size and status; Posting age};\\ \textbf{Applications}} \\
\cmidrule{2-3}
\textit{Textual information}
& ---
& \makecell[l]{Job and firm \textbf{descriptions}} \\
\bottomrule
\end{tabular}
}

\vspace{0.1cm}
\begin{minipage}{0.95\textwidth}
\footnotesize
\textit{Notes:} Differences in terms of data used are highlighted in \textbf{bold}.
\end{minipage}
\end{table}

\subsection{Understanding the two RSs}\label{sec:2RS3}

\textbf{The $\mathcal{U}$ score as a utility signal}. Our previous analysis shows that any recommendation algorithm score can be interpreted as a combination of two components: a signal about utility, $U_{i,j}-U_{0,i}^*(1)$ and a signal about the probability of recruitment, $p_{i,j}$. Consequently, the scores associated with the different RSs generally mix information about both dimensions. We interpret the $\mathcal{U}$ score primarily as a signal of surplus utility $U_{i,j}-U_{0,i}^*(1)$. Salary is the job attribute most directly linked to utility, but the other job characteristics included in the definition of $\mathcal{U}$ such as occupation, required skills, geographic location, working hours, and contract type also plausibly affect job seekers’ utility. Similarly, requirements related to diplomas, experience, driving licenses, and languages describe the set of skills that a job seeker would deploy in the position. Skill matching is not only relevant for productivity but is also associated with job satisfaction, personal fulfillment, and the accumulation of human capital, all of which are valued by job seekers. The weights used to aggregate these characteristics are those set by the PES. Nevertheless, as shown in Appendix \ref{app:modelobservational}, these weights can alternatively be estimated from the data at our disposal using application behavior.

\textbf{The $\mathcal{P}$ score as a match probability signal}. Analogously, we interpret the ranking induced by the $\mathcal{P}$ score as primarily reflecting the probability of recruitment, $p_{i,j}$. Appendix \ref{sec:calibration} provides empirical support for this interpretation. Following the approach of \citet{chernozhukov2018generic}, we use the predictive content of the ML algorithm's score to build a generic best logistic predictor of the matching probability. Specifically, we exploit the \textit{history of sequential job applications} to estimate a model linking the probability of a successful application to the matching score between job seeker $i$ and the vacancies $j(i)$ to which they applied. The results in Table \ref{tab:calibration} strongly validate the association between the score $S_{i,j}$ and recruitment outcomes. This procedure also allows us to map the score $S_{i,j}$ into an estimated probability of success, $p_{i,j}$. As a result, the rankings induced by the $S$-score and $\mathcal{P}$-score, which we call $\mathcal{P}$-ranking, can be interpreted as rankings based on recruitment probabilities.\footnote{This approach provides us with an estimate of $p_{i,j}$ for each application–vacancy pair. This quantity is typically unobserved, but is revealed here through the machine-learning-based estimation.} 

\textbf{Two different scores}. Figure~\ref{fig:ranks} compares the sets of top-ranked vacancies recommended to a given job seeker under the $\mathcal{U}$ and $\mathcal{P}$ rankings. The overlap between the vacancies ranked highest according to each criterion is very limited. On average, the vacancy that maximizes $\mathcal{P}$ is ranked 2{,}027th in the $\mathcal{U}$ ordering, while the vacancy that maximizes $\mathcal{U}$ is ranked 4{,}403rd in the $\mathcal{P}$ ordering. These large rank reversals highlight that the two scores emphasize markedly different
dimensions of job seeker–vacancy matches.

Appendix~\ref{app:comparison} further documents these differences by comparing the distributions of $\mathcal{U}$ and $\mathcal{P}$ among the top-ranked vacancies under each criterion (see Figure~\ref{fig:dist_all}). For example, the median probability of success of a vacancy recommended under the $\mathcal{U}$ ranking is 0.02, compared with 0.06 for a vacancy recommended under the $\mathcal{P}$ ranking.

Taken together, these results support the intuitive interpretation that the $\mathcal{U}$-based RS primarily captures the utility dimension, while the $\mathcal{P}$-based RS primarily captures the probability of successful matching. This distinction is useful for expositional clarity. However, neither score was originally designed to isolate a single dimension.  There is no guarantee that $\mathcal{U}_{i,j}$ perfectly identifies $U_{i,j}-U_{0,i}^*(1)$ and that $\mathcal{P}_{i,j}$ perfectly identifies $p_{i,j}$. More plausibly, both scores are functions of the two underlying components, such that $\mathcal{U}_{i,j}=\mathcal{U}(U_{i,j}-U_{0,i}^*,p_{i,j})$ and $\mathcal{P}_{i,j}=\mathcal{P}(U_{i,j}-U_{0,i}^*,p_{i,j})$. Even if $\mathcal{U}$ predominantly reflects utility and $\mathcal{P}$ predominantly reflects recruitment probabilities, each score is likely to contain information about both dimensions. The framework of Section~\ref{sec:search} identifies the theoretically correct combination, and the experiments of Section~\ref{sec:betatestso} are designed to test whether algorithms that move in this direction, by enriching hiring-based predictions with preference-related signals, deliver welfare gains in practice. Importantly, $\mathcal{U}$ and $\mathcal{P}$ were developed independently of this interpretation and were not designed to be combined; the model provides the basis for doing so.

\FloatBarrier

\section{Using field experiments to design recommender systems}\label{sec:betatestso}

This section reports two randomized field experiments that are distinctive in three respects. First, six algorithms are compared head-to-head in the same setting, using the same population and the same outcome measures, a direct experimental comparison of this breadth does not exist in the prior literature. Second, the set of algorithms spans the full spectrum from welfare-misaligned (pure $\mathcal{P}$ or pure $\mathcal{U}$) to welfare-approximating (\V.2), so the experiments trace the welfare gains from progressively better-aligned designs. Third, the design of the second experiment is derived from the theoretical predictions of the model and guided by the results of the first experiment, as part of an iterative learning cycle.

The first field experiment (beta-test~1) tests whether job seekers respond more favorably to recommendations that combine hiring-based and preference-based rankings than to recommendations based on either dimension alone, the core prediction of Proposition~\ref{prop_V_u0}. Section~\ref{sec:2RS} and Appendix~\ref{app:comparison} establish that the $\mathcal{P}$ and $\mathcal{U}$ rankings differ substantially, so this comparison has empirical bite. The results of this experiment motivate a redesign of the algorithm that more directly incorporates preference information while maintaining hiring prediction as the primary objective.

The second field experiment (beta-test~2) evaluates a new family of algorithms designed in response to these results, including an application-based algorithm (\textsc{Application}) and an enriched hiring-prediction algorithm (\V.2) that incorporates both preference-related and application-related signals. \textsc{Application} plays a dual role: it is a theoretically meaningful benchmark closely related to job seekers’ decision rules, and a building block used to improve hiring predictions in \V.2.

The two field experiments we conducted follow the same protocol, which is described in detail in Appendix \ref{sec:betatests}. Table \ref{table:experiments2}  summarizes these experiments. The eligible population consists of job seekers registered at \PE\ in the Auvergne-Rhône-Alpes region who were actively seeking employment. In each experiment, a single email was sent to a randomly selected sample of job seekers among this population, respectively 102,314 for the March 2022 experiment and 150,000 for the June 2023 one, providing access to a list of job recommendations. Job seekers who clicked the consent link and viewed the list were enrolled, resulting in 18,947 and 30,973 participants, respectively. There is no control group; instead, participants were randomly assigned to recommendation lists generated by different algorithms. The first experiment focuses on combining the two generic algorithms presented in Section \ref{sec:2RS}. The second experiment emphasizes new algorithms developed based on insights from the first experiment. In both experiments, we analyze the $\mathcal{U}$ and $\mathcal{P}$ scores of recommended vacancies, as well as clicks, applications, and hires.

\subsection{Beta-test~1: testing ordinal combinations of $\mathcal{U}$ and $\mathcal{P}$}

\subsubsection{Experimental design and treatments}

The experiment randomized two dimensions of the intervention: (i) the \textit{algorithm used to generate job recommendations}, and (ii) the \textit{display of additional information}. Job seekers were randomly assigned to one of ten treatment arms, corresponding to five algorithmic variants crossed with two display conditions.

\paragraph{Recommendation algorithms}

All recommendations are drawn from a \textit{consideration set}, a pool of vacancies that includes job vacancies highly ranked by at least one of the two base algorithms (\V.0\ or \Pbs). This ensures that the recommended vacancies are relevant according to either match likelihood or stated preferences.

Each job seeker was randomly assigned to one of five algorithms, which differed in how they selected and ranked job vacancies from within this consideration set:\footnote{Independently of the recommendation algorithm, job seekers were also randomly assigned to different information display conditions. Some participants were shown additional performance indicators (star ratings summarizing preference match and predicted hiring probability) for the first two recommended vacancies. In the main analysis, we pool all display conditions. Appendix Table ~\ref{tab:reduced_all} shows that accounting for this variation does not affect our conclusions.}

\begin{itemize}
    \item[-] \textbf{\V.0 (ML-based recommendations):} The RS ranking vacancies using the $\mathcal{P}$ score (detailed in Section~\ref{sec:RSP}), designed to predict successful matches.
    
    \item[-] \textbf{\Pbs \ (Preference-based recommendations):} The RS using the $\mathcal{U}$ score (see Section~\ref{sec:sdr}), which ranks vacancies based on the job seeker's stated preferences. On top of this baseline score, it also includes a final censoring step based on the geographic fit (see footnote \ref{foot:U}).
    
    \item[-] \textbf{\Mix\ algorithms (ordinal combinations of \V.0 and \Pbs):} Three hybrid variants combine the rankings from \V.0\ and \Pbs, relying solely on ordinal information. The steps are as follows (see Appendix~\ref{sec:treatments} for details):
    \begin{enumerate}
        \item Rank vacancies in the consideration set according to \V.0 ($\mathcal{P}$);
        \item Filter vacancies in the consideration set:
        \begin{itemize}
            \item[-] \textbf{\MixQuarter:} Retains the top 25\% according to $\mathcal{P}$ ranking;
            \item[-] \textbf{\MixHalf:} Retains the top 50\% according to $\mathcal{P}$ ranking;
            \item[-] \textbf{\MixThreeQuarter:} Retains the top 75\% according to $\mathcal{P}$ ranking;
            \end{itemize}
        \item Re-rank the filtered vacancies using the $\mathcal{U}$ ranking;
        \item Select the 10 first vacancies in this ranking.
    \end{enumerate}
 
    We refer to these three variants collectively as the \textit{\Mix} group.
\end{itemize}

Clearly, the corresponding algorithms, ordered as \V.0, \MixQuarter, \MixHalf, \MixThreeQuarter, and \Pbs, progressively shift the weight from $\mathcal{P}$ to $\mathcal{U}$ ranking.

\paragraph{Survey protocol and data description.}

Each of the 102,314 job seekers randomly selected for invitation to the experiment was first assigned to one of ten randomization groups. They received an email containing a link to a Qualtrics survey in which job ads were listed. Of the 102,314 individuals invited, 100,879 successfully received the email (the remainder were affected by technical issues), and 18,947 (18.6\%) opened the survey, thereby enrolling in the experiment. Each job seeker was shown the top 10 ads corresponding to their profile according to the algorithm to which they were assigned.

\subsubsection{Reduced-form estimates}

We estimate the following specification at the job seeker--vacancy pair level using ordinary least squares (OLS):

\begin{equation}\label{eq:reduced}
    Y_{ij} =  \sum_{a\in\mathcal{A}_1} \beta_a G_{a,i} + \gamma^{\top} Z_{i} + \epsilon_{ij},
\end{equation}

where \( Y_{ij} \) denotes one of the following outcomes for job seeker \( i \) and vacancy \( j \): the hiring score (\( \mathcal{P} \)), the matching score (\( \mathcal{U} \)), whether the job seeker clicked or applied to the vacancy, or the subjective rating given to the vacancy. The indicator \( G_{a,i} \) denotes assignment of individual \( i \) to algorithm \( a \in \mathcal{A}_1 = \{\text{\textsc{Vadore}.0}, \text{\textsc{Mix}}-\sfrac{1}{4}, \text{\textsc{Mix}}-\sfrac{1}{2}, \text{\textsc{Mix}}-\sfrac{3}{4}, \text{\Pbs}\} \). The vector \( Z_i \) includes a set of indicators for the position of the vacancy in the displayed list (slots 1 to 10). Standard errors are clustered at the job seeker level.

Figure \ref{fig:coef_levels_all_O} (see also Table~\ref{tab:reduced}, taking \Pbs\ as reference) presents the results using data from the 18,947 job seekers randomly assigned to receive job recommendations from one of the five algorithms.\footnote{Preregistered outcomes for this beta-test are: ratings, clicks and applications on recommended vacancies. We also preregistered broader labor market outcomes related to job search but do not use them here.}

Figures \ref{fig:coef_levels_all_O}-(a) and \ref{fig:coef_levels_all_O}-(b) use the full set of 10 job recommendations provided to each participant (for a total of 189,470 observations) to examine how the assigned algorithm affects the distribution of vacancies by predicted hiring probability and matching score. As intended by the experimental design, assignment to algorithms with higher weight on \V.0\ results in higher average hiring scores (\( \mathcal{P} \)) but lower matching scores (\( \mathcal{U} \)). Specifically, compared to vacancies recommended by \Pbs, recommendations from \V.0\ have an average hiring score that is 0.046 points higher (against a baseline mean of 0.054), representing nearly a doubling in expected hiring probability. All estimated coefficients are statistically significant, and we observe a large jump in hiring scores between \MixThreeQuarter\ and \MixHalf. The three algorithms with greater weight on \V.0\ (\V.0, \MixQuarter, and \MixHalf) yield similar outcomes on this dimension.

Conversely, matching scores (\( \mathcal{U} \)) decline markedly as the weight on \V.0\ increases. Vacancies recommended by \V.0\ have an average matching score 0.19 points lower than those recommended by \Pbs\ (baseline: 0.773). The pattern of decreasing \( \mathcal{U} \) scores mirrors the increase in \( \mathcal{P} \) scores, and again, the first three algorithms (\V.0, \MixQuarter, \MixHalf) yield relatively close estimates. 

Figures \ref{fig:coef_levels_all_O}-(c) uses data on subjective ratings provided by job seekers for the first two recommended vacancies (36,668 observations). These ratings range from 0 to 10 and were elicited via the question: \textit{“Overall, what rating out of 10 would you give this job vacancy?”}. In Table~\ref{tab:reduced}, all treatment coefficients are positive and statistically significant, indicating that vacancies recommended by algorithms in \( \mathcal{A}_1\setminus\{\text{\Pbs}\} \) are rated more favorably than those recommended by \Pbs. Importantly, the highest coefficient is not associated with \V.0, but—as anticipated from the model presented in the previous section, with hybrid algorithms that combine the rankings of \Pbs\ and \V.0. The \MixHalf\ algorithm yields the highest average rating.

Figures \ref{fig:coef_levels_all_O}-(d) and \ref{fig:coef_levels_all_O}-(e) present results for clicks and applications using again the full set of ten recommendations per job seeker. Both outcomes are relatively rare, particularly applications. The lowest rates are observed under \Pbs: 4.2\% for clicks and 0.45\% for applications. Click behavior follows a pattern consistent with the subjective ratings: algorithms (\V.0, \MixQuarter, \MixHalf) yield relatively close and higher estimates, the difference with \Pbs \ being positive 
and statistically significant. The \MixHalf\ algorithm again produces the largest increase, with a click rate 0.64 percentage points higher than \Pbs, an increase of approximately 15\%. Application rates are also higher under \V.0\ and \MixHalf, with the former effect being significant at the 10\% level and representing roughly 16\% increases relative to the \Pbs\ baseline. Since we only observed three hires based on the recommendations made in this first experiment, the results regarding the effects on hiring are not significant and are therefore not reported.

\begin{figure}[!htbp]
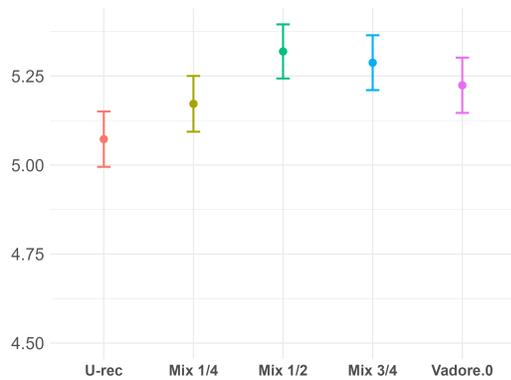
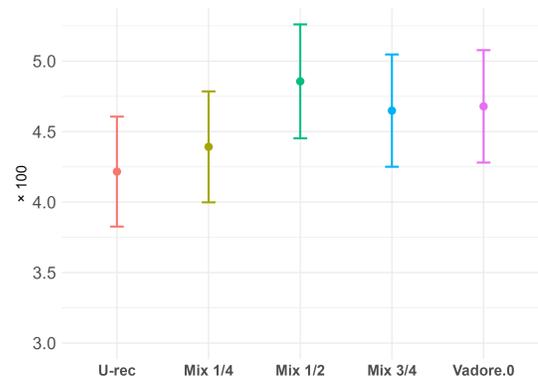
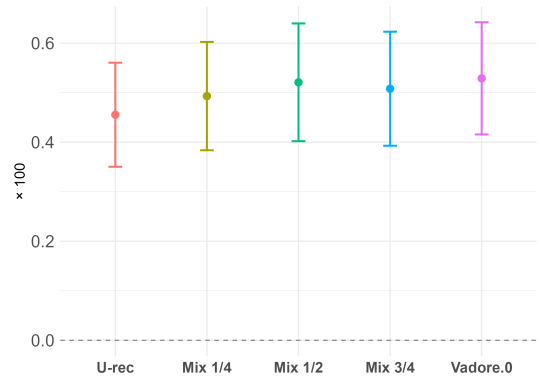

    \centering

    \begin{minipage}{0.43\textwidth}
        \centering
        \includegraphics[width=\linewidth]{images/new/coef_levels_sdr_O_2.png}\\
        {\footnotesize F-test: F=899.6, p=0.000} \\
        {\small (a) $\mathcal{U}$}
    \end{minipage}\hfill
    \begin{minipage}{0.43\textwidth}
        \centering
        \includegraphics[width=\linewidth]{images/new/coef_levels_p_O_2.png}\\
                {\footnotesize F-test: F=1423.1, p=0.000} \\
        {\small (b) $\mathcal{P}$}
    \end{minipage}

    \vspace{0.35cm}

    \begin{minipage}{0.43\textwidth}
        \centering
        \includegraphics[width=\linewidth]{images/new/coef_levels_note_offre_O_2.png}\\
                {\footnotesize F-test: F=6.73, p=0.000} \\
        {\small (c) Rating}
    \end{minipage}\hfill
    \begin{minipage}{0.43\textwidth}
        \centering
        \includegraphics[width=\linewidth]{images/new/coef_levels_clicked.on.ad_O_2.png}\\
                {\footnotesize F-test: F=2.917, p=0.02} \\
        {\small (d) Clicks}
    \end{minipage}

    \vspace{0.35cm}

    \begin{minipage}{0.43\textwidth}
        \centering
        \includegraphics[width=\linewidth]{images/new/coef_levels_candidature_offre_O_2.png}\\
                {\footnotesize F-test: F=1.123, p=0.344} \\
        {\small (e) Applications}
    \end{minipage}\hfill

    \caption{Beta-test~1: Effects of algorithm assignment on vacancy characteristics and job-seeker responses}
    \label{fig:coef_levels_all_O}

    \vspace{0.2cm}
    \begin{minipage}{0.95\textwidth}
        \footnotesize
        \textit{Notes:} Each panel reports coefficient estimates from separate regressions of the indicated outcome on recommendation-treatment indicators, controlling for ad fixed effects. Points represent estimated coefficients and bars denote 95\% confidence intervals. Standard errors are clustered at the job-seeker level. Outcomes $\mathcal{P}$, Clicks, Applications, and Hirings are rescaled as indicated in the figure labels.  F-tests correspond to the F-statistic and p-value from a joint test of significance of all estimated coefficients, with \Pbs\ taken as the reference group.
    \end{minipage}
\end{figure}

\subsection{Designing new algorithms based on the experimental results}

The key finding of beta-test~1 is that the hybrid algorithm \MixHalf, which combines both $\mathcal{P}$ and $\mathcal{U}$ rankings, outperforms the two pure strategies on ratings and clicks. This is the empirical counterpart of Proposition~\ref{prop_V_u0}: since $\Delta(p,U)$ and $\Gamma(p,U)$ are not ordinally equivalent, neither utility-based nor hiring-based rankings alone suffice for welfare-optimal recommendations. These results therefore motivate the exploration of additional recommendation principles consistent with the model.

Building on this insight, we consider two complementary directions for extending the initial recommendation scores.
First, the model highlights application behavior as a key behavioral object.
As shown in Section~\ref{sec:search}, application decisions identify the surplus index $\Delta(p,U)$ that governs job seekers’ choices.
A recommendation score based on predicted applications therefore constitutes a theoretically meaningful benchmark, even though it does not coincide with the welfare-relevant score $\Gamma$.
We accordingly train a RS that predicts applications rather than hirings, which we denote \textsc{Application}.
Beyond its role as a benchmark, this score provides an empirical proxy for perceived job utility, which is not directly observed.

Second, the model implies that welfare-relevant rankings depend not only on hiring probabilities but also on job seekers’ preferences.
To operationalize this dimension within a hiring-based RS, we enrich the original hiring-prediction algorithm by incorporating the preference components $c_{k,i,j}$ used in the $\mathcal{U}$ score (see Equation~\eqref{eq_PES}) as additional predictors.
This modification yields an intermediate version of the algorithm, denoted \V.1.
We then combine these two extensions by introducing the application-based score as an additional input into the hiring-prediction architecture used for \V.1.
This results in the final algorithm, \V.2, which augments hiring predictions with information on both job seekers’ preferences and application behavior.
Figure~\ref{fig:nn1} schematically illustrates how these components are integrated within the algorithm.

Importantly, the algorithms evaluated in the second field experiment also allow us to assess empirically a natural approximation of the welfare-relevant score characterized in Section~\ref{sec:search}.
As shown in the model, the optimal score can be written as $\Gamma(p,U)=p_h(p,U)\times m\!\big(p_a(p,U)\big)$, where the function $m(\cdot)$ is increasing in the application probability under fairly general conditions.
While $p_a(p,U)$ is not directly observed, both application-based predictions and preference-based scores provide empirical proxies that are positively related to job seekers’ propensity to apply.

From this perspective, combining a hiring-based score with additional information on job utility constitutes a conceptually grounded way of enriching $p_h(p,U)$ in the direction suggested by the model.
In particular, mixing the hiring score produced by \V.2\ with the $\mathcal{U}$ score amounts to testing whether reinforcing the utility content of a hiring-based RS improves performance, in line with the monotonicity properties implied by the model.
Although alternative combinations, such as directly mixing $p_h$ with predicted application probabilities, would also be consistent with this logic, the mixtures considered here allow us to assess whether incorporating job utility into hiring-based recommendations moves the algorithm in the direction predicted by the theory.

\subsection{Beta-test~2: evaluating the model-guided algorithm family}
\subsubsection{Experimental design and treatments}
The second field experiment is designed to evaluate the relative performance of the new algorithms developed in response to the results of the first beta-test.
Its objective is twofold: first, to compare these newly designed algorithms against one another; second, to benchmark them against the two reference algorithms based on hiring predictions ($\mathcal{P}$) and stated preferences ($\mathcal{U}$) that motivated the initial analysis.

Among the six algorithms tested in this second experiment, two were already included in the March~2022 study and serve as benchmarks: \Pbs\ and \V.0.\footnote{Although their architecture is unchanged, both algorithms were retrained using more recent data to reflect labor market conditions prevailing at the time of the June~2023 experiment.}
They are complemented by four additional algorithms that reflect different ways of incorporating preference-related information into hiring-based recommendations.
These include the enhanced hiring-based algorithm \V.2, which integrates the design improvements described above; the \textsc{Application} algorithm, which predicts application behavior and serves both as a theoretically meaningful benchmark and as an input into \V.2; and a hybrid ranking that combines \V.2\ and \Pbs, denoted $\text{\textsc{Mix}}\sfrac{1}{2}(\text{\textsc{Vadore}.2})$, constructed in the same spirit as the mixtures tested in the first experiment.
Finally, we include \textsc{XGBoost}, a benchmark algorithm based on gradient boosting that mirrors the objective of \V.0\ (predicting hires) but replaces neural networks with a standard gradient-boosting architecture. Its purpose is to disentangle the effect of the recommendation \emph{objective} from the effect of the \emph{architecture}: since \V.0\ and \textsc{XGBoost} target the same outcome but differ in their ML method, any performance difference between them reflects architectural choices rather than the welfare alignment of the objective. Conversely, differences between \textsc{XGBoost} and the welfare-enriched algorithms (\V.2, \textsc{Application}) reflect objective alignment rather than architecture.

The full set of algorithms evaluated in this second experiment is therefore:\footnote{
In addition to algorithmic variation, the experiment also randomized the type of information displayed alongside recommendations for a subset of algorithms (including \V.2, \textsc{Application}, and $\text{\textsc{Mix}}\sfrac{1}{2}(\text{\textsc{Vadore}.2})$), resulting in multiple display conditions.
As in the first experiment, our baseline analysis abstracts from this dimension and aggregates all display variants for a given algorithm.
Appendix Table~\ref{tab:reduced3} shows that accounting explicitly for display variation does not affect our main conclusions.
} $$\mathcal{A}_2
=
\{
\text{\textsc{Vadore}.0},\ 
\text{\textsc{Vadore}.2},\ 
\text{\textsc{Mix}}\sfrac{1}{2}(\text{\textsc{Vadore}.2}),\ 
\text{\textsc{Application}},\ 
\text{\textsc{XGBoost}},\ 
\mathcal{U}\text{-}\textsc{rec}
\}.$$
The experiment closely replicated the design of the 2022 study. As in the earlier experiment, job seekers were invited by email and enrollment was conditional on clicking a consent button. The survey templates were kept very similar to those previously used. Conducted in June 2023, the experiment invited 150,000 job seekers, of whom 30,973 were enrolled.

The reduced-form analysis follows the same methodology as in the first experiment, as specified in Equation~\eqref{eq:reduced}.  
The results for \V.0\ and \Pbs, presented in Figure~\ref{fig:coef_levels_all_D} (see also Table~\ref{tab:reduced2}, taking the group receiving recommendations from the \Pbs \ as  reference) are consistent with those observed in the previous experiment.\footnote{Preregistered outcomes for this beta-test are: ratings, clicks, applications and hirings on recommended vacancies.} \V.0\ outperforms \Pbs\ in terms of hiring probability (\( \mathcal{P} \)), but, as expected by construction, underperforms in terms of the adequacy score (\( \mathcal{U} \)). It performs marginally better than \Pbs\ in terms of clicks, but not in terms of applications.

The most notable changes are observed in click-through rates and application rates. Relative to the benchmark algorithms \Pbs \ and \V.0, the gains are considerable. Application rates for the \V.2 and \textsc{Application} recommendations are approximately twice as high as those for \Pbs. The improvements are less dramatic for \MixHalf\ and \textsc{XGBoost}, but still meaningful.

Figure~\ref{fig:coef_levels_all_D}-(f) (see also column (6) of  Table \ref{tab:reduced2}) reports hiring outcomes on recommended vacancies. The baseline hiring rate in the reference group is very low: 0.42 \textpertenthousand. We detect a modest signal for the new \MixHalf(\V.2) algorithm, significant at the 10\% level: although the absolute hiring rate remains small, it is roughly three times higher than in the control group \Pbs. Except this group, no statistically significant differences are observed. This is not particularly surprising: despite their differences, all algorithms yield low application rates—below 1\%—and, although the success rates of applications vary substantially across algorithms, they are capped around 7\%.

To inform expectations about the effects of scaling these interventions, the final column of Table \ref{tab:reduced2} reports hiring rates conditional on application, that is, the ratio of hires to applications among recommended vacancies.
\MixHalf(\V.2) and \textsc{XGBoost} nearly double this efficiency relative to the other algorithms, including \textsc{Application}.

These low hiring rates on the recommendations are not unexpected and are consistent with the model. Since application probabilities are small (below 1\% per recommendation), the joint hiring probability $p_h = p \times p_a$ is extremely low. Detecting welfare differences through hiring rates alone would require samples several orders of magnitude larger than a beta-test; the intermediate outcomes such as click-through rates, application rates, and the hiring rate conditional on application, are therefore the relevant margins for comparing recommendation principles at this scale. Accordingly, the primary objective of these experiments is not to estimate employment effects at scale, but to compare recommendation principles and generate the variation needed to estimate the structural model in Section~\ref{sec:sec6}.

\begin{figure}[!htbp]
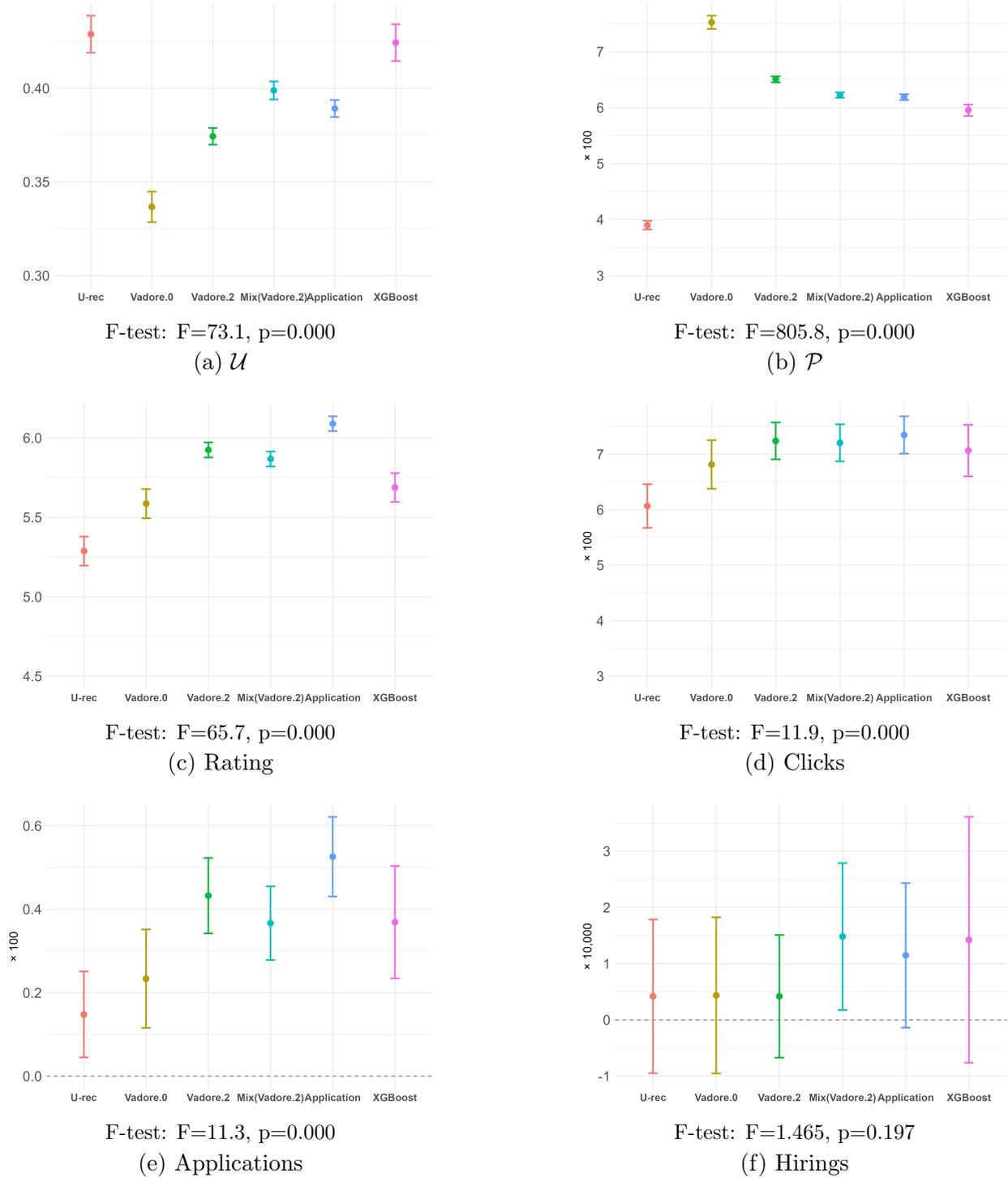

    \centering

    \begin{minipage}{0.43\textwidth}
        \centering
        \includegraphics[width=\linewidth]{images/new/coef_levels_sdr_D_2.png}\\
         {\footnotesize F-test: F=73.1, p=0.000} \\
        {\small (a) $\mathcal{U}$}
    \end{minipage}\hfill
    \begin{minipage}{0.43\textwidth}
        \centering
        \includegraphics[width=\linewidth]{images/new/coef_levels_p_D_2.png}\\
         {\footnotesize F-test: F=805.8, p=0.000} \\
        {\small (b) $\mathcal{P}$}
    \end{minipage}

    \vspace{0.35cm}

    \begin{minipage}{0.43\textwidth}
        \centering
        \includegraphics[width=\linewidth]{images/new/coef_levels_note_offre_D_2.png}\\
         {\footnotesize F-test: F=65.7, p=0.000} \\
        {\small (c) Rating}
    \end{minipage}\hfill
    \begin{minipage}{0.43\textwidth}
        \centering
        \includegraphics[width=\linewidth]{images/new/coef_levels_clicked.on.ad_D_2.png}\\
         {\footnotesize F-test: F=11.9, p=0.000} \\
        {\small (d) Clicks}
    \end{minipage}

    \vspace{0.35cm}

    \begin{minipage}{0.43\textwidth}
        \centering
        \includegraphics[width=\linewidth]{images/new/coef_levels_candidature_offre_D_2.png}\\
         {\footnotesize F-test: F=11.3, p=0.000} \\
        {\small (e) Applications}
    \end{minipage}\hfill
    \begin{minipage}{0.43\textwidth}
        \centering
        \includegraphics[width=\linewidth]{images/new/coef_levels_embauche_D_2.png}\\
         {\footnotesize F-test: F=1.465, p=0.197} \\
        {\small (f) Hirings}
    \end{minipage}
    \caption{Beta-test~2: Effects of algorithm assignment on vacancy characteristics and job-seeker responses}
    \label{fig:coef_levels_all_D}

    \vspace{0.2cm}
    \begin{minipage}{0.95\textwidth}
        \footnotesize
        \textit{Notes:} Each panel reports coefficient estimates from separate regressions of the indicated outcome on recommendation-treatment indicators, controlling for ad fixed effects. Points represent estimated coefficients and bars denote 95\% confidence intervals. Standard errors are clustered at the job-seeker level. Outcomes $\mathcal{P}$, Clicks, Applications, and Hirings are rescaled as indicated in the figure labels. F-tests correspond to the F-statistic and p-value from a joint test of significance of all estimated coefficients, with \Pbs\ taken as the reference group.
    \end{minipage}
\end{figure}

\section{Estimation of the search model}\label{sec:sec6}

This section uses the experimental variation generated by the random assignment of recommendation algorithms to serve two purposes. First, we estimate a structural model of application behavior and test whether the core behavioral predictions of the model, in particular that both utility and hiring probabilities shape application decisions, are consistent with the data. Second, we use the estimated model to construct the welfare metric $\Gamma(p,U)$ from Proposition~\ref{prop:myopic} and compare all tested recommendation rules against this common benchmark. 

\subsection{Preferences estimation using application behavior}\label{sec:empiric}

The model in Section \ref{sec:search} characterizes how job seekers’ application decisions depend on both the perceived utility of vacancies and their probability of success. In this section, we use observations on job seekers' applications to estimate these preferences.  As in \cite{hitsch2010matching}\footnote{See for example their Equation (9).} or \cite{le2021gender}, given the threshold based decision rule in Equation \eqref{eq:appli}, these preferences can be estimated using a discrete choice model.

We primarily rely on the experimental data introduced in Section \ref{sec:betatestso}, but also provide estimates based on observational data in the Appendix as a robustness check. In the experiments, for each job seeker in the sample, we observe:
\begin{itemize}
\item[-] The group corresponding to the algorithm used to generate the 10 recommendations: either one of the mixtures used in Experiment 1, or one of the six algorithms used in  Experiment 2. We denote the associated group dummy variables by $T_i$;
\item[-] The list of 10 vacancies selected by the corresponding algorithm;
\item[-] The scores $\mathcal{U}_{i,j}$ and success probabilities $\mathcal{P}_{i,j}$; 
\item[-] The clicks and applications for each of the 10 vacancies.
\end{itemize}

We closely follow Equation \eqref{eq:identification} in Proposition \ref{prop:identification}. We estimate this structural model of  application behavior by instrumenting $\mathcal{U}_{i,j}$ and $1/\mathcal{P}_{i,j}$ using the assignment variables $T_i$. Our binary choice model takes the following general form:
\begin{align}
\mathbb{P}\left( C_{i,j} =1 \mid W_{i,j},T_i,c_i \right) & = \Lambda\left(\alpha \mathcal{U}_{i,j}- \frac{\beta}{\mathcal{P}_{i,j}}+\delta B_{2,i} +\gamma+c_i \right), \label{eq:app2}
\end{align}
where $B_{2,i}\in\{0,1\}$ is an indicator for participation in Experiment~2, and the vector $W_{i,j}=(\mathcal{U}_{i,j},\, 1/\mathcal{P}_{i,j},\, B_{2,i})$ collects the utility score, the inverse hiring probability, and this experiment indicator.
This specification is in the spirit of \cite{hitsch2010matching,chen2023reducing}, who estimate similar equations in the context of marriage markets. However, we leverage here the randomization performed in our experiments. Note that the term $-\beta/\mathcal{P}_{i,j}$ enters with a negative sign in \eqref{eq:app2}, so $\beta>0$ implies a positive effect of $\mathcal{P}$ on the application probability, as expected. 
The individual effect $c_i$ captures heterogeneity across job seekers. As discussed  in Section~\ref{subsubsec:identify_prop}, this  effect captures systematic individual specific differences between the observed index $\mathcal{U}$ and the actual utility gain relative to a reservation value.

The functional form used in \eqref{eq:app2} corresponds to a logit model. However, logit models with fixed effects are not easily compatible with the transparent identification structure provided by random assignment. We would also like to account for potential measurement error in $W_{i,j}$: following the standard model $W_{i,j} = W^*_{i,j} + e_{i,j}$, where $W^*_{i,j}$ is the true latent regressor and $e_{i,j}$ is error, the instruments $T_i$ allow consistent estimation under classical measurement error assumptions.

Given the low application probability (3\textperthousand), we adopt the approximation $\Lambda(x) \approx \exp(x)$ and estimate a Poisson IV model. Under this specification, the conditional expectation becomes:
\begin{equation}\label{eq:struct}
\mathbb{P}\left(C_{i,j} \mid W_{i,j}, T_i, c_i, e_{i,j}\right) = \exp(W_{i,j}^{\top}\theta + \gamma + \underbrace{c_i - e_{i,j}\theta}_{\mu_{i,j}}),
\end{equation}
where, from Equation \eqref{eq:app2}, $\theta = (\alpha, -\beta, \delta)$. The control function approach (see \citet{wooldridge2010econometric}) can be used to address endogeneity.\footnote{\label{foot:CF}In a nutshell, the control function method in this context works as follows. The potentially endogenous regressors $W_{i,j}$ are linked to instruments $T_i$ through a first-stage equation:
\begin{equation}\label{eq:1stage}
W_{i,j} = T_i\Pi + v_{i,j}
\end{equation}
and the structural error term $\mu_{i,j}$ is modeled as depending on $v_{i,j}$ but not on $T_i$ (exclusion restriction):
\begin{equation}\label{eq:excl}
\mu_{i,j} = v_{i,j}^{\top}\rho + \nu_{i,j}
\end{equation}
with $\nu_{i,j}$ independent of $v_{i,j}$. Substituting \eqref{eq:excl} into \eqref{eq:struct} and integrating over the distribution of $\nu_{i,j}$ yields the control function moment condition:
\begin{equation}\label{eq:CF}
\mathbb{E}(C_{i,j} - \exp(W_{i,j}^{\top}\theta + \tilde{\gamma} + v_{i,j}^{\top}\rho) \mid W_{i,j}, T_i ) = 0.
\end{equation}

In practice,  the first-stage model \eqref{eq:1stage} is estimated, from which the residuals $\hat{v}_{i,j}$ are computed, and substituted  into Equation \eqref{eq:CF}. Notably, this approach also provides an estimate of $\rho$, which captures the correlation between the structural error term $\mu_{i,j}$ and the endogenous variables $W_{i,j}$.}

An alternative approach is to use a first-order Taylor expansion of $\Lambda(x)$ around the sample mean $\overline{x}$:  
$\Lambda(x) \approx \Lambda(\overline{x}) + \Lambda'(\overline{x})(x - \overline{x}) = \tilde{\Lambda} + \tilde{\Lambda}' x$. This leads to a linearized model of the form:
\begin{equation}\label{eq:structapp}
\mathbb{E}(C_{i,j} \mid W_{i,j}, T_i, \mu_{i,j}) \approx \tilde{\Lambda} + \tilde{\Lambda}'(W_{i,j}^{\top}\theta + \gamma + \mu_{i,j}) = W_{i,j}^{\top}\tilde{\theta} + \tilde{\gamma} + \tilde{\mu}_{i,j}.
\end{equation}
This equation can be estimated either by instrumental variables or using the control function approach described above. In the linear context both methods yield exactly the same results. Note that, due to the linear approximation, coefficients are identified up to a scaling factor, which is not problematic for our purposes. We are primarily interested in the sign and statistical significance of the coefficients on $\mathcal{U}$ and $1/\mathcal{P}$, and the ratio of the two coefficients (see Proposition \ref{prop:identification}). 


The results are presented in Table~\ref{tab:IVPoisson}. Columns (1) and (2) report estimates for the linear approximation in Equation \eqref{eq:structapp}. Column (1) reports results using OLS, ignoring potential endogeneity and column (2) the estimates when instead using random RS-assignment as instruments. Column (3) reports the estimates for the exponential model of Equation \eqref{eq:struct} using the control function method. Column (4) provides estimates of the Average Marginal Effects and can thus be compared more easily to columns (1) and (2). 

All columns provide evidence consistent with the theoretical model. The two key variables of the model, $\mathcal{U}$ and $1/\mathcal{P}$, are both highly significant, with the coefficient on $1/\mathcal{P}$ being, as expected, negative, indicating a positive relationship between the probability of success and the decision to click or apply. Comparing columns (1) and (2) shows that the coefficient of $\mathcal{U}$ is not changed when using instrumental variables rather than OLS but that the coefficient of $1/\mathcal{P}$ is reduced by almost 40\% in column (1) ignoring endogeneity compared to column (2). Indeed, when looking at the bottom panel of the table, reporting the $\rho$'s of the Control Function method (see footnote \ref{foot:CF}), we observe that the covariance between the residuals and the dependent variables is non-significant for $\mathcal{U}$, but significant and positive for $1/\mathcal{P}$.\footnote{If we use the benchmark error in variable model $y=a+bx^*+u$, a measurement model $x=x^*+e$ and an instrumental first stage regression $x=\alpha z+\varepsilon+e$ with $u$, $e$ and $\varepsilon$ uncorrelated and $u$ and $\varepsilon+e$ uncorrelated with $z$; if  we assume errors in variable is the only source of endogeneity; then, in equation \ref{eq:excl}, $\mu$ is equal to $u-be$ and $v$ to $\varepsilon+e$. Thus $\rho=-b\sigma^2_e/(\sigma^2_\varepsilon+\sigma^2_e)$. This would lead to a share of variance of the error $\sigma^2_e/(\sigma^2_\varepsilon+\sigma^2_e)$ of $0.008/0.014= 0.57$ for column (2) and $0.029/0.068=0.43 $ for column (3).}  It is also worth highlighting the consistency of the results. As is standard in discrete choice models, the key quantity is the ratio of the coefficients. When considering the ratio between the coefficient on $1/\mathcal{P}$ and that on $\mathcal{U}$, we obtain very similar values across columns: 0.016 for column (2)  and 0.024 for columns (3) and (4).

As stressed above, these results are especially important because they support the interpretation outlined at the beginning of Section~\ref{sec:search}. They are consistent with viewing the score $\mathcal{U}_{i,j}$ as a signal of the utility gap $U-U^*$ and the probability $\mathcal{P}_{i,j}$ as a signal of the likelihood of success of an application. Taken together, these findings reinforce the idea that both dimensions are relevant inputs for the design of welfare-improving RSs.

\begin{table}[ht]
	\centering
	\caption{Estimation of the application model on job postings, pooling both experiments}
	\label{tab:IVPoisson}
    \scalebox{0.95}{
\begin{threeparttable}
    \begin{tabular}{lcccc}
		\toprule
	&  LPM    & LPM, CF 	& \multicolumn{2}{c}{Poisson CF} \\
  & Coef (x100)  & 	Coef (x100)	&  Coef. & AME	(x100) \\
   &  (1)    &  (2) 	&  (3) &  (4)  \\
		\midrule
        $\mathcal{U}$ & 0.832***	&	0.857***    & 2.824**	 & 1.318* \\
		&  (0.048) &(0.300) 	 &  (1.355) & (0.641) \\
$1/\mathcal{P}$  & -0.0081*** &	-0.014***  &	-0.068*** & -0.032***\\
		& (6.06e-04) &(0.003) 	& (0.015)	& (0.007)\\
        Constant 	&  -0.100 &	-0.020	 &  -6.615*** &	\\
		 & (0.042) & (0.142) 	   & (0.596) & 	\\
        \midrule
        Controls betatest& X &  X & X & \\
		Controls Ad & X &  X & X& \\
   \midrule
     &  & \multicolumn{2}{c}{ $\rho$ for CF}    & \\
     &  & Coef (x100)& Coef.&\\
    \cmidrule{3-4}
   $\mathcal{U}$ &  & 0.003  & -1.173 & \\
                 &  & (0.303)   & (1.351) & \\
  $1/\mathcal{P}$ &  & 0.008**   & 0.029** & \\
                  &  & (0.003)  & (0.013) & \\
		\bottomrule
	\end{tabular}
    \begin{tablenotes}
 \footnotesize
 \item \textit{Notes}: 499,200 observations, with 49,920 individuals each receiving 10 recommendations, across the two experiments. Equation \eqref{eq:app2} is estimated modeling applications using a fixed-effect logit model approximated either as a linear model (columns (1) and (2)) or a Poisson model (columns (3) and (4)). Column (1): OLS; column (2): IV implemented using the control function approach.
  Column (3) ``Poisson CF'' estimation with multiplicative errors is performed using a control function approach. The sample pools both experiments. Instruments are the dummy variables for the groups job seekers have been randomly assigned to.  Column (4) ``Average Marginal'' Effect associated with estimates from column (3). Standard errors are clustered at the job seeker level. The lower panel provides estimates of the covariance between the residuals and the dependent variable, as provided by the Control Function method (see footnote \ref{foot:CF}). Significance levels: $<1\%: {}^{***}$, $<5\%: {}^{**}$, $<10\%: {}^{*}$.
 \end{tablenotes}
 \end{threeparttable}
}
\end{table}

As a robustness check, we also examine an alternative specification using observational data. We rely on data from the monitoring of job seekers’ search activity. All job postings on which a job seeker has clicked are identified and stored, along with subsequent actions—particularly whether an application was submitted. For each of these postings, we compute the indicators $\mathcal{U}$ and $\mathcal{P}$. We then estimate the model directly using this observational dataset. Table~\ref{tab:logit} in Appendix~\ref{app:modelobservational} presents these results, which are remarkably consistent with those in Table~\ref{tab:IVPoisson}. The main variables are all significant with the expected sign and, in addition, the coefficients from these estimations provide similar orders of magnitude. More precisely, the ratio of the two coefficients ranges from 0.018 to 0.025, very close to the previous values.

\subsection{Comparison of different RSs}\label{sec:comparisonRS}

We use the data from our experiments, together with the model developed in Section~\ref{sec:search}, to compare the performance of different recommendation systems (RSs).

Equation~\eqref{eq:gamma_plus} shows that the welfare-relevant score for a vacancy is given by the product of the probability of a successful application, $p$, and the function $\sigma \log\!\left(1+e^{\Delta(p,U)/\sigma}\right)$. The estimates from the previous section allow us to identify the surplus function $\Delta(p,U)$. In principle, one could therefore reconstruct the model-implied optimal score for each vacancy directly from the baseline utility and hiring scores.

We do not pursue this approach. Doing so would mechanically anchor the analysis on the two initial scores $\mathcal{U}$ and $\mathcal{P}$, and would prevent us from exploiting the additional information revealed by alternative RSs. Instead, we adopt the perspective that each RS provides a noisy signal about the two underlying components that matter for job seekers' welfare: the probability of being hired conditional on applying and the surplus from applying. Our objective is therefore to enrich the proxies for both dimensions by exploiting the full set of algorithms tested in the experiments.

To implement this strategy, we rely on data from Experiment~2. We briefly describe how we proceed and provide further details in Appendix \ref{app:estimation_scores}. For each enrolled job seeker, we observe the list of ten recommended vacancies and, for each of these vacancies, the scores produced by all algorithms considered in the second beta-test (with the exception of XGBoost). By combining these scores with observed application decisions and subsequent hiring outcomes on recommended vacancies, we estimate the components of the welfare-relevant score.

We proceed in four steps, detailed in Appendix~\ref{app:estimation_scores}.

\textit{Step~1. Hiring conditional on applying.} We identify the probability of a successful application, $p$. Restricting attention to vacancies to which job seekers actually applied, we estimate a logistic regression of the hire outcome on the scores produced by the different algorithms (\V.0, \V.2, \MixHalf(\V.2), \textsc{Application}, \Pbs). This yields predicted hiring probabilities $p_{i,j}$ for each job seeker--vacancy pair in the recommendation lists.

\textit{Step~2. Applications on recommended vacancies.} We identify the surplus component $\Delta$, or equivalently the application probability $p_a$. We estimate a logistic model for the probability that a recommended vacancy receives an application, again as a function of the same set of algorithmic scores. This yields predicted application probabilities $p_{a,i,j}$.\footnote{The results of these two first steps are reported in columns (1) and (2) of Table \ref{tab:calibration2}. For the hiring probability (column (1)), only the coefficient for the  \V.2 score is significant. This supports the idea that it effectively incorporates the  \V.0 score in predicting hires. Moreover, the fact that the  \textsc{Application} and  \Pbs\ scores do not predict hires once the  \V.2 score is accounted for also strengthens the point that these scores contain information distinct from the hiring probability. For applications (column (2)), the  \textsc{Application} score is as expected the most important predictor, even though the  \V.2 and \Pbs\ scores are also significant.}

\textit{Step~3. Reconstructing $\Gamma$ and related objects.} Combining these two sets of predictions, we reconstruct for each job seeker $i$ and each recommended vacancy $j\in\{1,\ldots,10\}$ the composite score $\Gamma_{i,j}$ as well as its components $p_{i,j}$, $p_{a,i,j}$, and $p_{h,i,j}=p_{i,j}\times p_{a,i,j}$.

\textit{Step~4. Counterfactual optimal recommendations.} We use the estimated $\Gamma$ function to construct a counterfactual benchmark. For each job seeker, we identify the set of vacancies that would have been recommended had it been possible to rank all available vacancies at the time of the experiment using the welfare-relevant score. This yields, for each job seeker, counterfactual scores $\Gamma_{i,j}^*$ for the (counterfactual) top ten vacancies.

We then evaluate each RS using the performance measures $\mu_{i,j}\in\{p_{i,j},p_{a,i,j},p_{h,i,j},\Gamma_{i,j},\Gamma_{i,j}^*-\Gamma_{i,j}\}$ and compute their averages within each experimental group assigned to algorithm $a\in \mathcal{A}_2$.

Since these estimates are subsequently used to construct predicted scores, we adopt a split-sample approach. One randomly selected half of the data, $S_1$, is used to estimate $p_a$, $p$, $p_h$, and $\Gamma$. The remaining half, $S_2$, is then used to compute average performance measures by experimental group:
\[
\overline{\mu_{i,j}}^{\,i\in S_2,\; a_i=a}.
\]
To account for the uncertainty introduced by sample splitting, valid confidence intervals are constructed by taking the medians of the upper and lower bounds of the confidence intervals across multiple splits \citep[see][]{chernozhukov2018generic}.

Figure \ref{fig:coef_levels_all_gamma} summarizes the performance of the different RSs along four dimensions: the probability of being hired conditional on applying, $p$ (panel (a)); the probability of applying, $p_a$ (panel (b)); expected utility prior to application, $\Gamma$ (panel (c)); the joint probability of applying and being hired, $p_h=p \times p_a$ (panel (d)).\footnote{Table \ref{tab:evaluation1} provides the associated estimates.} The figures also display the average value of the vacancies that would have been recommended by the optimal RS, as measured by the metric used in the figure (depending on the figure, either $p$, $p_a$, $p_h$ or $\Gamma$). 
This benchmark represents the performance that would be attained by the $\Gamma$-optimal recommendation set. It allows us to directly compare the performance of each recommendation rule to that of the optimal RS, using the metric implicit to each panel. Accordingly, the gap between the performance of a given RS and this reference captures how far the RS is from the optimal benchmark along each dimension ($p$ in panel (a), $p_a$ in panel (b), $\Gamma$ in panel (c), and $p_h$ in panel (d)).
While this comparison is informative across all panels, it is particularly meaningful in panel (c), which relies on the $\Gamma$ metric: only for $\Gamma$ does the gap admit a direct welfare interpretation as a loss relative to the optimum. This panel therefore provides a direct assessment not only of how well each algorithm performs under the appropriate objective, but also of how close it comes to the optimal benchmark.

Panel (a) shows that all algorithms significantly increase the probability of  a hire conditional on application $p$, relative to the baseline \Pbs\ system. Even the algorithm with the smallest gain, \textsc{XGBoost}, more than doubles this probability. The best-performing algorithm on this dimension is \V.2, for which the probability of hire is 3.2 times higher than under \Pbs. Despite these relative improvements, it is important to note that absolute success rates remain very low. Even for the best algorithm (\V.2), the conditional probability of hire on a recommended job posting remains below 1.5\%. 

Panel (b) presents similarly strong gains in the probability of application $p_a$. Again, relative to the \Pbs\ benchmark, improvements are substantial. The smallest gain is observed for \V.0, which still increases the probability of application by 75\%. The \textsc{Application} algorithm achieves the highest impact, increasing the probability by a factor of 1.7, with \V.2 not far behind at 1.4. This can be seen as a consistency check given as the main predictor in our estimation of $p_a$ is the \textsc{Application}  score (see Table \ref{tab:calibration2}).   Yet, absolute levels remain modest: even the best-performing algorithm yields an application probability of only 0.4\%.

Panel (d) shows the unconditional probability of hire $p_h$. The pattern mirrors that of the previous panels: the new RSs all substantially outperform \Pbs, with \V.2 again providing the largest gain—tripling the likelihood of successful matches. The \textsc{Application} algorithm performs nearly as well. However, the absolute probability of a match remains extremely low, around 1 in 10,000.

Finally, panel (c) reports the optimal score $\Gamma$ which ranks the algorithms according to their expected value from the job seeker’s perspective. Two performance tiers emerge clearly: \V.2 and \textsc{Application} form a top tier, generating large gains relative to the benchmark \Pbs, while \V.0, \MixHalf(\V.2), and \textsc{XGBoost} constitute a second tier. As shown in Section~\ref{subsubsec:why_ph}, $\Gamma\approx p_h\times(1+p_a/2)$ for small $p_a$, so the two metrics are nearly identical in our empirically low-application setting. 
Overall, the strong performance of \V.2 and \textsc{Application} highlights the importance of explicitly modeling application behavior when deriving optimal recommendations using the objective $\Gamma$. At the same time, these gains should be interpreted cautiously: in absolute terms, application and hiring rates for recommended vacancies remain low across all systems.

Across all panels, recommendations generated by the optimal algorithm strictly dominate those produced by any alternative algorithm. This ranking holds regardless of the metric considered. Among the feasible algorithms, \V.2 consistently performs closest to the optimal benchmark, with one exception: for the application probability metric, the algorithm based solely on $p_a$ is the closest and in fact delivers nearly identical values. The most informative comparison is that based on the $\Gamma$ metric. Along this dimension, the performance achieved by \V.2 is very close to that of the optimal algorithm, implying that the effective loss from using \V.2 instead of the optimal recommendation rule is  small.

\begin{figure}[!htbp]
    \centering

    \begin{minipage}{0.45\textwidth}
        \centering
        \includegraphics[width=\linewidth]{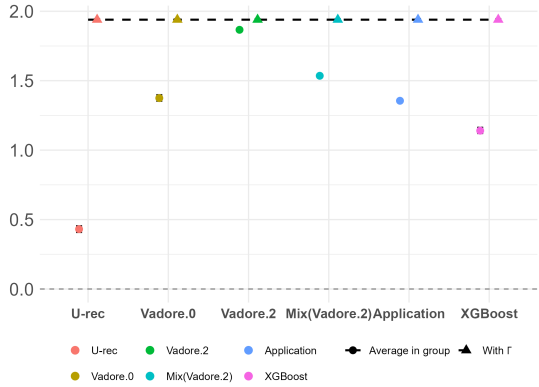}\\
        {\small (a) $p$ ($\times 100$)}
    \end{minipage}\hfill
    \begin{minipage}{0.45\textwidth}
        \centering
        \includegraphics[width=\linewidth]{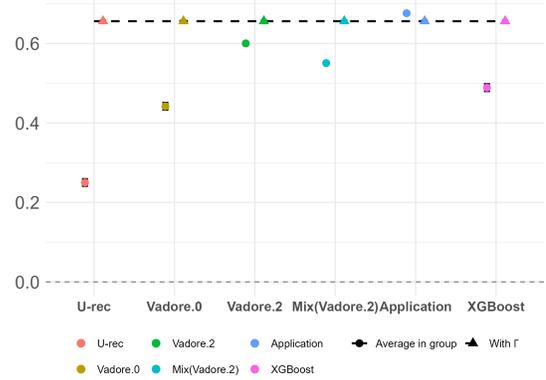}\\
        {\small (b) $p_a$ ($\times 100$)}
    \end{minipage}

    \vspace{0.35cm}

    \begin{minipage}{0.45\textwidth}
        \centering
        \includegraphics[width=\linewidth]{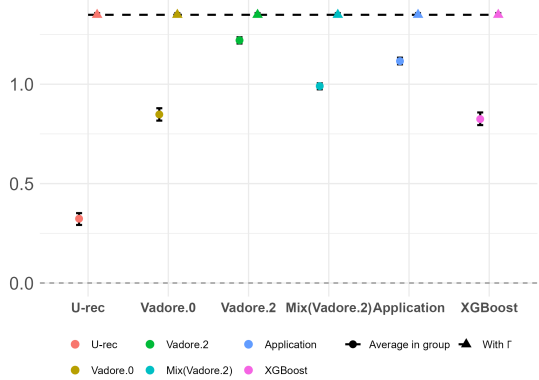}\\
        {\small (c) $\Gamma$ ($\times 10,000$)}
    \end{minipage}\hfill
    \begin{minipage}{0.45\textwidth}
        \centering
        \includegraphics[width=\linewidth]{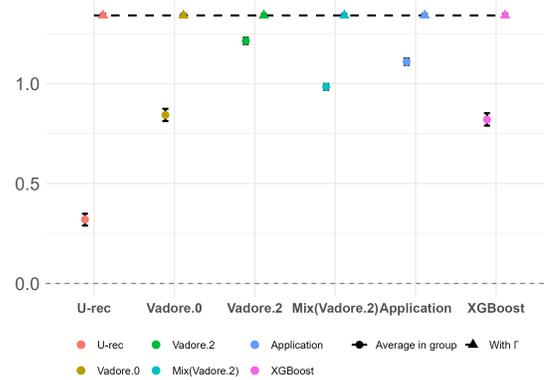}\\
        {\small (d) $p_h$ ($\times 10,000$)}
    \end{minipage}

    \vspace{0.35cm}

    \caption{Second experiment - Reduced-Form estimation - impact of the assignment on $\Gamma$ for the 10 first ads}
    
    \label{fig:coef_levels_all_gamma}

    \vspace{0.2cm}
    \begin{minipage}{0.95\textwidth}
        \footnotesize
        \textit{Notes:} Each panel reports coefficient estimates from separate regressions of the indicated outcome on recommendation-treatment indicators, controlling for ad fixed effects. Points represent estimated coefficients and bars denote 90\% confidence intervals. Standard errors are clustered at the job-seeker level. Outcomes $\mathcal{P}$, Clicks, Applications, and Hirings are rescaled as indicated in the figure labels. 
    \end{minipage}
\end{figure}

\section{Conclusion}\label{sec:conclusion}

In this paper, we study the design of job recommendation systems (RSs) by combining economic modeling, machine learning, and field experimentation. We develop a job-search framework in which vacancies are lotteries characterized by a hiring probability $p$ and payoff $U$. The model shows why recommendation rules based solely on proxies for $p$, proxies for $U$, or observed application behavior are incomplete, and that welfare-relevant rankings must combine both dimensions into an expected-surplus index. It also highlights an inversion problem: observed application choices reveal whether applying is privately profitable, but not the magnitude of the expected gains relevant for welfare.

We bring this framework to the data through collaboration with the French PES. Starting from two operational RSs, one reflecting stated preferences and one optimized to predict hiring outcomes, we conduct two randomized field experiments conceived as beta tests within a learning cycle. Guided by the model and experimental feedback, this process leads to an approximation of the welfare-optimal RS (\V.2), whose performance in terms of clicks and applications substantially exceeds that of the initial systems.

Beyond reduced-form performance, the experiments generate exogenous variation in the characteristics of recommended vacancies, which we use to estimate a structural model of application behavior. The estimates support the key behavioral mechanisms emphasized in the theory and quantify the relative importance of hiring probabilities and utility in job seekers’ decisions. Combined with the model’s structure, the experimental data allow us to construct an empirically grounded benchmark for welfare-relevant rankings and compare all tested recommendation rules to it. We find that the approximation \V.2 of the welfare-optimal algorithm delivers large gains relative to the initial systems and performs close to the model-implied optimum.

A broader lesson from our analysis concerns the performance of simple heuristic rankings. While the joint application-and-hiring probability $p_h$ is not welfare-optimal in theory, it emerges as a strong empirical benchmark in our setting. This result is structural rather than algorithmic: application probabilities are empirically small and remain so even under recommendation rules designed to stimulate applications. In this regime, the welfare-relevant index is well approximated by $p \times p_a$, explaining why hiring-based rankings dominate alternative heuristics. By contrast, rankings based solely on application behavior are theoretically fragile. Their reasonable performance in our setting may not generalize to environments where application behavior responds more strongly to recommendations. When recommendations substantially affect application decisions, the gap between behavior-based and welfare-based rankings may be much larger.

More broadly, our results suggest a general lesson for the design of algorithmic intermediation in labor markets. Machine-learning tools can substantially improve matching outcomes, but only when embedded in a framework that defines the economic objective and disciplines behavioral assumptions with experimental evidence. Without such a framework, RSs optimized for observable behaviors may perform well on predictive metrics yet remain misaligned with welfare-relevant outcomes.

Our findings suggest several directions for future research. First, improving RS performance requires better prediction of the primitives $p$ and $U$, especially job seekers’ utility for different jobs. Second, scaling recommendations may generate congestion and general-equilibrium effects \citep[see, e.g.,][]{gee2019more,altmann2022direct,bied2021congestion,SuBayoumiJoachims2022,roland2022,behaghel2024potential,lehmann2023}, calling for market-level recommendation rules, for example based on optimal transport \citep{bied2021congestion}. Third, fairness and inequality concerns remain central in labor-market RSs \citep{zhang2022understanding,bied2022fairness}, and post-processing approaches with fairness constraints appear promising. Fourth, incorporating behavioral frictions may further improve recommendation design, for instance by combining RSs with elicited beliefs to study how recommendations shape expectations and search strategies \citep[see, e.g.,][]{almaas2023economics,anticipations}. Finally, an important extension is to develop RSs that also benefit firms \citep{Horton2017,algan2020active},  moving toward two-sided systems that jointly model applications and hiring decisions.

\FloatBarrier

\newpage
\clearpage

\newpage
\clearpage

\FloatBarrier

\bibliographystyle{chicago}
\bibliography{references}

\newpage

\renewcommand{\thesection}{\Alph{section}}

\renewcommand\thefigure{\Alph{section}\arabic{figure}}
\renewcommand{\thetable}{\Alph{section}\arabic{table}}
\renewcommand{\thefigure}{\Alph{section}\arabic{figure}}
\renewcommand{\thesubfigure}{\Alph{section}\arabic{figure}}
\renewcommand{\theequation}{\Alph{section}\arabic{equation}}
\setcounter{section}{0}
\setcounter{table}{0}
\setcounter{figure}{0}
\setcounter{equation}{0}

\appendix

\newpage

\pagenumbering{arabic}

\hspace{3.8cm} \textbf{ \Large ONLINE APPENDIX}

\section{Non-myopic job seekers}\label{sec:nonmyopic}
\subsection{Derivation of RSs for non-myopic job seekers} \label{sec:nonmyopicth}

The analysis in Section \ref{subsec:optimal} focuses on the discounted value when using an RS for myopic job seekers. We analyze here the case of non-myopic job seekers. We first address the question of the discounted value when the optimal RS assuming a myopic job seeker is introduced. We simply correct the discounted value. We then show how to adapt the set of recommendations to the case of non-myopic job seekers.

The main change compared to the myopic case is that non-myopic job seekers adjust their reservation utility and their decision rule to apply. In such a case, the discounted value, that we denote by $rV^{adj}_{1}$ for an unemployed job seeker exposed to recommendations, is the solution of an equation that takes the following form:

\begin{equation}\label{eq:adj}
	rV^{adj}_{1}(S,s,\alpha_1)= u(b) +  \frac{\alpha_1}{r+q}\frac{\mathbb{E}\!\left( \Gamma^{adj}(p,U)\indic\{S>\overline{q}_S(s)\}\right)}{s},
\end{equation}

where $\Gamma^{adj}(p,U)$ is based on $rV^{adj}_{1}$ rather than on $rV_{0}$ as in the myopic case. More precisely:
\begin{equation}\label{eq:gammaadj}
\Gamma^{adj}(p,U)=p\sigma \log\left(1+e^{\Delta^{adj}(p,U)/\sigma}\right),
\end{equation}
using $\Delta^{adj}(p,U)=U-U^*_1(p)$ and 
$U_1^*(p) =rV^{adj}_{1} - \overline{R} + (\overline{k} + \overline{R})/p$.

The changes in the discounted value following the introduction of an RS obtained when assuming a myopic job seeker are also informative about the changes in the discounted value for non-myopic job seekers. Consider an RS (in the case $\alpha_1=\alpha_0$) that selects the top $s\%$ jobs according to an index $S$, and call $\delta^m(S,s)=rV^m_{1}(S,s,\alpha_0)-rV_{u,0}$ the change in the discounted value of an unemployed job seeker compared to the discounted values given by the equations \eqref{eq:myopic} and \eqref{eq:Bellman1_pi}. Let $\delta^{adj}(S,s)=rV^{adj}_{1}(S,s,\alpha_0)-rV_{u,0}$ be the change in the discounted value of a non-myopic unemployed job seeker.
We define the expected hiring rate on recommended vacancies for a myopic job seeker $\theta^m(S,s)=\mathbb{E}\!\left(p F\left(\Delta(p,U)/\sigma\right)\indic\{S>\overline{q}_S(s)\}\right)\big/s$ and for a non-myopic job seeker (see Equation \eqref{eq:gammaadj}) $\theta^{adj}(S,s)= \mathbb{E}\!\left(p F\left(\Delta^{adj} (p,U)/\sigma\right)\indic\{S>\overline{q}_S(s)\}\right)\big/s$ .

\begin{prop}\label{prop_nonmyopic}
    The changes in the discounted value for myopic and non-myopic job seekers have the same sign. Focusing on RSs for which $\delta^m(S,s)$ is positive: 
   \begin{itemize}
    \item The hiring rate on recommended vacancies for myopic job seekers is larger than for non-myopic job seekers:
   $$\theta^m(S,s)\geq\theta^{adj}(S,s)$$
   \item The change in the discounted value for a non myopic unemployed job seeker $\delta^{adj}(S,s)$ can be bracketed by
    \begin{equation}\label{eq:myop_non_myop1}
  \frac{r+q}{r+q+\alpha_0\theta^{m}(S,s)}\delta^m(S,s) \leq   \delta^{adj}(S,s)\leq\frac{r+q}{r+q+\alpha_0\theta^{adj}(S,s)}\delta^m(S,s),
     \end{equation}
\item An order of magnitude of the change is given by
     \begin{equation}\label{eq:myop_non_myop2}
     \delta^{adj}(S,s)\approx\frac{r+q}{r+q+\alpha_0\theta^m(S,s)}\delta^m(S,s).
     \end{equation}
 \item Non-myopic job seekers  are more selective: their reservation utility for a job they are sure to get increases:
$$U_1^*(1) =U_0^*(1)+\delta^{adj}(S,s).$$
\end{itemize}
\end{prop}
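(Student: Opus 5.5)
The plan is to treat $rV^{adj}_1$ as the fixed point of a scalar equation and compare it with the myopic value through the mean value theorem. Write $x=rV$ and let $G(x)$ denote the right-hand side of \eqref{eq:adj} with $\alpha_1=\alpha_0$, viewed as a function of the continuation value $x$ entering the reservation utility $U^*(p)=x-\overline{R}+(\overline{k}+\overline{R})/p$. Then $\Delta=U-U^*(p)$ depends on $x$ with $\partial\Delta/\partial x=-1$. Since $\frac{d}{dz}\sigma\log(1+e^{z/\sigma})=F(z/\sigma)$, we get
\[
G'(x)=-\frac{\alpha_0}{r+q}\,\frac{\mathbb{E}\big(pF(\Delta_x/\sigma)\indic\{S>\overline{q}_S(s)\}\big)}{s}\le 0 .
\]
The definitions give $G(rV_0)=rV_1^m$ and $G(rV_1^{adj})=rV_1^{adj}$. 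I will also record that $G'$ is increasing in $x$, because $F$ is increasing and $\Delta_x$ is decreasing in $x$, so $G$ is convex.

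For the sign statement, consider $H(x)=x-G(x)$. It is strictly increasing, since $H'=1-G'\ge1$. We have $H(rV_0)=rV_0-rV_1^m=-\delta^m$ and $H(rV_1^{adj})=0$. Hence $rV_1^{adj}-rV_0$ has the sign of $\delta^m$, which gives equal signs. This step also establishes existence and uniqueness of the fixed point.

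Assume now $\delta^m>0$, so that $rV_1^{adj}>rV_0$. Then $\Delta^{adj}=\Delta-\delta^{adj}<\Delta$ pointwise, and monotonicity of $F$ gives $\theta^{adj}\le\theta^m$, which is the first bullet. For the bracket, apply the mean value theorem:
\[
\delta^{adj}=rV_1^{adj}-rV_0=G(rV_1^{adj})-G(rV_0)+\delta^m=G'(\xi)\,\delta^{adj}+\delta^m
\]
for some $\xi\in(rV_0,rV_1^{adj})$. Therefore $\delta^{adj}=\delta^m/(1-G'(\xi))$. Note that $-G'(rV_0)=\alpha_0\theta^m/(r+q)$ and $-G'(rV_1^{adj})=\alpha_0\theta^{adj}/(r+q)$. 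Monotonicity of $G'$ places $1-G'(\xi)$ between these two endpoint values, which yields \eqref{eq:myop_non_myop1} after multiplying through by $r+q$.

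The approximation \eqref{eq:myop_non_myop2} is a first-order expansion at $rV_0$: take $\xi\approx rV_0$, or note that the bracket collapses when $\theta^{adj}\approx\theta^m$, which holds to first order in $\delta^{adj}$. The last bullet is immediate, since $U_1^*(1)-U_0^*(1)=rV_1^{adj}-rV_0=\delta^{adj}$ by the reservation-utility formula. I expect the main obstacle to be the bookkeeping in the bracket step. The orientation of the inequalities must come out right, with the lower bound using $\theta^m$ and the upper bound using $\theta^{adj}$, and one must confirm that $\delta^{adj}$, not $rV_1^{adj}-rV_1^m$, is the quantity on which the mean value argument acts.
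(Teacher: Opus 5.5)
Your proposal is correct and follows essentially the same route as the paper. The paper likewise writes $rV_1^{adj}$ as the fixed point $z(s)=f_S(z(s),s)$ and uses the mean value theorem to get $\delta^{adj}=\delta^m/(1-f_S'(z'))$; it then uses convexity of $f_S$ in $z$ to place $1-f_S'(z')$ between $1+\alpha_0\theta^{adj}/(r+q)$ and $1+\alpha_0\theta^m/(r+q)$. Your variations are equivalent: you get the sign, together with existence and uniqueness, from the strictly increasing map $x\mapsto x-G(x)$, and you read $\theta^{adj}\le\theta^m$ directly off $\Delta^{adj}=\Delta-\delta^{adj}$ rather than from the ordering of the derivatives. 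You also spell out the last bullet, which the paper states without proof.
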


\begin{proof}
See model Appendix \ref{app:proof}.
\end{proof}

The result in Proposition \ref{prop:myopic} states that the optimality of the selection rule $\Gamma^m(p,U)$  holds in the case of myopic job seekers.  

When job seekers are no longer myopic, the problem of identifying an optimal RS is more complex, as shown by the equations \eqref{eq:adj} and \eqref{eq:gammaadj}, since the present value once the RS has been trivially implemented depends on the optimal RS, but conversely the optimal RS also depends on the present value.\footnote{The optimal RSs consists of ranking the vacancies according to
$p\sigma \log\left(1+e^{\left(\Delta^m(p,U)-x\right)/\sigma}\right)$
but the $x$ to consider is difficult to identify. Defining $S(x)$ as $S(x)=p\sigma \log\left(1+e^{\left(\Delta^m(p,U)-x\right)/\sigma}\right)$, it is the solution of 
$ rV+x= u(b) +  \frac{\alpha_0}{r+q}\mathbb{E}\!\left( S(x)\left|F_{S(x)}(S(x))>1-s\right.\right)$ with $rV= u(b) +  \frac{\alpha_0}{r+q}\mathbb{E}\!\left( S(0)\right)$
}

\section{Model Appendix}\label{app:model}
\subsection{Proof of propositions}\label{app:proof}

\begin{proof}[\textbf{Proof of Proposition \ref{prop_V_u0}}]
Fix a worker type $x$ and suppress $x$ when there is no ambiguity. Over an interval of length $dt$, the match survives with probability $1-qdt$ and breaks
with probability $qdt$. Thus,
\begin{equation}\label{eq:v_e_clean}
(1+rdt)V_e(U+\varepsilon)
=
(U+\varepsilon)dt + (1-qdt)V_e(U+\varepsilon) + qdt\,V_0.
\end{equation}
Rearranging and taking the limit yields
\[
(r+q)V_e(U+\varepsilon)=U+\varepsilon+qV_0,
\qquad\text{so that}\qquad
V_e(U+\varepsilon)-V_0=\frac{U+\varepsilon-rV_0}{r+q}.
\]
While unemployed, vacancies arrive at rate $\alpha_0$ and each draw yields $(p,U,\varepsilon)$.
Upon observing $(p,U,\varepsilon)$, the worker either applies ($C=1$) or does not apply
($C=0$). The unemployment Bellman equation is
\begin{align*}
&(1+rdt)V_0 \\
&= u(b)dt
+ (1-\alpha_0dt)V_0 \\
&\quad + \alpha_0dt \int C(p,U,\varepsilon) \big(pV_e(U+\varepsilon) + (1-p)(V_0-R)-k\big) \, dF_0(p,U)\, dF_\varepsilon(\varepsilon) \\
&\quad + \alpha_0dt \int  (1-C(p,U,\varepsilon))V_0 \ dF_0(p,U)\, dF_\varepsilon(\varepsilon).
\end{align*}
Subtracting $V_0$ from both sides, dividing by $dt$, and taking the limit in $dt$ gives
\[
rV_0
=
u(b)
+\alpha_0
\int
C(p,U,\varepsilon)\Big(
p\big(V_e(U+\varepsilon)-V_0\big) - (1-p)R - k
\Big)\, dF_0(p,U)\, dF_\varepsilon(\varepsilon).
\]
Using the expression for $V_e(U+\varepsilon)-V_0$ derived above,
\[
rV_0
=
u(b)
+\frac{\alpha_0}{r+q}
\int
C(p,U,\varepsilon)\Big(
p(U+\varepsilon-rV_0) - (r+q)\big((1-p)R+k\big)
\Big)\, dF_0(p,U)\, dF_\varepsilon(\varepsilon).
\]

Define $\overline R:=(r+q)R$ and $\overline k:=(r+q)k$, and define the reservation utility
\[
U_0^*(p):= rV_0-\overline R + \frac{\overline k+\overline R}{p},
\]
which is equivalent to \eqref{eq:res_u}. Then the term inside the integral can be written as
\[
p\big(U-U_0^*(p)+\varepsilon\big).
\]
Therefore the optimal application rule is
\[
C(p,U,\varepsilon)=\indic\{U-U_0^*(p)+\varepsilon>0\}.
\]
Defining the surplus $\Delta(p,U):=U-U_0^*(p)$ yields the application rule \eqref{eq:appli}.
Moreover, substituting this rule back into the Bellman equation gives
\[
rV_0
=
u(b)
+\frac{\alpha_0}{r+q}\,
\mathbb{E}\!\left[
p\int (\Delta(p,U)+\varepsilon)\indic\{\Delta(p,U)+\varepsilon>0\}\, dF_\varepsilon(\varepsilon)
\right],
\]
which corresponds to \eqref{eq:Bellman1_pi}--\eqref{eq:gamma_mplus} with
\[
\Gamma(p,U)
:=
p\int (\Delta(p,U)+\varepsilon)\indic\{\Delta(p,U)+\varepsilon>0\}\, dF_\varepsilon(\varepsilon).
\]
This establishes that application behavior is governed by $\Delta(p,U)$, and the value of
encountering a vacancy is summarized by $\Gamma(p,U)$.
\end{proof}

\begin{proof}[\textbf{Proof of Proposition \ref{prop:myopic}}]
Under an RS characterized by $(S,s,\alpha_1)$, the value of unemployment for a myopic job
seeker is given by \eqref{eq:myopic}:
\[
rV^m_{1}(s,\alpha_1)
=
u(b)
+
\frac{\alpha_1}{r+q}\,
\mathbb{E}\!\left(
\Gamma^m(p,U)
\mid
S>\overline{q}_S(s)
\right),
\]
where $\overline{q}_S(s)$ is the $(1-s)$-quantile of $S$ under $F_0$ (with arbitrary
tie-breaking if needed).
For fixed $s$, maximizing $V_1^m$ is therefore equivalent to maximizing
\[
\mathbb{E}\!\left(
\Gamma^m(p,U)\,\mathbf 1\{S>\overline{q}_S(s)\}
\right)
\]
over all measurable scores $S$.

\medskip

\noindent\textbf{Proof of point \eqref{2.2.1}.}
The previous expression depends on $S$ only through the induced selection set
$A=\{S>\overline{q}_S(s)\}$, which must satisfy $\mathbb{P}(A)=s$.
An application of the rearrangement inequality \citep[see, \emph{e.g.}, Theorem~368,][]{HLP52}
implies that, among all measurable subsets $A$ of probability $s$, the expectation
$\mathbb{E}\!\left(\Gamma^m \indic\{A\}\right)$ is maximized when the indicator
$\indic\{A\}$ is comonotonic with $\Gamma^m$, that is
\[
A=\{\Gamma^m>\overline{q}_{\Gamma^m}(s)\}.
\]
Equivalently, the optimal rule selects the top $s$ fraction of vacancies according to
$\Gamma^m$.

Indeed, if $A$ contains states with low $\Gamma^m$ while excluding states with higher
$\Gamma^m$, swapping them weakly increases
$\mathbb{E}\!\left(\Gamma^m \indic\{A\}\right)$.
Iterating this argument implies that the optimizer must include all higher-$\Gamma^m$
states before lower-$\Gamma^m$ ones, hence be a threshold rule based on $\Gamma^m$.
Such a selection rule is implemented by the score $S(p,U)=\Gamma^m(p,U)$.

\medskip

\noindent\textbf{Proof of point \eqref{2.2.2}.}
In the absence of recommendations, the value of unemployment satisfies
\[
rV_0
=
u(b)
+
\frac{\alpha_0}{r+q}\,
\mathbb{E}\!\left(\Gamma^m(p,U)\right).
\]
Thus, when $\alpha_1\geq\alpha_0$, the RS improves welfare relative to baseline search
whenever
\[
\mathbb{E}\!\left(
\Gamma^m(p,U)
\mid
S>\overline{q}_S(s)
\right)
\ge
\mathbb{E}\!\left(\Gamma^m(p,U)\right).
\]

Let $g(z)=\mathbb{E}(\Gamma^m\mid S=z)$. By the law of iterated expectations,
\begin{align*}
\mathbb{E}(\Gamma^m\mid S>\overline{q}_S(s))
 & =
\mathbb{E}\!\left(g(S)\mid S>\overline{q}_S(s)\right),   \\
\mathbb{E}( \Gamma^m)&=\mathbb{E}(g(S))=\mathbb{E}(g(S)|S>\overline{q}_S(s))s+\mathbb{E}(g(S)|S<\overline{q}_S(s))(1-s).
\end{align*}
Thus, we obtain
\begin{align*}
&\mathbb{E}( \Gamma^m)-E( \Gamma^m|S>\overline{q}_S(s))\\
&=\mathbb{E}(g(S)|S>\overline{q}_S(s))s+\mathbb{E}(g(S)|S<\overline{q}_S(s))(1-s)-\mathbb{E}(g(S)|S>\overline{q}_S(s))
\end{align*}
and thus 
$$\mathbb{E}( \Gamma^m)-\mathbb{E}( \Gamma^m|S>\overline{q}_S(s))=(1-s)(\mathbb{E}(g(S)|S<\overline{q}_S(s))-\mathbb{E}(g(S)|S>\overline{q}_S(s))),$$ 
which is negative given $g$ is increasing.

\medskip

\noindent\textbf{Strict improvement under the optimal RS.}
The second sentence of point~\eqref{2.2.2} follows immediately: when $S=\Gamma^m$, we have $g(z)=z$, which is strictly increasing.
Therefore,
\[
\mathbb{E}(\Gamma^m\mid \Gamma^m>\overline{q}_{\Gamma^m}(s))
>
\mathbb{E}(\Gamma^m)
\]
for any $s\in(0,1)$ whenever $\Gamma^m$ is non-degenerate.
If $\alpha_1\geq\alpha_0$, this implies a strict increase in the value of unemployment
relative to search without recommendations.
\end{proof}

\begin{proof}[\textbf{Proof of Proposition \ref{prop:identification}}] Identification of $\alpha$, $\beta$, and $\gamma$ in the binary choice model \eqref{eq:identification} is direct given the normalizations. Then, using \eqref{eq:appli} and that  $\varepsilon$  is distributed
as a logistic distribution with scale parameter $\sigma$, with $F (z/\sigma)$ its cumulative
distribution, we obtain that 
\begin{align*}
\mathbb{P}(C_{i,j}=1 \mid p_{i,j},U_{i,j}-U_{0,i}^*(1)) & =F( \Delta(p_{i,j},U_{i,j})/\sigma) \\
& =F\left( \frac{1}{\sigma}(U_{i,j}-U_{0,i}^*(1))-\frac{\overline{k}+\overline{R}}{\sigma p_{i,j}}+\frac{\overline{k}+\overline{R}}{\sigma}\right),
\end{align*}
using \eqref{eq:res_u} and \eqref{eq:def_surplus}. Using the assumption that parameters in \eqref{eq:identification} are identified, this yields that $1/\alpha$ identifies $\sigma$ and $\beta/\alpha$ identifies $\overline{k}+\overline{R}$. Identification of $\Delta(p,U)$ and $\Gamma(p,U)$ is a direct consequence.
\end{proof}

\begin{proof}[\textbf{Proof of Proposition \ref{prop_nonmyopic}}]
When job seekers are not myopic, the discounted value of job seekers is given by
\begin{align*}
	rV_{1}(S,s,\alpha_0)&= u(b) +  \frac{\alpha_0}{r+q}\mathbb{E}\!\left(\Gamma^{1}(p,U,rV_{1}(x,S,s,\alpha_0))\middle|F_S(S(p,U)>1-s\right) \notag\\
 &=u(b)+\frac{\alpha_0}{r+q}\mathbb{E}\!\left(p\sigma\log\left(1+e^{\Delta^1(p,U,rV_{1}(S,s,\alpha_0))/\sigma}\right)\middle|F_S(S(p,U)>1-s\right)
\end{align*}
where 
\begin{equation*}
\Delta^1(p,U,z) := U - z + \overline{R} - \frac{\overline{k}+\overline{R}}{p}.
\end{equation*}
The discounted value $rV_{1}(S,s,\alpha_0)$ is thus the solution of an equation of the form:
$$z(s)=f_S(z(s),s).$$
$f_S(z,s)$ is a decreasing function of $z$. The value of the derivative is given by
 $$\frac{df_S}{dz}(z,s)=-\frac{\alpha_0}{r+q}\mathbb{E}\!\left(pF\left(\Delta^1(p,U,z)/\sigma\right)\left|F_S(S(p,U)>1-s\right.\right)$$
 and the second derivative is given by 
\begin{align*}&\frac{d^2f_S}{dz^2}(z,s)\\
&=\frac{\alpha_0}{(r+q)\sigma}\mathbb{E}\!\left(pF\left(\Delta^1(p,U,z)/\sigma\right)\left(1-F\left(\Delta^1(p,U,z)/\sigma\right)\right)\left|F_S(S(p,U)>1-s\right.\right)>0.
\end{align*}
 
The discounted value in the absence of the RSs is the solution $z_0$ of  $z_0=f_S(z_0,1)$. Thus $\delta^{adj}(S,s)=z(s)-z_0$, and we have
\begin{align*}
z(s)-z_0&=f_S(z(s),s)-f_S(z_0,1)=f_S(z(s),s)-f_S(z_0,s)+[f_S(z_0,s)-f_S(z_0,1)]\\
&=(z(s)-z_0)\frac{df_S}{dz}(z'(s),s)+[f_S(z_0,s)-f_S(z_0,1)]
\end{align*}
with $\frac{df_S}{dz}(z'(s),s)$ the derivative of $f_S$ for a value $z'(s)$ in between $z_0$ and $z(s)$. Thus,

$$z(s)-z_0= \frac{f_S(z_0,s)-f_S(z_0,1)}{1-\frac{df_S}{dz}(z'(s),s)}$$
The quantity $f_S(z_0,s)-f_S(z_0,1)$ is the myopic change $\delta^m(S,s)$. Given $\frac{df_S}{dz}(z'(s),s)<0$, this shows that $\delta^m(S,s)$ and $\delta^{adj}(S,s)$ have the same sign.

Focusing on cases for which $\delta^m(S,s)$ is positive, this implies that $\delta^{adj}(S,s)$ is positive and thus that $z(s)\geq z_0$. 
In addition, the second derivative of $f$ with respect to $z$ is positive, thus $\frac{df_S}{dz}(z_0,s)<\frac{df_S}{dz}(z'(s),s)<\frac{df_S}{dz}(z(s),s)$, and we have
\begin{align*}
\frac{df_S}{dz}(z_0,s)&=-\frac{\alpha_0}{r+q}\mathbb{E}\!\left(pF\left(\Delta^1(p,U,z_0)/\sigma\right)\left|F_S(S(p,U)>1-s\right.\right)\\
&=-\frac{\alpha_0}{r+q}\theta^m(S,s)\\
\frac{df_S}{dz}(z(s),s)&=-\frac{\alpha_0}{r+q}\mathbb{E}\!\left(pF\left(\Delta^1(p,U,z(s))/\sigma\right)\left|F_S(S(p,U)>1-s\right.\right)\\
&=-\frac{\alpha_0}{r+q}\theta^{adj}(S,s).
\end{align*}

As a result
$$\theta^m(S,s)\geq\theta^{adj}(S,s)$$
and 
$$\frac{\delta^m(S,s)}{1+\frac{\alpha_0}{r+q}\theta^m(S,s)}\leq \delta^{adj}(S,s)\leq\frac{\delta^m(S,s)}{1+\frac{\alpha_0}{r+q}\theta^{adj}(S,s)}$$

which gives the result.
\end{proof}

\section{Usual assessment of an RS's performance: recall@k}\label{app:recall}

Although algorithms can be obtained according to different principles and with different data, there is a common way of measuring their performance, which is the ``Recall@k''. Consider a target variable $M$, such as $M_{i,j}=1$ if $i$ has been hired by $j$. As usual, the algorithm is trained on a ``train sample'' and tested on a ``test sample''. For each individual $i$ and the $k$ best vacancies according to $S$ in the test sample $\mathcal{J}_k^*(S,i)$, we can build a variable $M_i^k(S)$ which takes value 1 if the target variable $M_{i,j}$ takes value 1 for one of these $k$ best $S$-based vacancies:

\begin{equation}
\label{eq:recall(i)}
    M_i^{k}(S)=\mathds{1}\left(\sum_{j\in{\mathcal{J}_k^*(S,i)}}M_{i,j}=1\right),
\end{equation}
where $\mathds{1}(\cdot) $ denotes the indicator function. If the target variable is hiring, the recall@$k$ is the proportion of job seekers who were hired on one of the top-$k$ recommendations: 
\begin{equation}
\label{eq:recall}
\textrm{recall}@k(S)=\frac{1}{N}\sum_{i=1}^NM_i^{k}(S).
\end{equation}

This is the usual measure in the machine learning literature of the global performance of the RSs $S$, which can be used for example to compare two RSs.

\medskip

Figure \ref{fig:perfs} shows the performance of different  RSs that we considered when building our final ML-based RS. The figure on the right panel compares the performances in terms of recall@100 on the test set of different RSs.  Progressively including additional variables (such as previously considered vacancies) yields huge improvements on the recall.\footnote{See the definition of the recall@k in Appendix~\ref{app:recall}.}
The first RSs (``fixed weights'') corresponds to the $\mathcal{U}$-ranking, the preference-based RSs inspired from the PES's current one. 
As the graph shows, the recall@100 is very low, around 5\%. The second RS considered uses the same variables as those used to build the matching score, but instead of giving them fixed weights, it optimizes them to best predict the return to employment. This leads to improvements, but the recall@100 is still modest, remaining below 20\%. The last two RSs consider a broader set of variables. The first of the last two, based on neural networks, follows the method described in section \ref{sec:RSP} and is our $\mathcal{P}$-ranking. The second uses a different machine learning method based on ensembling and uses variables that explicitly describe the interactions between the variables characterizing the job supply and the job seekers (e.g. the distance between a job seeker and an establishment). Both RSs perform significantly better than the first two. The neural network achieves a recall@100 of about 57.5\% and the last one an even higher recall@100. The disadvantage of the last system is its speed, especially when making recommendations. The neural network model takes about one hour to train and about 0.07 seconds to generate a set of recommendations for a given job seeker; these figures are 2 hours and 10 seconds respectively for the last model.

\section{Estimation of the matching probability using the score as predictor}\label{sec:calibration}

Let $M_{i,j}^* \in \{0,1\}$ be the latent variable that takes the value 1 if there is a match for a pair of job seeker-firm $(i,j)$ after they meet. Importantly, after this process, the observed hiring dummy between $i$ and $j$ is $M_{i,j} = M_{i,j}^* C_{i,j}$.

\medskip

We want to characterize the true probability of $i$ being matched with $j$ conditional on the very rich information $X_{i,j} $ available to us, namely 
$\mathbb{P}(M_{i,j}^* = 1 \mid X_{i,j})$. There are three main difficulties in estimating this true conditional probability. First, there is a selection problem, since we only observe matches conditional on a past interview $C_{i,j}=1$, so the variable $M_{i,j} = M_{i,j}^* C_{i,j}$. Second, since we want to consider all potentially relevant covariates at our disposal, this is a high-dimensional setting.  Third, ML algorithms, and in particular those of section \ref{sec:RSP}, generally do not produce a consistent estimator of the matching probabilities, but provide excellent predictive performance of future matching leveraging the complex interactions between the components of $X_{i,j} $. We provide a framework that allows to estimate the best predictions of the matching probabilities with a logistic predictor, given the score $S_{i,j}$ produced by the RS.\footnote{Note that the objective function \eqref{eq:triplet} of this algorithm, whose purpose is to rank, is invariant to the addition of an individual specific effect $\alpha_i$. However, this does not change the interpretation of our object of interest here, which is the best predictor \textit{given the score and data used}. Of course, a different training could change the values of the estimated coefficients as the score would be different, but would marginally change the predicted probabilities which are our objects of interest. Alternatively, one could use a logit with fixed effects approach similar to the one of section  \ref{sec:empiric} to account for these potential shifts. However, this importantly limits the predictions to the set of ``movers" (here 427 individuals) and even on them we typically observe few applications in this period  (median of 3).}

\medskip

Denote by $\mathcal{F}_{j}$ the sigma-algebra generated by the vector of past applications up to $j$, i.e., $\{C_{i,k}=1, \ k=1,\dots,j\}$, and negative observed results $\{M_{i,k}=0, \ k=1,\dots,j\}$. The selection problem translates into the fact that our data only allows us to identify $P(M_{i,j(i)}=1|X_{i,j(i)},\mathcal{F}_{j(i)-1}, C_{i,j(i)}=1)$ instead of $P(M^*_{i,j(i)}=1|X_{i,j(i)})$. To deal with this selection problem, we make the following assumption of conditional independence of the matching $\{M^*_{i,j}, \ j\in \mathcal{J}\}$ and application processes $\{C_{i,j}, \ j\in \mathcal{J}\}$.

\begin{ass}[Selection on observables and Markovian property]\label{eq:model_lambda} 
$$\Theta(X_{i,j(i)} ):= \mathbb{P}(M_{i,j(i)}=1\mid X_{i,j(i)},\mathcal{F}_{j(i)-1}, C_{i,j(i)}=1)= \mathbb{P}(M^*_{i,j(i)}=1\mid X_{i,j(i)}).$$
\end{ass}

Given how large the set of covariates we are starting from is, this assumption makes sense. Then, we take advantage of observing the chronologically ordered sequence for an individual $i_0$, $1(i_0),2(i_0), \ldots,j^{max}(i_0)$ as a sequential search model and analyze it as a discrete duration model \citep[see, \emph{e.g.},][]{tutz2016modeling}, where the conditional hazard rate is $\Theta(X_{i,j(i)} )$.

Let us use the shortened notation for the score $ S_{i,j(i)}  := S_{i,j(i)}(X_{i,j(i)})$ and $r(i,j)$ for the rank of the vacancy $j$ in the application set. We define the best logistic predictor of this conditional probability given the score and this rank as $\Lambda (\alpha^*_{r(i,j(i))} +  \beta^*  S_{i,j(i)} )$, where $\Lambda$ is the usual logistic function, in the sense that $(\alpha^*_{r(i,j(i))}, \beta^*)$ minimizes the  Kullback Leibler Information Criterion (KLIC) with $\Theta(X_{i,j(i)} )$, see \cite{white1982maximum}.\footnote{Thus, \cite{white1982maximum} also suggest this is a ``minimum ignorance'' solution. When the model is correctly specified, this identifies the true parameters.} The parameters of this best logistic predictor $(\alpha^*_{r(i,j(i))}, \beta^*)$ can be consistently estimated using conditional maximum likelihood estimation (MLE).

\paragraph{Estimation.}

 For the estimation we use the sequence $\{M_{i,j(i)}\}_{i=1,\dots,N; j=1,\dots,n(i)}$, where $n(i)$ is the number of observed applications for job seeker $i$ and $N$ is the number of observed job seekers. Taking into account completed and censored spells \citep[see, \emph{e.g.},][page 52]{tutz2016modeling}, the estimation can be done using conditional MLE, considering the log-likelihood function, conditional on the scores produced by the RS, given by
\begin{align*}
    \mathcal{L}( \alpha,\beta | M , S) =
    & \sum_{i=1}^{N}  M_{i,n(i)}\ln(\Lambda( \alpha_{r(i,n(i))}  +  \beta  S_{i,n(i)}))\\
    & + \sum_{i=1}^{N} \sum_{j \in \mathcal{J}(i)\setminus\{n(i)\} } (1-M_{i,j})\ln(1- \Lambda( \alpha_{r(i,j)}  +  \beta  S_{i,j} )) .
\end{align*}

\medskip 

There is a simple generalization of the former expression to consider $r(i,j)$ the rank of vacancy $j$ in the application set of job seeker $i$, but also $q(i,j)$ the rank of $i$ in the applicant pool for job $j$. The likelihood expression in this case is written as
\begin{align*}
\mathcal{L}( \alpha,\beta | M , S) =
    & \sum_{(i,j): \ C_{i,j}=1} M_{i,j}\ln(\Lambda( \alpha^v_{q(i,j)}  +\alpha^{js}_{r(i,j)}  +\beta S_{i,j} ) \\
      & + \sum_{(i,j): \ C_{i,j}=1}(1-M_{i,j})\ln(1- \Lambda( \alpha^v_{q(i,j)} + \alpha^{js}_{r(i,j)}+  \beta  S_{i,j} )) ,
\end{align*}
where $\alpha^v$ and $\alpha^{js}$ are the sequences of ``weariness'' effects for vacancies and job seekers. 

\paragraph{Estimation results.} The estimation is performed on 34,255 randomly selected job seekers in the test set, representing 79,097 applications. As expected, the estimated coefficient of $\beta$ of 0.061 is significantly positive at the 1\% level. This result is robust to various specifications, including application and interview rank effects (the coefficient on $S_{i,j}$ drops to 0.038 and 0.047 , respectively, see table \ref{tab:calibration}). Overall, this validates the content of the ML score $S_{i,j}$ in terms of its potential to reflect hiring chances. From now on, instead of $S_{i,j}$, we will think of our estimated best logistic predictor given the score in column (1) of Table \ref{tab:calibration} as $\mathcal{P}_{i,j}:= \Lambda(0.061 \ S_{i,j} -4.113)$, which is our best prediction of the probability of hiring $p(i,j)$.

\begin{table}[!h]
\begin{center}
\caption{Estimates of the best logistic predictor of hirings given the ML score}
\label{tab:calibration}
\begin{threeparttable}
\begin{tabular}{lccc}
\toprule
\textbf{Method} & \textbf{(1)} & \textbf{(2)} & \textbf{(3)} \\ 
\midrule
\textbf{Score $S_{i,j}$ }& 0.061*** & 0.038*** & 0.047*** \\ 
& (0.0029) &  (0.0030) &  (0.0030)\\ 
\cmidrule{2-4}
With application rank &No&Yes&Yes\\
With interview rank &No&No&Yes\\
  \midrule
Intercept & -4.113*** & -2.994*** & -2.538*** \\ 
& (0.0559) &  (0.0570) &  (0.0575)\\ 
  \midrule 
AIC &  28,040  &  25,116 &  23,897\\ 
\bottomrule 
\end{tabular}
\begin{tablenotes}
 \item \footnotesize Notes: On a half of the job seekers present in the test sample (weeks 44-48 of 2019):    79,097 applications, 3,469 matches, 34,255 job seekers. Significance levels: $<1\%: {}^{***}$, $<5\%: {}^{**}$, $<10\%: {}^{*}$. ``With application rank" denotes dummies for the ranking of the application $j$ in the list of applications of job seeker $i$. ``With interview rank " denotes dummies for the ranking of the candidate $j$ in the list of recorded interviews for vacancy $j$.
 \end{tablenotes}
 \end{threeparttable}
 \end{center}
\end{table}

\section{Comparison of the two RSs obtained from $\mathcal{U}$ and $\mathcal{P}$ rankings}\label{app:comparison}

We apply the framework outlined in Section \ref{sec:sec1} to our two scores: the preference score $\mathcal{U}_{i,j}$, that we call hereafter the $\mathcal{U}$-ranking, and the hiring prediction score $\mathcal{P}$ derived from the ML-based estimate $S$. We refer to the latter as the $\mathcal{P}$-ranking hereafter. For each job seeker we define two sets of recommendations: a set based on the $\mathcal{U}$-ranking and one based on the $\mathcal{P}$-ranking. For each of these two scores, to make $k$ recommendations to job seeker $i_0$, the $k$ vacancies with the largest score $\mathcal{U}(i_0,j)$ (respectively $\mathcal{P}_{i_0,j}$) are selected.

\medskip

The right panel of Figure \ref{fig:perfs} shows how the recall of the last model varies with the number of recommendations. For 5 recommendations, the proportion is as large as almost 20\%. As shown in the figure, the proportion increases progressively when the number of recommendations increases. 

\begin{figure}[!ht]
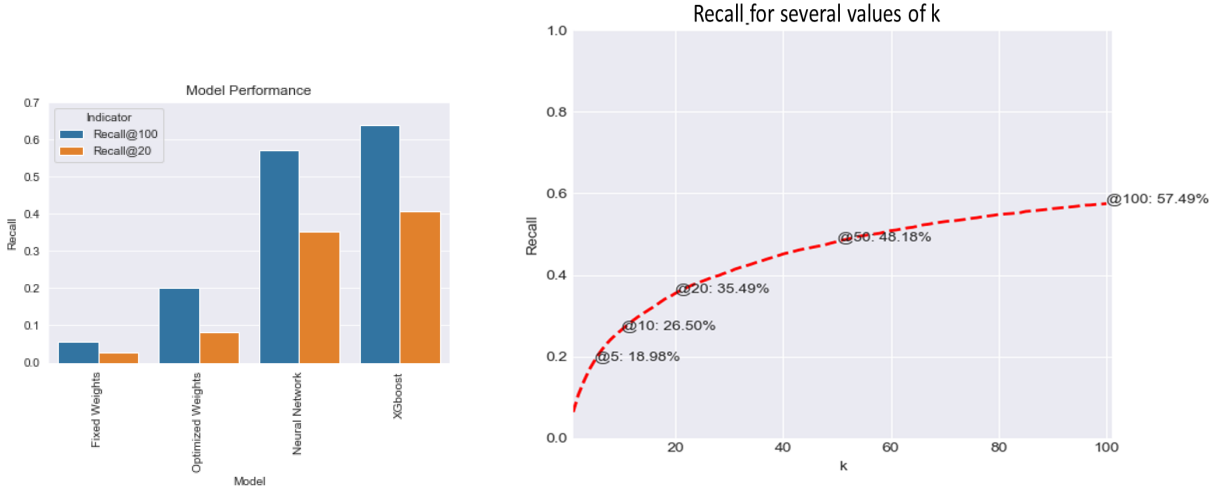

    \centering
    \begin{tabular}{cc}
   \includegraphics[scale=0.45]{images/r_20_100-1.png}& \includegraphics[width=0.6\linewidth, height=0.3\textheight]{images/recall5-100_e.png}\\
   \end{tabular}
     \caption{Performance on the test set of different RSs.}
   \label{fig:perfs}
\end{figure}

The two rankings are positively correlated for a large part of the population (median at 0.14 and first and third quartiles at 0.10 and 0.19 respectively), but there are still significant differences in the ranking of vacancies between the two criteria. To explore this further, we compare for each type of job seeker $i$ the optimal vacancy based on the $\mathcal{P}$ ranking, denoted by $j^{\mathcal{P}}(i)$, and the optimal vacancy based on the $\mathcal{U}$ ranking, denoted by $j^{\mathcal{U}}(i)$. We first compare the respective ranks of these optimal vacancies: the rank of $j^{\mathcal{P}}(i)$ in the $\mathcal{U}$-ranking: $r^{\mathcal{U}}(i,j^{\mathcal{P}}(i))$, and symmetrically the rank of $j^{\mathcal{U}}(i)$ in the $\mathcal{P}$-ranking: $r^{\mathcal{P}}(i,j^{\mathcal{U}}(i))$.  The upper panel of Figure \ref{fig:dist_all} shows the distribution of these ranks. While some individuals have optimal recommendations according to the two ranks that match, this is a small minority. For most, the ranks considered are very large. The median of $r^{\mathcal{U}}(i,j^{\mathcal{P}}(i))$ is 381 (top 2\%) and that of $r^{\mathcal{P}}(i,j^{\mathcal{U}}(i))$ is 3,093 (top 16\%).

\begin{table}[!h]
\begin{center}
\caption{Estimates of the best logistic predictor of hirings and applications given the different scores, based on the second experiment}
\label{tab:calibration2}
\scalebox{0.95}{
\begin{threeparttable}
\begin{tabular}{lccc}
\toprule
\textbf{Outcome} & Hiring cond. application  & Application & Hiring\\ 
 & (1)  & (2)  & (3) \\ 
\midrule
Score \V.2  & 1.878*** &  0.252***  & 2.574*** \\ 
& (0.711) & (0.080)  & (0.850) \\ 
Score \V.0 & 0.196   &  0.044  & 0.288 \\ 
& (0.169)  & (0.033)  & (0.188) \\ 
Score APPLICATION & -0.015  &   0.680***  & 0.302* \\ 
& (0.296) & (0.052)  & (0.171) \\ 
Score U-Rec & -0.195 &   0.301*** & 0.134 \\ 
& (0.172) & (0.026) & (0.167) \\ 
  \midrule
Intercept & -4.728***  & -5.505*** &  -10.241*** \\ 
& (0.316)   &  (0.042)  & (0.362) \\ 
\bottomrule
\end{tabular}
\begin{tablenotes}
 \item \footnotesize Notes: Significance levels: $<1\%: {}^{***}$, $<5\%: {}^{**}$, $<10\%: {}^{*}$. The column ``Hiring cond. application'' considers hirings realizations conditional on applications while column ``Application'' considers applications to job postings conditional on clicks. Regressors are the scores corresponding to the different algorithms.  
 All scores are standardized to simplify the comparisons. Number of observations 309,730, of applications 1,752. Standard errors are clustered at the individual level. 
 \end{tablenotes}
 \end{threeparttable}
 }
 \end{center}
\end{table}

Thanks to the estimation result of the search model in section \ref{sec:calibration}, we give a further quantitative interpretation of these two systems. The lower panel of Figure \ref{fig:dist_all} shows the distribution of the hiring probabilities for the two vacancies: $\mathcal{P}(i,j^{\mathcal{P}}(i))$ and $\mathcal{P}(i,j^{\mathcal{U}}(i))$.
The median value of the maximum hiring probability for each individual $\mathcal{P}(i,j^{\mathcal{P}}(i))$ is 0.06, sharply contrasting with the hiring probability for the optimal vacancy according to the adequacy criterion (0.02). Although the probability of hiring from the best vacancy in the $\mathcal{P}$-ranking is higher than the probability of hiring from the $\mathcal{U}$-ranking, it is worth noting that this probability in absolute terms is not so large. Even more pronounced differences arise in the matching scores $\mathcal{U}(i,j^{\mathcal{U}}(i))$ and $\mathcal{U}(i,j^{\mathcal{P}}(i))$. As shown in Figure \ref{fig:dist_all}, the distribution $\mathcal{U}(i,j^{\mathcal{U}}(i))$ has a substantial mass at 1 (median 0.98), indicating that for many job seekers there are vacancies that meet all their criteria. Conversely, for the optimal vacancy according to the hiring probability, there is a significant mass at zero (median 0.46). 

\begin{figure}[!h]
 \centering
\includegraphics[width=0.95\linewidth, height=0.35\textheight]{images/plots_rankings.png}
\caption*{\footnotesize Notes: 60,299 job seekers whose main sector is transportation and logistics in the Rhône-Alpes region -- ISO weeks 44-48 of 2019 -- 18,873 vacancies available at that period in this sector.  Distributions of the ranks of the best recommendations based on past hirings ($\mathcal{P}$) and elicited preferences ($\mathcal{U}$) in each other rankings. The small bunch at the right gathers vacancies associated with some best recommendations according to $\mathcal{P}$ but ranked after 18,800 according to $\mathcal{U}$ as they have a preference score of $0$ and are ranked by distance to the job seekers. 
   	}
\caption{Comparison of the rankings of the best recommendations with respect to the other ranking}
\label{fig:ranks}
\end{figure}

\begin{figure}[!h]
 \centering
\subfigure[Distributions of hiring probabilities]{\label{fig:dist_a}\includegraphics[width=0.9\linewidth, height=0.3\textheight]{images/plots_matching.png}}
 \\
\subfigure[Distributions of preference score]{ \label{fig:dist_b}\includegraphics[width=0.9\linewidth, height=0.3\textheight]{images/plots_preferences.png}}
\caption*{\footnotesize Notes: 60,299 job seekers whose main sector is transportation and logistics in the Rhône-Alpes region -- ISO weeks 44-48 of 2019 -- 18,873 vacancies available at that period in this sector. 
\textit{Upper panel}: Histograms of the hiring probabilities for the best recommendations in both systems. \textit{Lower panel}: Histograms of the preference score for best recommendations in both systems.
   	}
\caption{Comparison of the best recommendations in the two rankings: hiring probabilities and preference score}
\label{fig:dist_all}
\end{figure}

\clearpage

\section{Randomized Experiment} \label{sec:betatests}

\begin{table}[h!]
\caption{Summary of the Two Field Experiments (2022 and 2023)}
\label{table:experiments2}
\centering
\renewcommand{\arraystretch}{1.25}
\scalebox{0.9}{
\begin{threeparttable}

\begin{tabular}{p{0.20\textwidth} p{0.38\textwidth} p{0.38\textwidth}}
\hline\hline
& \textbf{Experiment 1: March 2022} & \textbf{Experiment 2: June 2023}$^{(a)}$ \\
\hline

\textbf{Population} &
Job seekers registered at PES in Auvergne--Rhône--Alpes, actively seeking work. 
&
Same population and recruitment protocol as Experiment~1. \\

\textbf{Invited} &
102{,}314 job seekers 
&
150{,}000 job seekers \\

\textbf{Enrolled} &
18{,}947 (clicked consent link) 
&
30{,}973 (clicked consent link) \\

\textbf{Design} &
One email with recommendations; no control group; random assignment across treatment arms. 
&
Same core structure: email invitation, enrollment conditional on consent, random assignment. \\

\textbf{Treatment dimensions} &
(1) Algorithm generating recommendations;  
(2) Display of additional information (2 conditions). 
&
(1) Expanded set of algorithms;  
(2) Richer variation in displayed information (4 conditions). \\

\textbf{Treatment arms} &
10 total (5 algorithms $\times$ 2 display conditions).
&
15 total (6 algorithms + 3 information-display variants for selected algorithms). \\

\textbf{Algorithms tested} &
\V.0, \Pbs, and three hybrid Mix variants (Mix$^{1/4}$, Mix$^{1/2}$, Mix$^{3/4}$). 
&
\V.0(retrained), $\mathcal{U}$-rec, \V.2, Mix$^{1/2}$(\V.2), \textsc{Application}, and \textsc{XGBoost}. \\

\textbf{Information display} &
Baseline vs. additional information. 
&
No information; personalized ``scores''; ``explanation''; ``explanation+''. \\

\textbf{Outcomes} &
$\mathcal{U}$ and $\mathcal{P}$ scores; clicks; applications; hires.
&
Same outcome measures. \\

\hline\hline
\end{tabular}
\begin{tablenotes}
    \footnotesize \item (a) Actually, in addition to the 15 arms listed previously, the experiment includes a sixteenth arm where participants receive recommendations based on the score $\mathcal{U}^{PES}$, currently used by the PES. Its exact formula is unknown. It follows the same structure and weights as $\mathcal{U}$ but includes nonlinearities and strict exclusions (as already mentioned in footnote \ref{foot:U}). These features prevent some job seekers from receiving the 10 recommendations planned in the intervention, making full-sample comparisons infeasible. Specifically, 15.4\% of job seekers have no recommended offers, 35\% fewer than five offers, and 49\% fewer than ten offers. The score $\mathcal{U}$ provides a linear approximation of $\mathcal{U}^{PES}$, with the advantage of always generating 10 recommendations and excluding no job seeker.
\end{tablenotes}
\end{threeparttable}
}
\end{table}

\subsection{Population of interest} \label{sec:pop}

The eligible population are job seekers registered at \PE\ in the Auvergne-Rhône-Alpes region, of administrative category A (\emph{i.e.}, available for a job and looking for one), aged over 18 years old, and having given the PES the permission to contact them by email. Randomization was stratified by desired job type (14 modalities), the kind of support delivered by the institution (3 modalities describing the job seeker's degree of autonomy), and geographic location (level of a French \textit{département}, 12 modalities). Table~\ref{tab:balancetableexp_vad} reports balance checks for the first experiment across the five treatment arms.

\subsection{Treatments: 5 combinations of the two rankings $\mathcal{U}$ and $\mathcal{P}$}\label{sec:treatments}

We construct five ways to rank job vacancies based on the two scores, \(\mathcal{U}\) and \(\mathcal{P}\), by varying the weight given to each ranking. These hybrid rankings are implemented based solely on ranks, not on the actual values of the scores.  

We proceed as follows:  
\begin{enumerate}
    \item First, we define a ``\textit{consideration set}'' consisting of the top vacancies according to one or both rankings:  
\[
\mathcal{CS}=\{\text{Top25 }\mathcal{U}\} \cup \{\text{Top25 }\mathcal{P}\} \cup \{\{\text{Top50 }\mathcal{U}\} \cap \{\text{Top100 } \mathcal{P}\}\} \cup \{\{\text{Top100 }\mathcal{U}\}\cap\{\text{Top50 }\mathcal{P}\}\}
\]
We define \( L \) as the number of vacancies belonging to this recommendation set.\\  
\item To guard against vacancies disappearing from \PE website between the selection process and the time the email is sent, we select 15 vacancies from these \( L \). Only the top 10, according to the considered ranking and still available online at the time of sending, are included in the email.\\  

\item We rank the \( L \) vacancies in \(\mathcal{CS}\) according to \(\mathcal{P}\). The selected vacancies for the different ranking methods are as follows:  

\begin{itemize}
    \item \V : The top 15 vacancies from \(\mathcal{CS}\) ranked by \(\mathcal{P}\).
    \item \(\text{\MixQuarter}\) : The top 15 vacancies ranked by \(\mathcal{U}\) among the \(\max\{15,L/4\}\) vacancies from \(\mathcal{C}\mathcal{S}\) in the ranking of \(\mathcal{C}\mathcal{S}\) by \(\mathcal{P}\).
    \item \(\text{\MixHalf}\) : The top 15 vacancies ranked by \(\mathcal{U}\) among the \(\max\{15,L/2\}\) vacancies from \(\mathcal{C}\mathcal{S}\) in the ranking by \(\mathcal{P}\).
    \item \(\text{\MixThreeQuarter}\) : The top 15 vacancies ranked by \(\mathcal{U}\) among the \(\max\{15,3L/4\}\) vacancies from \(\mathcal{C}\mathcal{S}\) in the ranking by \(\mathcal{P}\).
    \item \(\text{\Pbs}\) : The top 15 vacancies from \(\mathcal{C}\mathcal{S}\) ranked by \(\mathcal{U}\).
\end{itemize}  
\end{enumerate}

\subsection{Design and Outcome variables}\label{sec:design} 

The first experiment was conducted in March 2022. 
Job-seekers are sent an email inviting them to complete an online survey using a link provided in the email that takes them to the survey's ``landing page.'' The landing page provides them with information about the goals of the survey and assures them that the information collected will be used for research purposes and will not affect their treatment by \PE. 

If they accept these conditions, job-seekers are first shown two job postings (the top 2 of their assigned algorithm). The  job postings  are characterized by the company, working conditions, salary, location (and distance), experience, education requirements, driver's license requirements, and a summary of the textual description of the job and the company. Job seekers are asked to rate the two job postings  in terms of i) global relevance, ii) their perception of their chances of being hired, and iii) fit with their job search criteria. They can also optionally provide natural language comments.

After rating the two job ads (which is required to proceed with the survey), job seekers are presented with an additional page that initially displays the two previous ads, but with a link to apply and three additional ads. Job seekers do not have to rate the ads on this page. They can click on the ads to view them on \PE's website (which provides more details about the ads and allows job seekers to apply if they wish). Job seekers' clicks on the ads are recorded. If they wish, job seekers can see an additional page with five more ads.

\begin{table}[!htb] \centering 
\caption{Balance check among full sample}  
\label{tab:balancetableexp_vad}
\footnotesize
\begin{tabularx}{\textwidth}{@{}l *6{>{\centering\arraybackslash}X}@{}}
\toprule
 & \V\ & \MixQuarter & \MixHalf & \MixThreeQuarter & \Pbs\ & \textit{p}\\
\midrule
\addlinespace[0.3em]
Age & 38.18 & 38.09 & 38.46 & 38.38 & 38.47 & 0.10\\
Looking for a permanent contract, full time & 0.65 & 0.65 & 0.65 & 0.65 & 0.64 & 0.66\\
Looking for a permanent contract, part time & 0.11 & 0.11 & 0.11 & 0.11 & 0.12 & 0.94\\
Looking for a temporary contract & 0.19 & 0.19 & 0.19 & 0.20 & 0.19 & 0.77\\
Education: High school & 0.25 & 0.26 & 0.26 & 0.25 & 0.26 & 0.56\\
Education: Less than high school & 0.10 & 0.10 & 0.10 & 0.10 & 0.10 & 0.71\\
Education: Vocational training & 0.28 & 0.27 & 0.28 & 0.27 & 0.27 & 0.25\\
Education: College Education & 0.37 & 0.37 & 0.37 & 0.37 & 0.37 & 0.95\\
Gender: Woman & 0.53 & 0.52 & 0.52 & 0.54 & 0.53 & 0.04\\
Level of assistance from the PES: Light & 0.25 & 0.25 & 0.25 & 0.25 & 0.25 & 1.00\\
Level of assistance from the PES: Medium & 0.55 & 0.55 & 0.55 & 0.55 & 0.55 & 1.00\\
Level of assistance from the PES: Strong & 0.19 & 0.19 & 0.19 & 0.19 & 0.19 & 1.00\\
Married & 0.42 & 0.41 & 0.43 & 0.43 & 0.42 & 0.04\\
Max. commuting time (minutes) & 23.4 & 23.8 & 23.6 & 23.6 & 23.4 & 0.28\\
No child & 0.57 & 0.58 & 0.57 & 0.57 & 0.57 & 0.95\\
Occupation targeted: Agriculture & 0.03 & 0.03 & 0.03 & 0.03 & 0.03 & 0.99\\
Occupation targeted: Art and crafts & 0.01 & 0.01 & 0.01 & 0.01 & 0.00 & 0.98\\
Occupation targeted: Banking, insurance, real est. & 0.01 & 0.01 & 0.01 & 0.01 & 0.01 & 0.95\\
Occupation targeted: Business support services & 0.16 & 0.16 & 0.16 & 0.16 & 0.16 & 0.90\\
Occupation targeted: Comm, media, digital & 0.02 & 0.02 & 0.02 & 0.02 & 0.02 & 0.99\\
Occupation targeted: Construction, public works & 0.07 & 0.06 & 0.07 & 0.06 & 0.07 & 1.00\\
Occupation targeted: Health & 0.04 & 0.04 & 0.04 & 0.04 & 0.04 & 1.00\\
Occupation targeted: Industry & 0.08 & 0.08 & 0.08 & 0.08 & 0.08 & 1.00\\
Occupation targeted: Maintenance & 0.04 & 0.04 & 0.04 & 0.04 & 0.04 & 1.00\\
Occupation targeted: Missing & 0.00 & 0.00 & 0.00 & 0.00 & 0.00 & 0.56\\
Occupation targeted: Performing arts & 0.02 & 0.02 & 0.02 & 0.02 & 0.02 & 0.99\\
Occupation targeted: Personal services & 0.19 & 0.19 & 0.19 & 0.19 & 0.19 & 1.00\\
Occupation targeted: Sales & 0.15 & 0.15 & 0.15 & 0.15 & 0.15 & 0.98\\
Occupation targeted: Tourism, leisure & 0.09 & 0.09 & 0.09 & 0.09 & 0.09 & 0.99\\
Occupation targeted: Transport & 0.10 & 0.10 & 0.10 & 0.10 & 0.10 & 0.98\\
Reservation wage (in euros) & 2702 & 2864 & 2799 & 2808 & 2838 & 0.52\\
Skill level: Higher occupation & 0.15 & 0.15 & 0.16 & 0.15 & 0.15 & 0.66\\
Skill level: Intermediate occupation & 0.70 & 0.69 & 0.69 & 0.69 & 0.69 & 0.85\\
Skill level: Lower occupation & 0.12 & 0.13 & 0.13 & 0.12 & 0.13 & 0.93\\
Skill level: Missing & 0.03 & 0.03 & 0.03 & 0.03 & 0.03 & 0.18\\
UB status: Not eligible to UB & 0.49 & 0.50 & 0.49 & 0.48 & 0.49 & 0.41\\
UB status: Receives UB & 0.51 & 0.50 & 0.51 & 0.52 & 0.51 & 0.41\\
Unemployment duration (in months) & 15.29 & 15.35 & 15.20 & 15.43 & 15.32 & 0.80\\
Work experience (in months) & 9.71 & 9.00 & 9.32 & 9.57 & 9.38 & 0.35\\
\midrule
N. Obs. & 10099 & 10092  & 10108 & 10094  & 10102  &\\
\bottomrule
\end{tabularx}
{\parbox{1\textwidth}{
\footnotesize \textit{Note:} Columns (1) to (5) characterize job-seekers by their treatment assignment and report mean values; shares of the sample for binary variables and levels for continuous variables (Age, Max. commuting time, Reservation wage, Unemployment duration, Work experience). Column \textit{p} reports the p-value from the F-test for joint significance of treatment coefficients in the regressions of each covariate on treatment assignment.}
}
\end{table}

\subsubsection{Attrition differential}

Table \ref{tab:betatesto:differentialattrition} displays the results of the regression:
\[
Y_i = \alpha + \sum_k \beta_k \{T_i = k\} + \epsilon_i
\]
among job seekers who received an email, where $T_i$ is job seeker $i$'s received treatment, and $Y_i$ corresponds to a binary indicator of having completed the survey (rated the top two ads and accessed the final page). The \V\ treatment serves as the reference category. A F-test of the joint nullity of coefficients associated to \Pbs, \MixQuarter, \MixHalf\ and \MixThreeQuarter\ yields a F-stat 1.885 (p=0.11). Accordingly, we do not attempt to model attrition differential.

\begin{table}[!htb]
\centering
\caption{Survey completion}
\label{tab:betatesto:differentialattrition}
\begin{tabular}{lc}
\toprule
Dependent variable & \textbf{Completed the survey} \\ \midrule 
\MixQuarter & \num{-0.003}   \\
& (\num{0.005})  \\
\MixHalf & \num{0.009}    \\
& (\num{0.006})* \\
\MixThreeQuarter & \num{0.001}    \\
& (\num{0.005})  \\
\Pbs & \num{-0.004}   \\
& (\num{0.005})  \\
\midrule
Strata fixed effects & Yes \\
N.Obs.                      & \num{50495}    \\
Control mean (\V) & 0.176 \\
\bottomrule
\end{tabular}
{\parbox{0.58\textwidth}{
\footnotesize \textit{Note:} The \V\ treatment group is used as the reference category. Robust standard errors are in parentheses. *, **, ***: significance at 10\%, 5\%, and 1\%. }
}
\end{table}

\section{Estimation of application and hiring probabilities from multiple algorithmic scores}\label{app:estimation_scores}

This appendix provides implementation details for the estimation strategy described in Section~\ref{sec:comparisonRS}. Throughout, we use data from Experiment~2. For each enrolled job seeker $i$ and each recommended vacancy $j\in\{1,\ldots,10\}$, we observe a vector of algorithmic scores

\[
\mathbf{S}_{i,j} \equiv
\big(
S^{\text{\textsc{Vadore}.0}}_{i,j},\,
S^{\text{\textsc{Vadore}.2}}_{i,j},\,
S^{\text{\textsc{Mix}}\sfrac{1}{2}(\text{\textsc{Vadore}.2})}_{i,j},\,
S^{\text{\textsc{Application}}}_{i,j},\,
S^{\text{\Pbs}}_{i,j}
\big).
\]
as well as an indicator of application $C_{i,j}\in\{0,1\}$ and, when $C_{i,j}=1$, an indicator of hire $H_{i,j}\in\{0,1\}$.

\paragraph{Step 1. Hiring conditional on applying.}
We estimate $p_{i,j}\equiv \mathbb{P}(H_{i,j}=1\mid C_{i,j}=1,\mathbf{S}_{i,j})$ on the subsample of recommended vacancies that received an application:
\begin{equation}
\mathbb{P}(H_{i,j}=1\mid C_{i,j}=1,\mathbf{S}_{i,j})
=
\Lambda(\alpha_h + \mathbf{S}_{i,j}'\boldsymbol{\beta}_h + \mathbf{X}_{i,j}'\boldsymbol{\gamma}_h),
\label{eq:hire_logit}
\end{equation}
where $\Lambda(t)=1/(1+e^{-t})$ and $\mathbf{X}_{i,j}$ denotes the set of controls used in the main specifications (e.g., recommendation-rank fixed effects, and any additional controls included in the empirical section). Predicted probabilities are denoted $\hat p_{i,j}$.

\paragraph{Step 2. Applications on recommended vacancies.}
We estimate $p_{a,i,j}\equiv \mathbb{P}(C_{i,j}=1\mid \mathbf{S}_{i,j})$ on the full sample of recommended vacancies:
\begin{equation}
\mathbb{P}(C_{i,j}=1\mid \mathbf{S}_{i,j})
=
\Lambda\!\Big(\alpha_a + \mathbf{S}_{i,j}'\boldsymbol{\beta}_a + \mathbf{X}_{i,j}'\boldsymbol{\gamma}_a\Big).
\label{eq:app_logit}
\end{equation}
Predicted probabilities are denoted $\hat p_{a,i,j}$.

\paragraph{Step 3. Reconstructing $\Gamma$ and related objects.}
Using the estimates from equations~\eqref{eq:hire_logit}--\eqref{eq:app_logit}, we compute
\[
\hat p_{h,i,j} \equiv \hat p_{i,j}\hat p_{a,i,j}.
\]
To reconstruct the welfare-relevant score $\Gamma_{i,j}$, we use the mapping implied by equation~\eqref{eq:gamma_plus}. In practice, we implement this by first mapping predicted application probabilities into a surplus index, and then applying the closed-form expression for $\Gamma$ under the logistic taste-shock specification and normalizing $\sigma$ to 1:
\[
\
\hat \Gamma_{i,j} = \hat p_{i,j}\ \left[- \log\!\left(1-\hat p_{a,i,j}\right)\right].
\]

\paragraph{Step 4. Counterfactual optimal recommendations.}
Let $\mathcal{J}_i$ denote the set of vacancies available to job seeker $i$ at the time of the experiment. For each $j\in\mathcal{J}_i$, we compute $\hat \Gamma_{i,j}$ and define the counterfactual top-ten set as the ten vacancies with highest $\hat \Gamma_{i,j}$. Denoting their scores by $\hat \Gamma_{i,1}^*,\ldots,\hat \Gamma_{i,10}^*$, we summarize distance to the optimum using $\hat \Gamma_{i,j}^*-\hat \Gamma_{i,j}$.

\paragraph{Step 5. Split-sample implementation.}
To avoid overfitting when using predicted probabilities to construct performance metrics, we randomly split the sample of enrolled job seekers into two halves, $S_1$ and $S_2$. We estimate equations~\eqref{eq:hire_logit}--\eqref{eq:app_logit} and construct $\hat \Gamma(\cdot)$ using $S_1$ only. We then evaluate average performance measures in $S_2$ by experimental group assignment.

\section{Supplementary tables and figures}

\footnotesize

\begin{table}[h!]
\centering
\scalebox{0.9}{
\renewcommand{\arraystretch}{1.25}
\begin{tabular}{p{0.10\textwidth} p{0.30\textwidth} p{0.30\textwidth} p{0.30\textwidth}}
\hline\hline
\toprule
& \textbf{Administrative Data} 
& \textbf{Experimental Data} 
& \textbf{Observational Data} \\
\midrule

\textbf{Contents} &
PES administrative data (Rhône-Alpes); vacancy attributes (occupation, salary, contract, location, hours, required skills, firm info, text descriptions, past applications); job seeker demographics, experience, skills, search parameters; clicks, applications, and hires (PES + DPAE).
&
Qualtrics-based experiments; each job seeker receives 10 recommendations from alternative RSs; observed clicks, applications, and hires; job seeker and vacancy characteristics; recovered ranking scores.
&
PES data for weeks 1–48 of 2019; stocks of job seekers and vacancies; 75{,}744 hires; training (weeks 1–43) and test (weeks 44–48) sets; subsamples for matching calibration, observational application model, and RSs ranking comparison (transport/logistics sector).
\\

\midrule

\textbf{Purpose} &
Train and validate RSs; construct features; measure interest, applications, and realized matches.
&
Compare algorithms in a controlled environment; evaluate behavioral responses; assess click/application/hire performance.
&
Calibrate matching probabilities; estimate application behavior; evaluate RSs out-of-sample; compare ranking performance.
\\

\bottomrule
\end{tabular}
}
\caption{Summary of the Three Types of Data Used}
\label{table:desc_data}
\end{table}


\begin{figure}[!ht]
    \centering
    \includegraphics[width=0.8\linewidth]{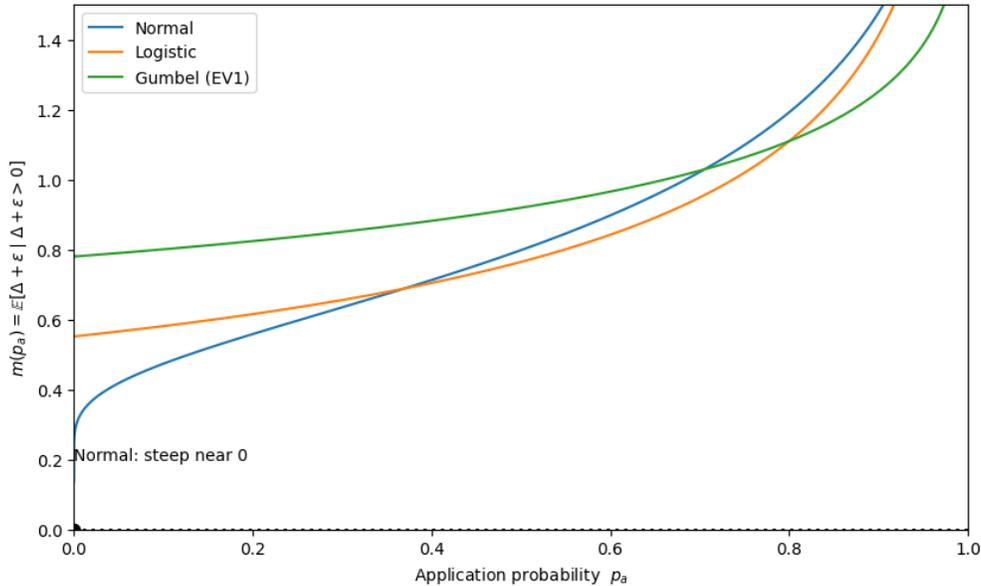}
    \caption{Expected surplus conditional on application}
    \label{graph:app:lois_epsilon}
\parbox{0.95\textwidth}{
\footnotesize \textit{Notes:} This figure plots the conditional expected surplus
$\mathbb{E}\!\left[\Delta(p,U)+\varepsilon \mid \Delta(p,U)+\varepsilon>0\right]$
as a function of the application probability
$p_a(p,U)=\mathbb{P}\!\left(\Delta(p,U)+\varepsilon>0\right)$.
The figure illustrates how the magnitude of the surplus conditional on application
varies across different distributions of the taste shock $\varepsilon$
(logistic, Gumbel (EV1), and normal), all normalized to have unit variance.}
\end{figure}

\begin{figure}[!ht]
\centering
  \centering
  \resizebox{\columnwidth}{!}{%
    \input{dessins/Muse0}
    }
  \caption{\V.0\ architecture: three embeddings are defined to model geographical, skills and general aspects of job seekers (left) and job ads (right), and compute the hiring score.}
  \label{fig:nn}
\end{figure}
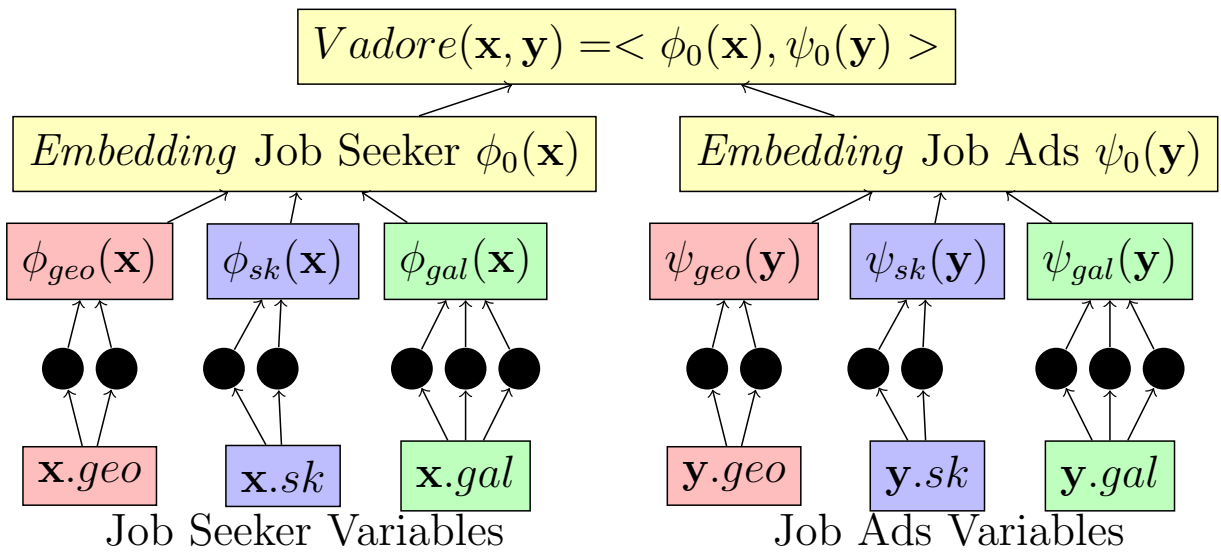
\begin{figure}[!ht]
  \centering
  \resizebox{\columnwidth}{!}{%
    \input{dessins/Muse1}
    }
  \caption{\Vu\ (below dashed line) and \Vdt\ architectures. \Vdt\ includes a second-head to model the applications, and a top head, exploiting both the standalone hiring and the application scores to predict the overall hiring score. }
  \label{fig:nn1}
\end{figure}
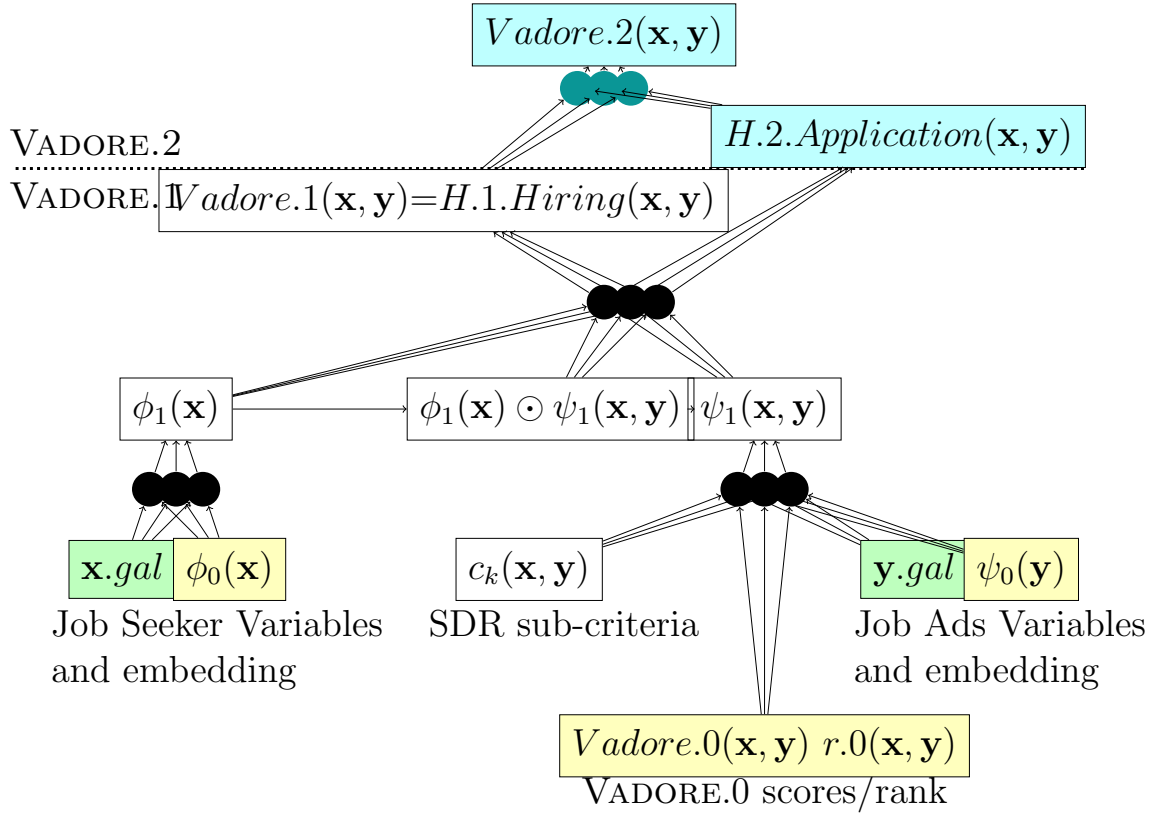

\FloatBarrier

\begin{table}[H]
\centering
\caption{First experiment— Reduced-form estimates: effects of algorithm assignment on $\mathcal{U}$ and $\mathcal{P}$ scores, clicks, applications, and subjective ratings}
\label{tab:reduced}
\scalebox{1.0}{
\begin{tabular}{lccccc}
\toprule
 & $\mathcal{U}$ & $\mathcal{P}$ & Rating & Clicks & Applications \\
 &  & ($\times$100) &  & ($\times$100) & ($\times$100) \\
\midrule
\V.0 & -0.1919*** & 4.602*** & 0.149** & 0.463** & 0.073* \\
 & (0.0039) & (0.076) & (0.054) & (0.207) & (0.040) \\
\MixQuarter & -0.1757*** & 3.863*** & 0.214*** & 0.432** & 0.053 \\
 & (0.0040) & (0.070) & (0.054) & (0.209) & (0.039) \\
\MixHalf & -0.1584*** & 3.573*** & 0.241*** & 0.640*** & 0.065 \\
 & (0.0041) & (0.068) & (0.053) & (0.211) & (0.040) \\
\MixThreeQuarter & -0.0477*** & 1.474*** & 0.095* & 0.175 & 0.038 \\
 & (0.0043) & (0.069) & (0.054) & (0.206) & (0.038) \\
\midrule
\Pbs\ (mean) & 0.773*** & 6.445*** & 4.169*** & 4.216*** & 0.455*** \\
 & (0.003) & (0.050) & (0.038) & (0.199) & (0.054) \\
\midrule
Observations & 189,470 & 189,470 & 36,668 & 189,470 & 189,470 \\
\bottomrule
\end{tabular}
}
\parbox{0.95\textwidth}{
\footnotesize \textit{Notes:} \Pbs\ is the omitted reference group. All regressions control for strata fixed effects. Standard errors clustered at the job seeker level. *, **, and *** denote significance at the 10\%, 5\%, and 1\% levels, respectively.
}
\end{table}

\begin{table}
\centering
\caption{Second experiment - Reduced-Form estimation - impact of assignment on $\mathcal{U}$ and $\mathcal{P}$ scores and clicks and applications on the 10 first ads }\label{tab:reduced2}
\		\scalebox{0.8}{			
\begin{tabular}[t]{lccccccc}
\toprule
& (1) & (2) & (3) & (4)   & (5) & (6)  & (7) \\
 & $\mathcal{U}$ &$\mathcal{P}$ & Rating & Clicks & Applications & Hirings & Hiring rates \\
 &  &  ($\times$100)  &  &  ($\times$100)  &   ($\times$100)  &  ($\times$10,000) &  (6) / (5), levels \\
\midrule
\V.0
& -0.0922*** & 3.624*** & 0.298*** & 0.747*** & 0.086 & 0.017 & 0.019 \\
& (0.0063) & (0.072) & (0.065) & (0.232) & (0.066) & (0.696) &  \\

\V .2
& -0.0545*** & 2.605*** & 0.637*** & 1.173*** & 0.285*** & 1.173$\times 10^{-4}$ & 0.010 \\
& (0.0053) & (0.047) & (0.051) & (0.178) & (0.053) & (0.541) &  \\

\MixHalf(\V .2)
& -0.0301*** & 2.324*** & 0.580*** & 1.138*** & 0.219*** & 1.064 & 0.040 \\
& (0.0053) & (0.046) & (0.051) & (0.180) & (0.052) & (0.647) &  \\

\textsc{Application}
& -0.0397*** & 2.288*** & 0.802*** & 1.279*** & 0.378*** & 0.729 & 0.022 \\
& (0.0053) & (0.046) & (0.050) & (0.180) & (0.056) & (0.618) &  \\

\textsc{XGBoost}
& -0.0045 & 2.054*** & 0.400*** & 0.998*** & 0.221** & 1.005 & 0.039 \\
& (0.0069) & (0.065) & (0.064) & (0.243) & (0.074) & (0.986) &  \\

\midrule
Average, \Pbs
& 0.429*** & 3.901*** & 4.941*** & 6.067*** & 0.148*** & 0.420 & 0.004 \\
& (0.0050) & (0.040) & (0.046) & (0.201) & (0.053) & (0.698) &  \\

\midrule
Number Obs.
& 309,730 & 309,730 & 123,875 & 309,730 & 309,730 & 309,730 &  \\
\bottomrule
\end{tabular}
		}
{\parbox{1\textwidth}{
\footnotesize \textit{Note:} The \Pbs\ treatment group is used as the reference category and we control for the ad position in the display. Column (7) represents the ratio of the number of hirings (6) over the number of applications (5) on the recommended vacancies. Standard errors are clustered at the job seeker level. *, **, ***: significance at 10\%, 5\%, and 1\%.  }
}
\end{table}

	\begin{table}
\centering
\caption{First beta-test - Reduced-Form estimation - impact of assignment on $\mathcal{U}$ and $\mathcal{P}$ scores and clicks and applications on the 10 first ads }\label{tab:reduced_all}
\		\scalebox{0.85}{
	\begin{tabular}[t]{lccccc}
				\toprule
				& $\mathcal{U}$ &$\mathcal{P}$ & Note  & Clicks & Applications\\
                   		&  &  &  x100  &    x100  &   x100 \\
				\midrule			
\V.0 & -0.1875*** & 4.519*** & 0.118 & 0.599** & 0.101*\\
 & (0.0055) & (0.108) & (0.075) & (0.300) & (0.058)\\
\MixThreeQuarter & -0.1743*** &  3.747***& 0.085  & 0.531* & 0.054\\
 & (0.0057) & (0.098) &  (0.075)& (0.297) & (0.056)\\
	\MixHalf & -0.1557*** & 3.462*** &   0.136* &  0.788** &0.068\\
 & (0.0058) & (0.095) & (0.074) & (0.298) & (0.058)\\
    \MixQuarter & -0.0485*** & 1.406*** & 0.020  & 0.101 &  -0.010\\
 & (0.0061) & (0.098) & (0.075)& (0.294) & (0.053)\\
\V.0, scores & -0.1952*** &  4.479*** &  -0.097 &0.212 & 0.011\\
 & (0.0054) & (0.108) &  (0.075)   & (0.284) & (0.054)\\
\MixThreeQuarter, scores & -0.1760*** & 3.775***& 0.065  &  0.219 & 0.017\\
 & (0.0057) & (0.010) & (0.075)   & (0.292) & (0.053)\\
	\MixHalf, Score & -0.1600*** & 3.479*** & 0.066 & 0.366 &  0.027\\
 & (0.0058) & (0.099) & (0.075)& (0.295) & (0.053)\\
    \MixQuarter, scores & -0.0457*** & 1.335***  & -0.109 & 0.136 &  0.052\\
 & (0.0061) & (0.097) & (0.076) & (0.286) & (0.054)\\
\Pbs, scores & 0.0012 & -0.204*  &  -0.275***   & -0.114   &  -0.035\\
 & (0.0061) & (0.091)  &  (0.076)  & (0.285) & (0.052)\\
					\midrule
				Average,  \Pbs		& 0.633*** &   5.584*** &  4.307***   &    3.844***  &    0.224*** \\
                		& (0.004) &  (0.065)  &  (0.052) &    (0.200)   &   (0.036) \\
				\midrule
				Number Obs. & 189,470 & 189,470 & 36,668   & 189,470 & 189,470\\
				\bottomrule
			\end{tabular}
}
{\parbox{1\textwidth}{
\footnotesize \textit{Note:} The \Pbs\ treatment group is used as the reference category. Standard errors are clustered at the job seeker level. *, **, ***: significance at 10\%, 5\%, and 1\%. }
}
\end{table}

	\begin{table}
\centering
\caption{Second experiment - Reduced-Form estimation by display condition}\label{tab:reduced3}
\		\scalebox{0.8}{			
\begin{tabular}[t]{lcccccc}
				& $\mathcal{U}$ &$\mathcal{P}$  &   Notes & Clicks & Applications & Hirings  \\
       		&    &    x100 &  & x100    &  x100 &  x10,000  \\
\midrule				
\V.0  &-0.0922*** & 3.624*** & 0.298*** & 0.747*** & 0.086 & 0.017 \\
  & (0.0063) & (0.072) & (0.0645) & (0.232) & (0.0659) & (0.696) \\
Application &  -0.0388*** & 2.276*** & 0.766*** & 1.737*** & 0.392*** & 0.518 \\
 & (0.0063) & (0.065) & (0.0626) & (0.262) & (0.0835) & (0.858) \\
\MixHalf \ with \V .2 & -0.0298*** & 2.354*** & 0.515*** & 1.588*** & 0.261*** & 0.954 \\
 & (0.0066) & (0.062) & (0.0628) & (0.262) & (0.0761) & (0.961) \\
\V .2  & -0.0523*** & 2.589*** & 0.547*** & 1.818*** & 0.267*** & -0.484 \\
 & (0.0063) & (0.066) & (0.0626) & (0.264) & (0.0776) & (0.484) \\
XGB &-0.0045 & 2.054*** & 0.400*** & 0.998*** & 0.221*** & 1.005 \\
 & (0.0069) & (0.065) & (0.0642) & (0.243) & (0.0742) & (0.986) \\
\V .2, scores &-0.0556*** & 2.639*** & 0.689*** & 0.962*** & 0.294*** & -0.484 \\
 & (0.0062) & (0.066) & (0.0625) & (0.242) & (0.0856) & (0.484) \\
Application, scores & -0.0344*** & 2.297*** & 0.804*** & 1.404*** & 0.299*** & 1.362 \\
 & (0.0063) & (0.063) & (0.0618) & (0.247) & (0.0760) & (1.042) \\
\MixHalf \ with \V .2, scores & -0.0333*** & 2.279*** & 0.585*** & 0.933*** & 0.248*** & 0.437 \\
 & (0.0064) & (0.062) & (0.0620) & (0.239) & (0.0749) & (0.812) \\
Application, explanation &-0.0428*** & 2.302*** & 0.837*** & 0.874*** & 0.391*** & 0.496 \\
 & (0.0063) & (0.063) & (0.0630) & (0.246) & (0.0884) & (0.845) \\
\MixHalf  \ with \V .2 , explanation &  -0.0301*** & 2.354*** & 0.585*** & 1.019*** & 0.174** & 2.418* \\
 & (0.0066) & (0.063) & (0.0642) & (0.250) & (0.0697) & (1.279) \\
\V .2, explanation & -0.0589*** & 2.565*** & 0.653*** & 0.987*** & 0.291*** & 0.924 \\
 & (0.0062) & (0.064) & (0.0636) & (0.234) & (0.0741) & (0.946) \\
Application,  explanation + & -0.0431*** & 2.274*** & 0.801*** & 1.104*** & 0.436*** & 0.497 \\
 & (0.0064) & (0.064) & (0.0629) & (0.255) & (0.0929) & (0.846) \\
\MixHalf  \ with \V .2 ,  explanation + &-0.0269*** & 2.312*** & 0.636*** & 1.017*** & 0.192** & 0.480 \\
 & (0.0065) & (0.062) & (0.0633) & (0.252) & (0.0809) & (0.836) \\
\V .2,  explanation + & -0.0512*** & 2.626*** & 0.657*** & 0.944*** & 0.286*** & 0.005 \\
 & (0.0063) & (0.065) & (0.0642) & (0.248) & (0.0753) & (0.688) \\        
				\hline
                Average,  \Pbs		&   0.427***  &  3.918***  & 4.941***  & 2.232***      & 0.310***  &   0.484***   \\
       		&  (0.005) &  (0.039)  & (0.046)  & (0.149)    &  (0.042) &  (0.707)  \\
				\midrule
				Num.Obs. &  309,730 & 309,730  &  123,865  & 309,730 & 309,730 & 30,973\\
				\bottomrule
			\end{tabular}

		}
{\parbox{1\textwidth}{
\footnotesize \textit{Note:} The \Pbs\ treatment group is used as the reference category. Standard errors are clustered at the job seeker level. *, **, ***: significance at 10\%, 5\%, and 1\%. }
}
\end{table}

\begin{table}
	\begin{center}
	\caption{Second beta-test - Reduced-Form estimation - impact of the assignment on $\Gamma$ for the 10 first ads }\label{tab:evaluation1}
	\scalebox{0.9}{			
\begin{tabular}{lcccc}
\toprule
  & $p$  & $p_a$ & $\Gamma$    & $p_h=p \times p_a$   \\
    & ($\times 100$) &  ($\times 100$) & ($\times 10,000$) & ($\times 10,000$) \\
    &  (1)  & (2) & (3)    &  (4)  \\
\midrule
 \Pbs & 0.431 & 0.250 & 0.323 & 0.320 \\ 
  & [0.411,0.452] & [0.242,0.259] & [0.292,0.352] & [0.290,0.349] \\ 
\V.0  & 1.375 & 0.442 & 0.848 & 0.843 \\ 
   & [1.356,1.396] & [0.433,0.451] & [0.817,0.879] & [0.813,0.874] \\ 
 \V.2 & 1.867 & 0.600 & 1.221 & 1.214 \\ 
   & [1.856,1.877] & [0.596,0.604] & [1.205,1.236] & [1.198,1.230] \\ 
\MixHalf(\V .2) & 1.535 & 0.551 & 0.990 & 0.985 \\ 
   & [1.525,1.546] & [0.546,0.555] & [0.974,1.005] & [0.969,0.999] \\ 
  \textsc{Application} & 1.355 & 0.676 & 1.116 & 1.109 \\ 
 & [1.345,1.365] & [0.672,0.680] & [1.100,1.134] & [1.094,1.126] \\ 
   \textsc{XGBoost} & 1.140 & 0.489 & 0.825 & 0.820 \\ 
  & [1.123,1.161] & [0.480,0.498] & [0.794,0.858] & [0.790,0.852] \\
  \cmidrule{2-5}
Average with $\Gamma$ & 1.939 & 0.656 & 1.349 & 1.342 \\ 
& [1.935,1.943] & [0.654,0.658] & [1.341,1.357] & [1.334,1.350] \\ 
\bottomrule
\end{tabular}	}
    \end{center}
    \vspace{0.2cm}
\footnotesize{\textit{Note:} 	Number of observations:  309,730. We use 500 split-sampling in half to estimate separately the probabilities $p$ and $p_a$ on the first part and to perform these regressions on the second one. Following \cite{chernozhukov2018generic}, we report the median over the splits of the estimated coefficients and lower/upper confidence bounds at the 95\% level, which gives the 90\% confidence intervals displayed in brackets. These take into account uncertainty conditional on the split as well as uncertainty due to split-sampling. These are the averages over the 500 splits.}
\end{table}

\begin{table}[ht]
\begin{center}
	\caption{Second beta-test - Average differences between the $\Gamma$-optimal recommendation set and the recommendation set for a given RS using the $\Gamma$ metric}\label{tab:evaluation2}
	\scalebox{1}{			
\begin{tabular}{lrll}
  \hline
 Algorithm & Estimate & CI 95\% & CI 99\% \\ 
  \hline
 \textsc{Application} & 0.23 & [0.227, 0.243] & [0.225, 0.245] \\ 
 \Pbs & 1.02 & [0.971, 1.069] & [0.957, 1.080] \\ 
\MixHalf(\V .2) & 0.35 & [0.337, 0.360] & [0.334, 0.365] \\ 
 \V.0 & 0.46 & [0.429, 0.492] & [0.422, 0.503] \\ 
 \V.2 & 0.14 & [0.130, 0.142] & [0.128, 0.144] \\ 
  \textsc{XGBoost} & 0.51 & [0.488, 0.541] & [0.483, 0.549] \\ 
   \hline
\end{tabular}}
    \end{center}
    \vspace{0.2cm}
\footnotesize{\textit{Note:} 	Number of observations:  309,730. We use 500 split-sampling in half to estimate separately the probabilities $p$ and $p_a$ on the first part and to perform these regressions on the second one. Following \cite{chernozhukov2018generic}, we compute the median over the splits of the estimated coefficients and lower/upper confidence bounds obtained using 1000 bootstrap replications at the 97.5\% and 99\% levels, which gives the 95\% and 99\% confidence intervals respectively, displayed in brackets. These take into account uncertainty conditional on the split as well as uncertainty due to split-sampling.}  These are the averages over the 500 splits.
\end{table}

\normalsize

\section{Estimating application behavior on observational data}\label{app:modelobservational}

As a robustness check, we also examine an alternative specification using observational data. We rely on data from the monitoring of job seekers’ search activity (see dedicated paragraph below). All job postings on which a job seeker has clicked are identified and stored, along with subsequent actions—particularly whether an application was submitted.

For each of these postings, we compute the indicators $\mathcal{U}$ and $\mathcal{P}$. We then estimate equation \eqref{eq:app2} directly using this observational dataset, with a logit model with or without fixed effects.
 
Results appear in Table \ref{tab:logit}. The first and second columns (Logit) and (FE-Logit) of table \ref{tab:logit} present the results with and without fixed effects, respectively, while the third column (FE-Logit unconstr.) presents results in which instead of using $\mathcal{U}$ with its PES-given weights, we estimate them. For each of the three estimates, the variable $-1/\mathcal{P}_{i,j}$ has a significant positive coefficient (which implies, as expected, that the probability of applying increases with $\mathcal{P}$). Moreover, these coefficients are very similar: 0.018 for the logit, 0.028 for the logit FE, and 0.026 for the logit FE with estimated weights. Similarly, the utility score coefficient $\mathcal{U}$ has a positive and significant coefficient in each of the first two specifications and the values are also very close, respectively 0.992 and 1.101. As stressed above, the result in the second column is especially important as it validates the interpretation we sketched at the beginning of section \ref{sec:search}, of the score $\mathcal{U}_{i,j}$ as a signal of the utility gap $U-U^*$ and of the probability $\mathcal{P}$ as a signal of the chances of success of an application, and that both scores must be taken into account to design an optimal RS. In addition, consistent with intuition, in the last column the fit between job seekers' parameters and vacancies in terms of occupation, reservation wages, skills, diplomas, and geographic mobility significantly predicts that an application is more likely. The only unexpected result here is that the fit in terms of experience in the occupation seems to enter negatively into the application decision.\footnote{The introduction of fixed effects forces to restrict the sample to so-called ``movers'' for whom at least two clicks are observed, including at least one application and one non-application. In order to track the changes due to the different specifications and the different sample, the appendix table \ref{tab:logit} compares the results of the model with fixed effects on movers (column 3) to those with uniform weights (column 2) as well as the results without fixed effects and on the whole population (column 1). Despite the sharp reduction in the number of observations used between these columns the results are close. The table shows the robustness of the result for $1/\mathcal{P}_{i,j}$. The estimated coefficients are all negative, as expected, and very close to each other.}

\begin{table}
\begin{center}
\caption{Estimates of the model of application on job postings}
\label{tab:logit}
\scalebox{0.8}{
\begin{threeparttable}
\begin{tabular}{lcccccccc}
\toprule
 & \multicolumn{2}{c}{(1)} &  & \multicolumn{2}{c}{(2)} &   & \multicolumn{2}{c}{(3)}   \\ 
  & \multicolumn{2}{c}{Logit} &  & \multicolumn{2}{c}{FE-Logit} &   & \multicolumn{2}{c}{FE-Logit unconstr.}   \\ 
 \cmidrule{2-3} \cmidrule{5-6} \cmidrule{8-9}
 & Estimate & Std. error &  & Estimate & Std. error &   & Estimate & Std. error   \\ 
\midrule
Utility score $\mathcal{U}_{i,j}$ ($\alpha$) &  0.992${}^{**}$   &  0.194&  &  1.101${}^{***}$    &   0.155 &  &  &   \\ 
Occupation &   & & &   &  &  &0.582${}^{***}$ & 0.104    \\
Skills &   & & & &    &  &  0.175${}^{*}$ & 0.114  \\
Reservation wage &   & & & &   &  & 0.236${}^{***}$ & 0.082  \\
Languages &   & & & &   &  &   -0.010 & 0.229  \\
Experience in occ.  &   & & & &    &  &  -1.017${}^{***}$ & 0.339    \\
Diploma &   & & & &   &  &  0.288${}^{**}$  & 0.118    \\ 
Driving license &   & & &  &   &  & 0.106 & 0.097   \\
Geographic mobility  &   & & &  &   &  & 0.625${}^{***}$ & 0.214   \\
Duration  &   & & &  &   &  & 0.139 & 0.068  \\
Type of contract  &   & & & &    &  &0.015 & 0.004   \\
 Inverse of $\mathcal{P}_{i,j}$ ($\beta$)  &-0.018${}^{**}$ & 0.007  &  & -0.028${}^{***}$ & 0.004   &  & -0.026${}^{***}$ & 0.004    \\ 
\cmidrule{2-3} \cmidrule{5-6} \cmidrule{8-9}
Avg. indiv. Fixed effects   & -1.908  & 0.179  & & -1.388  & 0.047  &  & -1.372  & 0.04  \\ 
\bottomrule
\end{tabular}
\begin{tablenotes}
 \footnotesize
 \item Estimation of equation \eqref{eq:app2} modeling applications as a fixed effect logit model. 
 \item Notes: Our sample is the set of all clicks on job postings monitored at the PES for job seekers in the transportation and logistics sector during week 44 of 2019, leading to an application and hiring or not. Fixed effect estimation keeps 70,557 observations for 8,105 job seekers, and 869 of them applying at least once. Thus, 17,865 observations are kept for estimation.   Significance levels: $<1\%: {}^{***}$, $<5\%: {}^{**}$, $<10\%: {}^{*}$. 
 \end{tablenotes}
 \end{threeparttable}
 }
 \end{center}
\end{table}

\paragraph{Observational data used in the valuation exercise.}
For the analysis using observational data in this paper, we consider this market from weeks 1 to 48 of 2019. There, we  use data on 1,181,902 (or 516,776) unique job seeker search sessions (or job ads); and on average, 610,986 job seekers (or 129,642 job ads) are active in a given week. We observe 75,744 successful matches in the data. Observations from week 1 to 43 of 2019 are used as a training set (representing 66,914 matches) for the two RSs; while weeks 44 to 48 (representing 8,830 matches) are used as a test set to evaluate the quality of recommendations. Sample sizes and restrictions for the experiments are detailed below.

More precisely, for the calibration of the matching probability (Appendix \ref{sec:calibration}), we use time-ordered sequences of all applications made on the PES website together with the outcomes for all individuals without occupational restrictions in the test set (\emph{i.e.}, from weeks 44 to 48 of 2019). This amounts to 85,639 job seekers, 207,544 applications and 8,830 hires. For the estimation of the model of application using observational data (Appendix \ref{app:modelobservational}), we use all clicks recorded on vacancies posted on the PES website with outcomes (application or not) for job seekers belonging to the transportation and logistics sector during week 44 of 2019, whatever the outcome (application or hiring or not). Estimation keeps 70,557 observations for 8,105 job seekers, and 869 of them applying at least once. Finally, for the comparison of the rankings of the vacancies between RSs (Appendix \ref{app:comparison}), we use the sample of 60,299 job seekers whose main sector is transportation and logistics in the test sample, where 18,873 vacancies are available during that period in this sector.

\end{document}

%% file: dessins/Muse0.tex
\begin{tikzpicture}[scale=1, every node/.style={scale=1}]

\definecolor{geocolor}{RGB}{255,190,190}
\definecolor{skillcolor}{RGB}{190,190,255}
\definecolor{othercolor}{RGB}{190,255,190}
\definecolor{V0color}{RGB}{255,255,190}
\definecolor{VNcolor}{RGB}{190,255,255}
\definecolor{V1color}{RGB}{255,190,255}
\definecolor{VNarrowcolor}{RGB}{10,150,150}
\definecolor{myblue}{RGB}{80,80,160}
\definecolor{mygreen}{RGB}{80,160,80}
\tikzstyle{point1}=[fill=myblue, circle, draw, inner sep=0.05cm]
\tikzstyle{point2}=[fill=mygreen, circle, draw, inner sep=0.05cm]

\node (XT) at (2, -0.5) {Job Seeker Variables};
\node[draw,fill=geocolor] (X1) at (0, 0) {$\x.geo$};
\node[draw,fill=geocolor] (XS1) at (0, 2) {$\phi_{geo}(\x)$};
\node[fill, circle] (X11) at (-0.25, 1) {};
\node[fill, circle] (X12) at (0.25, 1) {};
\node[draw,fill=skillcolor] (X2) at (1.8, 0) {$\x.sk$};
\node[draw,fill=skillcolor] (XS2) at (1.8, 2) {$\phi_{sk}(\x)$};
\node[fill, circle] (X21) at (1.25, 1) {};
\node[fill, circle] (X22) at (1.75, 1) {};
\node[draw,fill=othercolor] (X3) at (3.5, 0) {$\x.gal$};
\node[draw,fill=othercolor] (XS3) at (3.5, 2) {$\phi_{gal}(\x)$};
\node[fill, circle] (X31) at (3, 1) {};
\node[fill, circle] (X32) at (3.5, 1) {};
\node[fill, circle] (X33) at (4, 1) {};
\node[draw,fill=V0color] (X) at (2, 3) {\textit{Embedding} Job Seeker $\phi_0(\x)$};

\node (YT) at (8, -0.5) {Job Ads Variables};
\node[draw,fill=geocolor] (Y1) at (6, 0) {$\y.geo$};
\node[draw,fill=geocolor] (YS1) at (6, 2) {$\psi_{geo}(\y)$};
\node[fill, circle] (Y11) at (5.75, 1) {};
\node[fill, circle] (Y12) at (6.25, 1) {};
\node[draw,fill=skillcolor] (Y2) at (7.8, 0) {$\y.sk$};
\node[draw,fill=skillcolor] (YS2) at (7.8, 2) {$\psi_{sk}(\y)$};
\node[fill, circle] (Y21) at (7.25, 1) {};
\node[fill, circle] (Y22) at (7.75, 1) {};
\node[draw,fill=othercolor] (Y3) at (9.5, 0) {$\y.gal$};
\node[draw,fill=othercolor] (YS3) at (9.5, 2) {$\psi_{gal}(\y)$};
\node[fill, circle] (Y31) at (9, 1) {};
\node[fill, circle] (Y32) at (9.5, 1) {};
\node[fill, circle] (Y33) at (10, 1) {};
\node[draw,fill=V0color] (Y) at (8, 3) {\textit{Embedding} Job Ads $\psi_0(\y)$};

\node[draw,fill=V0color] (XY) at (5, 4) {$\Vo (\x, \y) = <\phi_0(\x), \psi_0(\y) >  $};

\draw [->] (X1) -> (X11);
\draw [->] (X1) -> (X12);
\draw [->]  (X11) -> (XS1);
\draw [->]  (X12) -> (XS1);
\draw [->] (XS1) -> (X);

\draw [->] (X2) -> (X21);
\draw [->] (X2) -> (X22);
\draw [->]  (X21) -> (XS2);
\draw [->]  (X22) -> (XS2);
\draw [->] (XS2) -> (X);

\draw [->] (X3) -> (X31);
\draw [->] (X3) -> (X32);
\draw [->] (X3) -> (X33);
\draw [->] (X31) -> (XS3);
\draw [->] (X32) -> (XS3);
\draw [->] (X33) -> (XS3);
\draw [->] (XS3) -> (X);

\draw [->] (Y1) -> (Y11);
\draw [->] (Y1) -> (Y12);
\draw [->]  (Y11) -> (YS1);
\draw [->]  (Y12) -> (YS1);
\draw [->] (YS1) -> (Y);

\draw [->] (Y2) -> (Y21);
\draw [->] (Y2) -> (Y22);
\draw [->]  (Y21) -> (YS2);
\draw [->]  (Y22) -> (YS2);
\draw [->] (YS2) -> (Y);

\draw [->] (Y3) -> (Y31);
\draw [->] (Y3) -> (Y32);
\draw [->] (Y3) -> (Y33);
\draw [->] (Y31) -> (YS3);
\draw [->] (Y32) -> (YS3);
\draw [->] (Y33) -> (YS3);
\draw [->] (YS3) -> (Y);

\draw [->] (X) -> (XY);
\draw [->] (Y) -> (XY);

\end{tikzpicture}

%% file: dessins/Muse1.tex
\begin{tikzpicture}[scale=0.9, every node/.style={scale=1.5}]

\definecolor{geocolor}{RGB}{255,190,190}
\definecolor{skillcolor}{RGB}{190,190,255}
\definecolor{othercolor}{RGB}{190,255,190}
\definecolor{V0color}{RGB}{255,255,190}
\definecolor{VNcolor}{RGB}{190,255,255}
\definecolor{V1color}{RGB}{255,190,255}
\definecolor{VNarrowcolor}{RGB}{10,150,150}
\definecolor{myblue}{RGB}{80,80,160}
\definecolor{mygreen}{RGB}{80,160,80}
\tikzstyle{point1}=[fill=myblue, circle, draw, inner sep=0.05cm]
\tikzstyle{point2}=[fill=mygreen, circle, draw, inner sep=0.05cm]

\node[draw,fill=othercolor] (X1) at (-2, -3) {$\x.gal$};
\node (XT) at (0, -4.5) [text width = 4cm] {Job Seeker Variables and embedding};
\node[draw,fill=V0color] (X2) at (0, -3) {$\phi_0(\x)$};
\node[draw] (XS1) at (-1, 0) {$\phi_1(\x)$};
\node[fill, circle] (X11) at (-1.5, -1.5) {};
\node[fill, circle] (X12) at (-1, -1.5) {};
\node[fill, circle] (X13) at (-0.5, -1.5) {};
\node[draw] (X) at (6, 0) {$\phi_1(\x) \odot \psi_1(\x,\y)$};

\node (YT) at (15, -4.5) [text width = 4cm] {Job Ads Variables and embedding};
\node (YT2) at (6.25, -4) {SDR sub-criteria};
\node (YT3) at (10, -7.2) {\V.0\ scores/rank};
\node[draw] (Y1) at (5.60, -3) {$c_k(\x,\y)$};
\node[draw,fill=V0color] (Y2) at (10, -6.3) {$\Vo .0 (\x, \y)$ $r.0(\x,\y)$};
\node[draw,fill=othercolor] (Y3) at (12.8, -3) {$\y.gal$};
\node[draw,fill=V0color] (Y4) at (14.8, -3) {$\psi_0(\y)$};
\node[draw] (YS3) at (10, 0) {$\psi_1(\x,\y)$};
\node[fill, circle] (Y31) at (9.5, -1.5) {};
\node[fill, circle] (Y32) at (10, -1.5) {};
\node[fill, circle] (Y33) at (10.5, -1.5) {};

\node[fill, circle] (W1) at (7.5, 2) {};
\node[fill, circle] (W2) at (8, 2) {};
\node[fill, circle] (W3) at (7, 2) {};

\node[draw] (XYT1) at (4, 3.9) {$\Vu (\x, \y)$=$\Hu (\x, \y)$};
\node[draw,fill=VNcolor] (XYT2) at (12.5, 5.1) {$\Hd (\x, \y)$};
\node[draw,fill=VNcolor] (XY) at (7, 7) {$\Vdt (\x, \y)$};

\node (VDT) at (-2.5, 5) {\Vdt\ };
\node (VDT2) at (-2.5, 4) {\Vu\ };
\draw[line width=0.5 mm,dotted] (-4, 4.5) -- (16, 4.5);

\node[fill, circle,fill=VNarrowcolor] (XY1) at (7, 6) {};
\node[fill, circle,fill=VNarrowcolor] (XY2) at (7.5, 6) {};
\node[fill, circle,fill=VNarrowcolor] (XY3) at (6.5, 6) {};

\draw [->] (X1) -> (X11);
\draw [->] (X1) -> (X12);
\draw [->] (X1) -> (X13);
\draw [->] (X2) -> (X11);
\draw [->] (X2) -> (X12);
\draw [->] (X2) -> (X13);
\draw [->]  (X11) -> (XS1);
\draw [->]  (X12) -> (XS1);
\draw [->]  (X13) -> (XS1);
\draw [->] (XS1) -> (X);

\draw [->] (Y1) -> (Y31);
\draw [->] (Y1) -> (Y32);
\draw [->] (Y1) -> (Y33);
\draw [->] (Y2) -> (Y31);
\draw [->] (Y2) -> (Y32);
\draw [->] (Y2) -> (Y33);
\draw [->] (Y3) -> (Y31);
\draw [->] (Y3) -> (Y32);
\draw [->] (Y3) -> (Y33);
\draw [->] (Y4) -> (Y31);
\draw [->] (Y4) -> (Y32);
\draw [->] (Y4) -> (Y33);
\draw [->] (Y31) -> (YS3);
\draw [->] (Y32) -> (YS3);
\draw [->] (Y33) -> (YS3);
\draw [->] (YS3) -> (X);

\draw [->] (YS3) -> (W1);
\draw [->] (YS3) -> (W2);
\draw [->] (YS3) -> (W3);
\draw [->] (XS1) -> (W1);
\draw [->] (XS1) -> (W2);
\draw [->] (XS1) -> (W3);
\draw [->] (X) -> (W1);
\draw [->] (X) -> (W2);
\draw [->] (X) -> (W3);
\draw [->] (W1) -> (XYT1);
\draw [->] (W2) -> (XYT1);
\draw [->] (W3) -> (XYT1);
\draw [->] (W1) -> (XYT2);
\draw [->] (W2) -> (XYT2);
\draw [->] (W3) -> (XYT2);
\draw [->] (XYT2) -> (XY1);
\draw [->] (XYT2) -> (XY2);
\draw [->] (XYT2) -> (XY3);
\draw [->] (XYT1) -> (XY1);
\draw [->] (XYT1) -> (XY2);
\draw [->] (XYT1) -> (XY3);i
\draw [->] (XY1) -> (XY);
\draw [->] (XY2) -> (XY);
\draw [->] (XY3) -> (XY);

\end{tikzpicture}